\newcommand{\mc}[1]{\mathcal{#1}}
\newcommand{\mb}[1]{\mathbf{#1}}
\newcommand{\mbb}[1]{\mathbb{#1}}
\newcommand{\T}{\mbb{T}}
\newcommand{\I}{\mbb{I}}
\newcommand{\mr}[1]{\mathrm{#1}}
\newcommand{\ms}[1]{\mathsf{#1}}
\newcommand{\Set}{\mb{Set}}
\newcommand{\Pos}{\mb{Pos}}
\newcommand{\alg}{\text{-}\mb{Alg}}
\newcommand{\DL}{\mb{DL}}
\newcommand{\sh}{\mb{Sh}}
\newcommand{\psh}{\mb{Psh}}
\newcommand{\op}{^{\mathrm{op}}}
\newcommand{\qsi}[1]{\tilde{#1}}
\newcommand{\ov}[1]{\overline{#1}}
\newcommand{\set}[1]{\{\,#1\,\}}
\newcommand{\pair}[1]{\left\langle#1\right\rangle}
\newcommand{\ev}{\mathrm{ev}}
\newcommand{\nt}{\Rightarrow}
\newcommand{\scomp}[2]{\{\,#1\mid#2\,\}}
\newcommand{\surj}{\twoheadrightarrow}
\newcommand{\hook}{\hookrightarrow}
\newcommand{\prth}[1]{\left(#1\right)}
\newcommand{\fp}{_{\mr{f.p.}}}
\newcommand{\cp}{_{\mr{c.p.}}}
\newcommand{\cv}{\operatorname{\downarrow}}
\newcommand{\N}{\mb N}
\newcommand{\wCPO}{\omega\mb{CPO}}
\newcommand{\dneg}{\neg\neg}
\newcommand{\fa}[2]{\forall #1\!\colon\!\!#2\mathpunct{.}}
\newcommand{\ex}[2]{\exists #1\!\colon\!\!#2\mathpunct{.}}
\newcommand{\ld}[2]{\lambda #1\!\colon\!\!#2\mathpunct{.}}
\newcommand{\emp}{\emptyset}
\newcommand{\eq}{\leftrightarrow}
\newcommand{\pss}[1]{\lVert #1\rVert} %\usepackage{stmaryrd}
\newcommand{\pt}{\ms{pt}}
\newcommand{\pp}{\ms{Prop}}
\newcommand{\stt}{\ms{Set}}
\newcommand{\sFrm}{\sigma\mb{Frm}}
\newcommand{\Frm}{\mb{Frm}}
\newcommand{\Loc}{\mb{Loc}}
\newcommand{\Topp}{\mb{Esp}}
\newcommand{\wTop}{\omega\mb{Esp}}
\newcommand{\mmod}[1]{#1\text{-}\mathbf{Mod}}
\newcommand{\spec}{\operatorname{Spec}}
\newcommand\istsym{\ms{t}}
\newcommand\isfsym{\ms{f}}
\newcommand\ist[1]{\istsym(#1)}
\newcommand\isf[1]{\isfsym(#1)}
\NewDocumentCommand\obsle{o}{\sqsubseteq_{\ms{obs}}\IfValueT{#1}{^{#1}}}
\NewDocumentCommand\satle{o}{\le_{\ms{sat}}\IfValueT{#1}{^{#1}}}
\NewDocumentCommand\behle{o}{\sqsubseteq_{\ms{beh}}\IfValueT{#1}{^{#1}}}
\NewDocumentCommand\opens{}{\mc{O}}
\newcommand{\ct@}[2]{%
  \vtop{\m@th\ialign{##\cr
    \hfil$#1\operator@font lim$\hfil\cr
    \noalign{\nointerlineskip\kern1.5\ex@}#2\cr
    \noalign{\nointerlineskip\kern-\ex@}\cr}}%
}
\newcommand{\ct}{%
  \mathop{\mathpalette\ct@{\rightarrowfill@\textstyle}}\nmlimits@
}
\newcommand{\lt@}[2]{%
  \vtop{\m@th\ialign{##\cr
    \hfil$#1\operator@font lim$\hfil\cr
    \noalign{\nointerlineskip\kern1.5\ex@}#2\cr
    \noalign{\nointerlineskip\kern-\ex@}\cr}}%
}
\newcommand{\lt}{%
  \mathop{\mathpalette\lt@{\leftarrowfill@\textstyle}}\nmlimits@
}
\NewDocumentCommand\AxiomSQCI{}{SQCI}
\NewDocumentCommand\PrintAxiomSQCI{}{
  \begin{axiom}[\AxiomSQCI]
    $\I$ is stably spatial, i.e.\ the slice $\I/i$ is spatial for every $i:\I$.
  \end{axiom}
}
\NewDocumentCommand\AxiomSQCID{}{SQCID}
\NewDocumentCommand\PrintAxiomSQCID{}{
  \begin{axiom}[\AxiomSQCID]
    Both $\I$ and $\I\op$ are stably spatial.
  \end{axiom}
}
\NewDocumentCommand\AxiomSQCP{}{SQCP}
\NewDocumentCommand\PrintAxiomSQCP{}{
  \begin{axiom}[\AxiomSQCP]
    The polynomial $\I$-algebra $\I[\ms{i}]$ is spatial.
  \end{axiom}
}
\NewDocumentCommand\AxiomNT{}{NT}
\NewDocumentCommand\PrintAxiomNT{}{
  \begin{axiom}[\AxiomNT]
    We have $0 \neq 1$ in $\I$.
  \end{axiom}
}
\NewDocumentCommand\AxiomL{}{L}
\NewDocumentCommand\PrintAxiomL{}{
  \begin{axiom}[\AxiomL]
    $\I$ is local, i.e.\ $0 \neq 1$, and $\ist{i\vee j} \eq \ist{i} \vee \ist{j}$ for $i,j : \I$. 
  \end{axiom}
}
\NewDocumentCommand\AxiomCL{}{cL}
\NewDocumentCommand\PrintAxiomCL{}{
  \begin{axiom}[\AxiomCL]
    $\I$ is colocal, i.e.\ $0 \neq 1$ and $\isf{i\wedge j} \eq \isf{i} \vee \isf{j}$ for all $i,j : \I$.
  \end{axiom}
}
\NewDocumentCommand\AxiomSL{}{SL}
\NewDocumentCommand\PrintAxiomSL{}{
  \begin{axiom}[\AxiomSL]
    $\I$ is a \emph{strict linear order}, i.e.\ $0 \neq 1$ and $i \le j \vee j \le i$ for all $i,j:\I$. 
  \end{axiom}
}
\NewDocumentCommand\AxiomOneCS{}{1cS}
\NewDocumentCommand\PrintAxiomOneCS{}{
  \begin{axiom}[\AxiomOneCS]
    $\I$ is \emph{1-coskeletal}, i.e.\ (\AxiomSL) holds and $i \le j \eq \isf{i} \vee (i = j)\vee \ist{j}$ for all $i,j : \I$.
  \end{axiom}
}
\NewDocumentCommand\AxiomSQCF{}{SQCF}
\NewDocumentCommand\PrintAxiomSQCF{}{
  \begin{axiom}[\AxiomSQCF]
    All finitely presented $\I$-algebras are spatial.
  \end{axiom}
}
\NewDocumentCommand\AxiomSQCC{}{SQCC}
\NewDocumentCommand\PrintAxiomSQCC{}{
  \begin{axiom}[\AxiomSQCC]
    All countably presented $\I$-algebras are spatial.
  \end{axiom}
}
\newtheorem{theorem}{Theorem}[section]
\newtheorem{lemma}[theorem]{Lemma}
\newtheorem{corollary}[theorem]{Corollary}
\newtheorem{proposition}[theorem]{Proposition}
\theoremstyle{definition}
\newtheorem{example}[theorem]{Example}
\newtheorem{convention}[theorem]{Convention}
\newtheorem{definition}[theorem]{Definition}
\newtheorem{remark}[theorem]{Remark}
\newtheorem{observation}[theorem]{Observation}
\newtheorem*{axiom}{Axiom}
\title{Domains and classifying topoi}
\author{Jonathan Sterling}
\address{%
Jonathan \textsc{Sterling}\newline
Department of Computer Science and Technology\newline
University of Cambridge\newline
Cambridge, UK%
}
\author{Lingyuan Ye}
\address{%
Lingyuan \textsc{Ye}\newline
Department of Computer Science and Technology\newline
University of Cambridge\newline
Cambridge, UK%
}
\begin{document}

\begin{abstract}
  We explore a new connection between synthetic domain theory and Grothendieck topoi related to the distributive lattice classifier. In particular, all the axioms of synthetic domain theory (including the inductive fixed point object and the chain completeness of the dominance) emanate from a countable version of the \emph{synthetic quasi-coherence} principle that has emerged as a central feature in the unification of synthetic algebraic geometry, synthetic Stone duality, and synthetic category theory. The duality between spatial algebras and sober spaces in a topos with a distributive lattice object provides a new set of techniques for reasoning synthetically about domain-like structures, and reveals a broad class of (higher) sheaf models for synthetic domain theory. 
\end{abstract}

\maketitle
\tableofcontents

\section{Introduction}\label{sec:intro}

\subsection{Synthetic domain theory}\label{subsec:sdt}

The proposal to develop a synthetic theory of domains was raised by Dana Scott in the early 1980s. The thesis is that (pre)domains should be viewed simply as some special sets in a suitable universe, and any function between them should automatically respect the computational data associated to domains. Later \citet{hyland1990first} gave an extensive list of the properties such a universe might satisfy.

Firstly, a universe for synthetic domains should contain an interval object $\I$, which induces an observational preorder (or \emph{specialisation} preorder from a topological perspective) on domains for which function is automatically monotone. An internal axiom that characterises the synthetic nature of this fact is the so-called \emph{Phoa principle}: The function space $\I^\I$ should classify the order on $\I$. In other words, the functions from $\I$ to itself are completely determined by the order structure on $\I$. 

Additionally, $\I$ should also form a subuniverse of propositions closed under dependent sums, so as to form a dominance in the sense of \citet{rosolini1986continuity}. The dominance structure is used to parametrise partial functions through a \emph{partial map classifier} functor $L$ constructed from $\I$. 

Besides these elementary axioms, the main additional axiom of synthetic domain theory is the \emph{chain completeness} of the interval: 
\begin{itemize}
  \item[]
  Let $\omega$ and $\ov\omega$ be the carriers of the internal initial algebra and final coalgebra to the partial map classifier functor $L$ respectively. There is a canonical inclusion $\omega \hook \ov\omega$, and the chain completeness axiom states that $\I$ is \emph{right orthogonal} to this inclusion in the sense that the canonically induced restriction map $\I^{\ov\omega} \to \I^{\omega}$ should be an equivalence. 
\end{itemize} 
This can be viewed as the synthetic version of the $\omega$-completeness of the Sierpi\'nski space in traditional domain theory.

\subsection{Two viewpoints on synthetic domains}

The most studied models of synthetic domain theory are those arising from realisability---in which the interval $\I$ is taken to be Rosolini's dominance of semidecidable propositions in the Effective Topos~\citep{hyland1990first,PhoaWesleyKym-Son1991DtiR}, which happens to coincide there with the \emph{initial $\sigma$-frame} by virtue of countable choice. The model of synthetic domain theory in $\mathbf{Eff}$ ultimately interprets domain theoretic concepts in terms of Turing machines, which makes them applicable to the classical theory of computation~\citep{RN552}. 

On the other hand, realisability models do not provide a direct connection to classical theories of domains. For the latter, researchers returned to the original suggestion of Scott and devised sheaf models of synthetic domain theory in which the site is obtained from some existing category of domains or a dense generator thereof~\citep{fiore-plotkin:1996,FIORE1997151,fiore2001domains}.

Much effort has been put toward identifying a common axiomatics that accounts for both realisability and sheaf models of synthetic domain theory; cf.\ the work of \citet{reus-streicher:1999,simpson:2004}. Beyond these basic axioms, the realisability and sheaf models of synthetic domain theory behave very differently, as exemplified in the failure of the initial $L$-algebra to be inductive in the former, as \citet{VANOOSTEN2000233} pointed out.

The goal of the present paper is to identify a number of axioms emerging from the theory of \emph{classifying topoi} that suffice to explain the behaviour of sheaf models of synthetic domain theory all at once, including not only the dominance property and Phoa's principle, but also the chain completeness of the interval and the inductive construction of the initial $L$-algebra. We stress that the strongest of these axioms \emph{cannot} hold in the realisability model of synthetic domain theory, but they can be used to generate \emph{new} and exotic sheaf models of synthetic domains.

\subsection{Classifying topoi and Phoa's principle}\label{subsec:classtopphoa}

As a preliminary observation, there is a connection between Phoa's principle and classifying topoi (cf.\ \citet[Lem 3.8]{gratzer2024directed}).
\emph{A posteriori}, the technical elements required for this observation are already contained in the work of \citet{RN879}.

To recap, let $\T$ be a (finitary) Horn theory. Recall that the classifying topos of $\T$ is given by the following presheaf category (cf.\ \citet[D3.1]{johnstone2002sketches}),
\[ \Set[\T] \coloneq [\mmod\T\fp,\Set]\text{,} \]
where $\mmod\T\fp$ is the category of finitely presented $\T$-algebras. In particular, there is a generic $\T$-model $U_\T$, such that any $\T$-model in a topos $\mc E$ is realised as the inverse image of $U_\T$ under an essentially unique geometric morphism $\mc E \to \Set[\T]$. The observation of \citet{RN879} is that, internally in $\Set[\T]$, the function space of $U_\T$ is completely characterised by polynomials,
\[ U_\T^{U_\T} \cong U_\T[\ms{x}]\text{,} \]
where $U_\T[\ms{x}]$ denotes the free $\T$-model generated by an additional element, which is a  polynomial algebra on $U_\T$.

\citet{gratzer2024directed} have shown that when $\T$ is taken to be the theory $\mbb D$ of \emph{bounded distributive lattices} and $\I$ is the generic $\mbb D$-model, the equivalence $\I^\I \cong \I[\ms{i}]$ implies Phoa's principle---because for bounded distributive lattices, the polynomial $\I$-algebra $\I[\ms{i}]$ always classifies the partial order on $\I$. This is a striking example of how the properties of a theory impact the internal logic of its classifying topos. We will see many more examples in this paper.

\subsection{Synthetic quasi-coherence}\label{subsec:qc}

Recently, the work of Blass has been greatly generalised by Blechschmidt in his PhD thesis~\citep{blechschmidt2021using} (also see the unpublished note~\citep{blechschmidt2020general} by the same author), which identifies a stronger property satisfied by the generic model $U_\T$ in $\Set[\T]$, termed \emph{(synthetic) quasi-coherence}. The terminology is motivated by its application in algebraic geometry.

The main algebraic objects of study within the classifying topos $\Set[\T]$ will be the $U_\T$-algebras, i.e.\ $\T$-models $A$ equipped with a homomorphism $U_\T\to A$. Of course, the identity homomorphism exhibits $U_\T$ as a $U_\T$-algebra, and given any object $X$, we may define \[\opens X\coloneq U_\T^X\] to be the $X$-fold product $\prod_{x:X}U_\T$ of $U_\T$-algebras. 

Conversely, given any $U_\T$-algebra $A$ we may define its \emph{spectrum} to be the type of $U_\T$-algebra homomorphisms 
\[ \spec A \coloneq U_\T\alg(A,U_\T)\text{.}\]

The quasi-coherence principle in $\Set[\T]$ proposed by \citet{blechschmidt2021using} is, then, that for any \emph{finitely presented} $U_\T$-algebra $A$, the canonical homomorphism
\[
  A\to \opens\spec A
\]
obtained on elements by transposing the evaluation map $A\times \spec A\to U_\T$ is an equivalence.
Types of the form $\spec A$ for such $A$ are called `affine' by \citet{blechschmidt2021using}. This in fact induces an internal duality between finitely presented $U_\T$-algebras and `affine' types. In particular, it implies Blass's result. 

The appearance of finite presentation in the formulation of quasi-coherence above is perhaps misleading. Following the development of \citet{blechschmidt2021using,blechschmidt2020general}, it is clear that the quasi-coherence principle with respect to finitely presented algebras can be generalised to larger cardinality,
if we enlarge the site of the classifying topos $\Set[\T]$. For instance, one can consider
\[ \Set[\T]_\omega \coloneq [\mmod\T\cp,\Set]\text{,} \]
where $\mmod\T\cp$ denotes the category of \emph{countably presented} $\T$-models. In this case, the quasi-coherence principle will hold also with respect to \emph{countably presented} $U_\T$-algebras. This is the theoretical basis in a recent approach for synthetic topology suggested by \citet{cherubini2024foundation}, who consider a variation on the quasi-coherence principle for \emph{countably} presented Boolean algebras. 
Horn theories $\T$ whose operations have \emph{countable} arity can also be considered in this case, so long as the base topos $\Set$ satisfies countable choice. These examples will not fit directly into the discipline of classifying toposes because geometric theories cannot have infinitary operations: but a suitable doctrine may arise when considering geometric morphisms whose inverse image functors preserve countable rather than finite limits.

Because of the importance of varying the class of algebras with respect to which the quasi-coherence principle holds, we shall take a slightly different approach to the axiomatics inspired by locale theory and synthetic topology. In particular, the homomorphism $A\to \opens\spec A$ is the \emph{counit} of a contravariant adjunction $\opens\dashv \spec\colon U_\T\alg\op\to \Set$, so the algebras with respect to which the quasi-coherence principle holds are \emph{precisely} those that are fixed points of this adjunction.  We will therefore refer to algebras and sets as \emph{spatial} and \emph{sober} respectively when they arise as fixed points of this adjunction. Therefore, sober sets play the role of `affine schemes' in our setting, and a ``quasi-coherence principle'' is then no more than an assumption that certain classes of algebras are spatial.

Blechschmidt also points out that quasi-coherence descends from the classifying (presheaf) topos $\Set[\T]$ to any subtopos containing the generic model $U_\T$; see \citet[Cor. 7.7]{blechschmidt2020general}. From a logical perspective, a subtopos $\Set[\mbb S] \hook \Set[\T]$ can be identified as the classifying topos for a \emph{geometric quotient} $\mbb S$ of the Horn theory $\T$. If $U_\T$ is a sheaf for $\Set[\mbb S]$, then $U_\T$ in the subtopos $\Set[\mbb S]$ will be the generic $\mbb S$-model.
Hence, in such a subtopos, $U_\T$ will validates more geometric sequents as specified by the topology, which should be viewed as certain \emph{local properties}. For instance, \citet{blechschmidt2021using,Cherubini_Coquand_Hutzler_2024} work with \emph{local rings}, where the additional geometric properties are validated by the Zariski topology.
In the case of synthetic Stone duality, the Boolean algebra used by \citet{cherubini2024foundation} is the set 2, which is again only true in a suitably chosen topology.

\subsection{Summary of contributions}\label{sec:contributions}

We study an emerging connection between synthetic domain theory and the quasi-coherence axioms for \emph{bounded distributive lattices} and \emph{$\sigma$-frames}. As a first indication of why quasi-coherence for these theories might be useful for domain theory, we have observed that by taking the interval $\I$ to be a bounded distributive lattice or $\sigma$-frame internal to some topos,
many important objects for domain theory can be written as \emph{spectra}. This include the cubes $\I^n$ (\zcref{exm:cubesober}), the simplices $\Delta^n$ (\zcref{exm:simplicessober}), and the final coalgebra $\ov\omega$ (\zcref{exm:ovomegasober}). Here we outline the main results of this paper, and discuss some of their potential applications.

\begin{enumerate}[leftmargin=*]
  \item \emph{Quasi-coherence produces synthetic domain theory.} We show that several quasi-coherence principles suffice to account for \emph{all} essential axioms for synthetic domain theory. Specifically:
  \begin{enumerate}
    \item\label{contribution:dominance} In the case of both bounded distributive lattices and $\sigma$-frames, the \emph{dominance property} is identified as a quasi-coherence principle with respect to quotient algebras $\mathbb{I}/(i=1)$ in \zcref{prop:Idominance}; the co-dominance principle is dually identified with quasi-coherence with respect to quotient algebras $\mathbb{I}/(i=0)$ in \zcref{cor:dualisdominance}. This leads to an explicit computation of the partial map classifiers of spatial $\I$-algebras (\zcref{prop:liftingofalgebra}), and of spectra (\zcref{prop:liftofsober}).
    \item We give an explicit computation of the \emph{observational preorder} (\zcref{defn:specialisation}) for sober spaces (\zcref{lem:specorderofsober}) and spatial $\I$-algebras (\zcref{cor:specordonalgiscan}).
    \item A generalisation of Phoa's principle that applies to an arbitrary $\I$-algebra $A$ is stated (\zcref{def:gen-phoa}), and is shown to coincide with quasi-coherence with respect to polynomial algebras (\zcref{obs:sqcp-phoa-gen}). As a corollary, we see that spatiality of finitely generated free $\I$-algebras is equivalent to Phoa's principle (\zcref{cor:phoa-vs-quasicoherence}).
    \item In the case of $\sigma$-frames, quasi-coherence with respect to countably presented algebras implies chain completeness of $\I$ (\zcref{thm:complete}) assuming $\I$ is non-trivial. 
    This is the only place where it is important to work with $\sigma$-frames and not bounded distributive lattices (cf.~\zcref{rem:whynotdis}). This gives a full account of the axioms for synthetic domain theory as specified by \citet{hyland1990first}. 
  \end{enumerate}

  \item \emph{Local properties of spectra.} Though in general we do not assume the generic interval $\I$ to satisfy any local property such as linearity, we do discuss many instances of them and investigate their consequences.
  \begin{enumerate}

    \item \zcref{sec:locality} introduces various local conditions for $\I$, all of which will be compatible with quasi-coherence. This showcases the flexibility of this framework to encompass different flavours of domain theory.

    \item  More interestingly, without assuming the local properties globally, one can still show that $\I$ will be \emph{right orthogonal} to the maps that classifies these local properties (\zcref{specisnontrivial,specistriangulated,specis1t}). In general, any limiting diagram of spatial algebras will induce a localisation class containing $\I$; cf.\ \zcref{rem:limofalgloc}. This exhibits a new type of techniques in reasoning about domains in this framework.
    
    \item As a special case of locality, we also connect to the recent approach of synthetic (higher) category theory~\citep{riehl2017type,buchholtz2021synthetic,gratzer2024directed}. In particular, we show spectra will be synthetic categories (\zcref{thm:soberposet}). As another example, we also show $\omega$ is a synthetic category, in fact it satisfies \emph{all} the orthogonality conditions discussed in this paper except chain completeness (\zcref{thm:omegaortho}).
  \end{enumerate}

  \item \emph{New models for synthetic domain theory.} Consequently, we uncover a large family of new models for synthetic domain theory based on sites induced by countably presented $\sigma$-frames. We will discuss these models in \zcref{sec:model}, and compare them to the existing sheaf models for synthetic domain theory~\citep{FIORE1997151} in \zcref{subsec:compare}.
\end{enumerate}

\subsection{Style and notation of the paper}

Our motivation for domain theory encourages us to work in a context as general as possible, so as to allow future developments that connect more specific flavours of domain theory existing in the literature. This means that besides \zcref{sec:model} where we discuss models, throughout the paper we will work constructively in a type theory enriched with a generic model $\I$. The base type system we work with is intensional type theory with a universe satisfying function extensionality, which can be interpreted in any ($\infty$-)topos. 

We do not assume any additional assumption globally. Whenever we introduce an assumption under the \textbf{Axiom} environment, it should be viewed as introducing the \emph{content} of that assumption, rather than assuming it directly afterwards. In particular, the development is completely modular, and any additional assumption will be explicitly mentioned in each result.

In fact, we even take a step further. For the first half of this paper we will not work with bounded distributive lattices specifically. Instead, we assume to work with an arbitrary \emph{propositionally stable} theory in the sense of \zcref{defn:propositional}. In particular, the construction of the dominance and the computation of partial map classifier mentioned in \zcref{sec:contributions} (\ref{contribution:dominance}) works in this generality. Only from \zcref{sec:locality} onward will we then work more specifically with theories based on bounded distributive lattices.

\subsection*{Notation and preliminaries}
We work informally in the vernacular of univalent foundations~\citep{hottbook}.
By \emph{proposition} and \emph{set}, we will always use them in the sense of the HoTT Book~\citep{hottbook}, i.e.\ $-1$-types and $0$-types. The subuniverse of propositions and sets will be denoted as $\pp$ and $\stt$, respectively. Propositions are used to define subtypes $P \colon X \to \pp$, and we also write the dependent sum $\sum_{x:X}P(x)$ suggestively as $\scomp{x:X}{P(x)}$. In this case, for $x:X$ we also write $x\in P$ to denote the proposition $P(x)$. Notice that function extensionality implies $\pp$ and $\stt$ are both closed under dependent product, and $\stt$ furthermore is closed under dependent sums. For emphasis, we will also use $\fa xXP(x)$ to denote the dependent product $\prod_{x:X}P(x)$ when $P(x)$ is a family of propositions; in this case, we shall write $\ex xXP(x)$ for the propositional truncation $\pss{\sum_{x:X}P(x)}$. By \emph{existence} we will always mean the latter truncated version, and we may emphasise this by referring to \emph{mere existence}.\footnote{Here we deviate from the conventions of the HoTT Book~\citep{hottbook}.} Similarly, we use $P \vee Q$ to mean the truncated proposition $\pss{P + Q}$. We will write $\N$ for the type of natural numbers. For $n:\N$, we will also abuse notation and treat $n$ as the finite type of $n$ elements. 

Given an $\I$-algebra $A$, we will abuse notation by identifying elements of $\I$ as elements of $A$ via the structural map $\I \to A$. For an algebra $A$ and two lists of terms $a,b \colon J \to A$, 
we will write $A/a = b$ to indicate the \emph{quotient algebra} identifying $a_j$ with $b_j$ for all $j : J$. We also write $A[J]$ for the free algebra over $A$ generated from $J$, or we also explicitly mention the generators $A[\ms{j}_1,\cdots,\ms{j}_n]$. We write the \emph{coproduct} of $\I$-algebras as $A \otimes B$.

\section{Quasi-coherence and sober spaces}\label{sec:basics}

We will work in intensional type theory extended by a model $\I$ of some Horn theory $\T$. In this section we will first describe the quasi-coherence property, and briefly recap some of its elementary consequences. The results in this section applies to any Horn theory $\T$, thus at this stage we do not assume $\T$ to satisfy any additional properties.

As mentioned in \zcref{subsec:qc}, an $\I$-algebra is a $\T$-model $A$ equipped with a homomorphism $\I \to A$. Here by a $\T$-model we always mean a \emph{set} (in the sense of univalent foundations) equipped with a family of operations satisfying the axioms specified by $\T$. $\I\alg$ will denote the category of $\I$-algebras. We have a construction of spectra as a contravariant functor:

\begin{definition}[Spectrum]
  The \emph{spectrum} of an $\I$-algebra $A$ is defined to be the set of $\I$-algebra homomorphisms from $A$ into $\I$:
  \[ \spec A \coloneq \I\alg(A,\I)\text{.} \]
  This gives us a contravariant functor 
  \[ \spec \colon \I\alg\op \to \Set\text{,} \]
  acting on morphisms by pre-composition.
\end{definition}

\begin{definition}[Observation algebra]
  Given a set $X$, its \emph{algebra of observations} is the product of following $\I$-algebras:
  \[ 
    \opens X \coloneq \I^X = \prod_{x:X}\I
  \] 
  This gives us a functor \[ \opens \colon \Set \to \I\alg\op \]
  taking any set to its algebra of observations.
\end{definition}

\begin{proposition}\label{specrightadj}
  The spectrum construction is right adjoint to the functor taking a set $X$ to its algebra of observations:
  \[\begin{tikzcd}
    {\I\alg\op} & \Set
    \arrow[""{name=0, anchor=center, inner sep=0}, "\spec"', curve={height=18pt}, from=1-1, to=1-2]
    \arrow[""{name=1, anchor=center, inner sep=0}, "\opens"', curve={height=18pt}, from=1-2, to=1-1]
    \arrow["\dashv"{anchor=center, rotate=-90}, draw=none, from=1, to=0]
  \end{tikzcd}\]
\end{proposition}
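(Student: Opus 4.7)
The plan is to exhibit the required natural bijection
\[ \I\alg(A,\opens X) \cong \Set(X,\spec A), \]
which is the adjunction $\opens \dashv \spec$ translated across the opposite on one side. Both sides are hom-sets of functions with an $\I$-algebra target, so the proof is essentially bookkeeping around the universal property of products.

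First I would observe that $\opens X = \I^X = \prod_{x:X}\I$ is the $X$-indexed product of $\I$ in $\I\alg$, since limits in $\I\alg$ are computed pointwise on the underlying sets for any (finitary) Horn theory $\T$, and the $\I$-algebra structure on $\I^X$ is the pointwise one along with the constant homomorphism $\I \to \I^X$. By the universal property of this product, an $\I$-algebra homomorphism $f \colon A \to \opens X$ is the same data as a family $(f_x \colon A \to \I)_{x:X}$ of $\I$-algebra homomorphisms, where $f_x \coloneq \ev_x \circ f$. Such a family is in turn the same data as a function $\tilde f \colon X \to \spec A$ sending $x \mapsto f_x$, by definition of $\spec A = \I\alg(A,\I)$.

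Next I would verify naturality. In the variable $X$, a function $g \colon X' \to X$ induces $\opens g \colon \opens X \to \opens X'$ by precomposition, and we must check that the composite $A \to \opens X \to \opens X'$ corresponds to $g^* \circ \tilde f \colon X' \to \spec A$; this is immediate from the pointwise definition. In the variable $A$, a homomorphism $h \colon A' \to A$ precomposes with $f$ to give $A' \to \opens X$, corresponding to the family $(f_x \circ h)_{x:X}$, which is exactly $\spec(h) \circ \tilde f$. Both checks are unwindings of definitions.

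Nothing in this argument is really an obstacle; the one point worth a moment of care is confirming that products in $\I\alg$ really are computed on underlying sets (so the bijection at the level of sets lifts to one of $\I$-algebra homomorphisms), but this is a standard fact about algebras for a Horn theory and requires no use of quasi-coherence or any further structure on $\T$.
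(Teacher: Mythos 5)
Your proof is correct and follows essentially the same route as the paper's: both rest on the observation that $\opens X = \prod_{x:X}\I$ is the $X$-indexed product in $\I\alg$, so that $\I\alg(A,\opens X)\cong\prod_{x:X}\I\alg(A,\I)\cong(\spec A)^X$. The paper states this as a one-line chain of natural equivalences, whereas you additionally spell out the naturality checks explicitly; the content is identical (and, as a minor aside, the product fact you invoke holds for infinitary Horn theories such as $\sigma$-frames as well, so the parenthetical ``finitary'' is not needed).
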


\begin{proof}
  For any $\I$-algebra $A$ and set $X$, we have a natural equivalence 
  \[ (\spec A)^X \cong \I\alg(A,\I)^X \cong \prod_{x:X}\I\alg(A,\I)\cong \I\alg(A,\opens{X})\text{.} \qedhere\]
\end{proof}

\begin{convention}[Unit and counit]
  Following \citet{Taylor2011}, we shall write $\eta\colon 1_{\Set}\to {\spec}\circ \opens$ and $\iota\colon \opens\circ\spec\to 1_{\I\alg\op}$ for the unit and counit of the adjunction $\opens\dashv \spec\colon \I\alg\op\to \Set$ respectively. Due to the contravariance of the adjunction, the components of the counit are written $\iota_A\colon A\to \opens\spec{A}$ in the category of $\I$-algebras. Both $\eta_X\colon X\to \spec\opens{X}$ and $\iota_A\colon A\to \opens\spec{A}$ are the evident evaluation maps.
\end{convention}

Many sets can be naturally viewed as the spectrum of a certain $\I$-algebra. For example, we have $\spec \I \cong \I\alg(\I,\I) \cong 1$ because  $\I$ is the initial $\I$-algebra. Similarly, $\I$ itself it the spectrum of the polynomial $\I$-algebra $\I[\ms{i}]$, as we have $\spec \I[\ms{i}] \cong \I\alg(\I[\ms{i}],\I) \cong \I$. More generally:

\begin{example}[Cubes]\label{exm:cubesober}
  For any set $X$, let $\I[X]$ be the free $\I$-algebra generated by $X$. Then the set $\I^X$ is a spectrum, because we have
  \[ \spec \I[X] \cong \I\alg(\I[X],\I) \cong \I^X\text{.} \]
\end{example}

Notice $\I[X]$ is the $X$-fold coproduct of the polynomial $\I$-algebra $\I[\ms{i}]$, and its spectrum is the $X$-fold product of $\spec\I[\ms{i}] \cong \I$. In fact, all the spectra can be obtained as equalisers of powers of $\I$:

\begin{proposition}\label{prop:spectra-are-powers-of-the-interval}
  Every spectrum is an equaliser of powers of $\I$. 
\end{proposition}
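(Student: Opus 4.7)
The plan is to realise every $\I$-algebra as a coequaliser of free $\I$-algebras and then transport this presentation to $\Set$ via the spectrum functor, which is a right adjoint and hence preserves limits.

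First, I would observe that any $\I$-algebra $A$ admits a canonical presentation by generators and relations. Specifically, the counit of the free–forgetful adjunction gives a surjection $\varepsilon_A \colon \I[|A|] \twoheadrightarrow A$ from the free $\I$-algebra on the underlying set of $A$. Taking the kernel pair of $\varepsilon_A$ (or, equivalently, freely generating a set of relations $R$ witnessing all equations that hold in $A$), we obtain a coequaliser diagram
\[
  \I[R] \rightrightarrows \I[|A|] \twoheadrightarrow A
\]
in $\I\alg$.

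Next, by \Cref{specrightadj} the functor $\spec \colon \I\alg\op \to \Set$ is a right adjoint, and therefore preserves all limits in $\I\alg\op$. Since colimits in $\I\alg$ correspond to limits in $\I\alg\op$, the coequaliser above is sent to an equaliser
\[
  \spec A \to \spec\I[|A|] \rightrightarrows \spec\I[R]
\]
in $\Set$. Finally, by \Cref{exm:cubeaffine} we have $\spec\I[X] \cong \I^X$ for any set $X$, so this is precisely an equaliser of powers of $\I$, establishing the claim.

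The only mildly delicate point is verifying that the coequaliser presentation is legitimate in our constructive setting, but this is fine because we are working with set-based algebras and $\st$ is closed under the relevant operations by function extensionality; no real obstacle arises. The proof is essentially formal, relying only on the adjunction of \Cref{specrightadj} together with the identification of free algebras' spectra as cubes from \Cref{exm:cubeaffine}.
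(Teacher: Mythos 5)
Your argument is correct, and it shares the paper's overall strategy: present $A$ as a coequaliser of free $\I$-algebras and push that presentation through the right adjoint $\spec$ of \Cref{specrightadj}, using $\spec\I[X]\cong\I^X$ from \Cref{exm:cubeaffine}. Where you differ is in how the coequaliser is obtained. The paper uses the canonical resolution $\I[\I[A]]\rightrightarrows\I[A]\twoheadrightarrow A$ furnished by the counit of the free--forgetful adjunction, and justifies that this is a coequaliser by showing the adjunction is of \emph{descent type} (\Cref{lem:horn-free-models-descent}), via reflectivity of quasivarieties and Kelly--Power's composition theorem. You instead take the kernel-pair presentation of $\varepsilon_A\colon\I[|A|]\twoheadrightarrow A$. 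This works, but it silently relies on two facts you should make explicit: first, that surjective homomorphisms of $\T$-models are \emph{effective}, i.e.\ $\varepsilon_A$ really is the coequaliser of its kernel pair $K\rightrightarrows\I[|A|]$ in $\I\alg$ (this holds because the quotient of $\I[|A|]$ by the kernel congruence is isomorphic to $A$, hence again a $\T$-model, so the coequaliser computed among all $\Sigma_\T$-algebras already lands in the quasivariety); and second, that the kernel pair $K$ --- which is not itself free --- may be replaced by the free algebra $\I[|K|]$ by precomposing the parallel pair with the epimorphism $\I[|K|]\twoheadrightarrow K$, which does not change the coequaliser. Once those are stated, your proof is complete; it is arguably more elementary than the paper's, avoiding the descent-type machinery, though that machinery gives a reusable general statement about the adjunction rather than an ad hoc exactness argument.
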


To verify \zcref{prop:spectra-are-powers-of-the-interval}, we recall some results concerning adjunctions of descent type. This notion was first given in \citet{BarrMichael1985Ttat}, and the following characterisation appeared in \citet{kelly1993adjunctions}

\begin{definition}
  An adjunction $F\dashv U\colon \mathcal{A}\to \mathcal{C}$ is said to be of \emph{descent type} when either of the equivalent conditions hold:
  \begin{enumerate}
    \item The comparison functor $\Phi_{UF}\colon \mc{A} \to \mc{C}^{UF}$ into the Eilenberg--Moore category of the monad $UF$ on $\mathcal{C}$ is fully faithful.
    \item For each $A\in \mc{A}$, the counit $\epsilon_A\colon FUA\to A$ is a coequaliser in $\mc{A}$:
    \[\begin{tikzcd}
      FUFUA \ar[r, shift left, "\epsilon_{FUA}"] \ar[r, shift right, "{FU\epsilon_A}"'] & FUA \ar[r, two heads, "\epsilon_{A}"] & A
    \end{tikzcd}\]  
  \end{enumerate}
\end{definition}

\begin{proposition}\label{lem:horn-free-models-descent}
  For a Horn theory $\T$, the adjunction $F\dashv U\colon \mmod\T\to\Set$ is of descent type.
\end{proposition}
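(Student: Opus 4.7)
My plan is to verify the second characterisation directly: for every $\T$-model $A$, the counit $\epsilon_A : FUA \to A$ is the coequaliser in $\mmod\T$ of the parallel pair $(\epsilon_{FUA},\, FU\epsilon_A) : FUFUA \rightrightarrows FUA$. That $\epsilon_A$ coequalises this pair is an immediate consequence of the naturality of $\epsilon$ applied to the morphism $\epsilon_A$ itself: $\epsilon_A\circ FU\epsilon_A = \epsilon_A\circ\epsilon_{FUA}$. All of the work therefore lies in establishing universality.

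So suppose $f : FUA \to B$ is a $\T$-homomorphism satisfying $f\circ\epsilon_{FUA} = f\circ FU\epsilon_A$. For each $a \in UA$ let $\widetilde a \in FUA$ denote the corresponding generator of the free model; I would define the factorisation $g : A \to B$ on underlying sets by $g(a) := f(\widetilde a)$. The crucial identity is that for every $t \in FUA$,
\[ f(t) \;=\; f(\widetilde{\epsilon_A(t)})\text{.} \]
To establish it, regard $t$ as an element of $UFUA$, and hence as a generator of $FUFUA$: the map $\epsilon_{FUA}$ sends this generator back to $t \in FUA$, whereas $FU\epsilon_A$ sends it to the generator of $FUA$ corresponding to $U\epsilon_A(t) = \epsilon_A(t) \in UA$, which is exactly $\widetilde{\epsilon_A(t)}$. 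The coequalising hypothesis on $f$ then yields the displayed identity.

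Every required property of $g$ follows with minimal further work. The factorisation $g\circ\epsilon_A = f$ is just a restatement of the identity, applied to a general $t \in FUA$. To see that $g$ is a $\T$-homomorphism, I would observe that for an operation symbol $\sigma$ and elements $a_1,\ldots,a_n \in A$, one has $\sigma_A(a_1,\ldots,a_n) = \epsilon_A(\sigma_{FUA}(\widetilde{a_1},\ldots,\widetilde{a_n}))$; plugging this into the definition of $g$ and using that $f$ already preserves $\sigma$ yields $g(\sigma_A(a_1,\ldots,a_n)) = \sigma_B(g(a_1),\ldots,g(a_n))$. Uniqueness of $g$ is automatic: any alternative factoring $g'$ must satisfy $g'(a) = g'(\epsilon_A(\widetilde a)) = f(\widetilde a) = g(a)$.

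The main obstacle is bookkeeping rather than mathematics: one must track carefully the three distinct roles a single element can play---as an element of $A$, as the generator $\widetilde a$ of $FUA$, and, after one further application of $F$, as a generator of $FUFUA$. The argument uses nothing specific about Horn theories beyond the universal property of the free model $F$, and so it goes through at this level of generality without modification.
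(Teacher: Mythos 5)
Your proof is correct, but it takes a genuinely different route from the paper's. You verify condition (2) of the definition of descent type directly and elementwise: the fork follows from naturality of $\epsilon$, the key identity $f(t)=f(\widetilde{\epsilon_A(t)})$ follows from the coequalising hypothesis applied to the generator of $FUFUA$ named by $t$ (together with the triangle identity and naturality of $\eta$), and the factorisation $g(a)\coloneq f(\widetilde a)$ is then checked to be a homomorphism and unique. The paper instead factors $F\dashv U\colon\mmod\T\to\Set$ through the category of models of the bare signature $\Sigma_\T$, observes that the first leg is reflective (Ad\'amek--Rosick\'y, quasivarieties) and the second is monadic, and invokes Kelly--Power's result that a reflective adjunction followed by one of descent type is of descent type. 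The paper's argument is shorter given the cited machinery and isolates cleanly where the Horn axioms enter (only in the reflective leg); yours is self-contained and elementary, which is a genuine advantage given that the proposition is stated internally to type theory where one might prefer not to lean on external citations. One caveat: your closing remark that the argument ``uses nothing specific about Horn theories beyond the universal property of the free model'' is slightly too strong. The homomorphism check for $g$ uses that morphisms of $\T$-models are exactly the operation-preserving functions between underlying sets; this is where the argument would break for, say, a Horn theory presented with relation symbols (e.g.\ posets over $\Set$, where the counit from the discrete poset is not a coequaliser, so the proposition itself fails). The paper's proof makes the same restriction explicit by regarding $\Sigma_\T$ as an algebraic signature with the Horn clauses as implications between equations, so under that reading your argument is complete.
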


\begin{proof}
  Consider a presentation $\T = (\Sigma_\T,H_\T)$ where $\Sigma_\T$ is a signature and $H_\T$ is a set of Horn clauses over this signature; we may of course regard $\Sigma_\T$ as an algebraic theory without any equations. We may then decompose the adjunction $F\dashv U\colon \mmod\T\to \Set$  into the composite adjunction:
  \[\begin{tikzcd}
    {\mmod\T} & {\mmod{\Sigma_\T}} & \Set
    \arrow[""{name=0, anchor=center, inner sep=0}, "{U_1}", curve={height=-15pt}, hook', from=1-1, to=1-2]
    \arrow[""{name=1, anchor=center, inner sep=0}, "{F_1}", curve={height=-15pt}, from=1-2, to=1-1]
    \arrow[""{name=2, anchor=center, inner sep=0}, "{U_0}", curve={height=-15pt}, from=1-2, to=1-3]
    \arrow[""{name=3, anchor=center, inner sep=0}, "{F_0}", curve={height=-15pt}, from=1-3, to=1-2]
    \arrow["\dashv"{anchor=center, rotate=90}, draw=none, from=1, to=0]
    \arrow["\dashv"{anchor=center, rotate=90}, draw=none, from=3, to=2]
  \end{tikzcd}\]

  The existence and reflectivity of $F_1\dashv U_1\colon \mmod\T\to\mmod{\Sigma_\T}$ is established by Remark~3.19~(1) of \citet{adamek1994locally}, as $\mmod\T$ is a ``quasivariety'' in the sense of \emph{op.\ cit.} The right-hand adjunction $F_0\dashv U_0\colon \mmod{\Sigma_T}\to \Set$ is monadic (i.e.\ of \emph{effective} descent type) because $\Sigma_\T$ is algebraic. Corollary 3.3 of \citet{kelly1993adjunctions} implies that the composition of a reflective adjunction followed by an adjunction of descent type is of descent type. 
\end{proof}

We are now prepared to verify \zcref{prop:spectra-are-powers-of-the-interval}.

\begin{proof}[Proof of \zcref{prop:spectra-are-powers-of-the-interval}]
  We wish to show that every spectrum $X=\spec A$ an equaliser of powers of the interval.
  To that end, we note that the adjunction $F\dashv U\colon \I\alg \to\Set$ is of descent type by \zcref{lem:horn-free-models-descent}, since the theory of $\I$-algebras is Horn. Hence, the following is a coequaliser in $\I\alg$: 
  \[
  \begin{tikzcd}
    \I[\I[A]] \ar[r, shift left, "\epsilon"] \ar[r, shift right, "{\I[\epsilon]}"'] & \I[A] \ar[r, two heads, "\epsilon"] & A
  \end{tikzcd}
  \]

  Because $\spec\colon \I\alg\op\to\Set$ is a right adjoint (\zcref{specrightadj}), it sends colimits of algebras to limits of sets; hence, the following is an equaliser:
  \[
  \begin{tikzcd}
    \spec A \ar[r, hook] & \spec\I[A] \cong \I^A \ar[r, shift left] \ar[r, shift right] & \spec\I[\I[A]]\cong\I^{\I[A]}
  \end{tikzcd}
  \qedhere
  \]
\end{proof}

\begin{remark}[Spectra are replete]\label{rem:specarereplete}
  Recall the notion of \emph{replete} type given by \citet{hyland1990first}: $X$ is said to replete when it is right orthogonal to any map that $\I$ is right orthogonal to. Throughout the paper, by right orthogonality we always mean the \emph{internal} notion: $X$ is right orthogonal to a map $f$ iff the restriction map $X^f$ is an equivalence. Thus, an object is replete iff it belongs to the smallest internal localisation class containing $\I$. In particular, replete objects form an exponential ideal. This implies a spectrum $\spec A$ as an equaliser of powers of $\I$ will also be replete.
\end{remark}

We will now define spatial algebras and sober spaces as the \emph{fixed points} of the adjunction in \zcref{specrightadj}:

\begin{definition}[Spatial algebra]
  An $\I$-algebra $A$ is said to be \emph{spatial} when it is a fixed point of the adjunction $\opens \dashv \spec$, i.e.\ the component $\iota_A\colon A \to \opens \spec A$ of the counit is an equivalence of $\I$-algebras.
\end{definition}

We emphasise that the equivalence $A \cong \opens\spec A$ is not only an equivalence of types, but an equivalence of $\I$-algebras. This implies that the algebraic structure on $A$ can be viewed as pointwise induced by that on $\I$. 

\begin{remark}[Spatial algebras are replete]\label{rem:qcreplete}
  The underlying set of any spatial algebra $A$ is also replete, as we have $A\cong \opens\spec{A}$ and hence the underlying set of $A$ is $\I^{\spec{A}}$.
\end{remark}

\begin{example}\label{exm:intervalqc}
  $\I$ itself by definition is spatial. We have seen that $\spec \I \cong 1$. Under this equivalence, the canonical map 
  \[ \I \to \opens\spec \I \cong \I \]
  is exactly the identity on $\I$.
\end{example}

\begin{definition}[Sober sets]
  We say a set $X$ is \emph{sober} if it is a fixed point of the adjunction $\opens \dashv \spec$, i.e.\ the unit map $\eta_X\colon X \to \spec\opens X$ is an equivalence.
\end{definition}

In the future whenever we indicate $X$ is sober with $X \cong \spec A$, we will assume $A$ is the spatial $\I$-algebra $A \cong \opens X$.

\begin{remark}[Spatiality and sobriety as general properties]
  Our focus on spatiality and sobriety indicates an important difference between our approach and the ones taken in the related works~\citep{Cherubini_Coquand_Hutzler_2024,cherubini2024foundation}. There, the notion is directly connected to the size of presentation, i.e.\ they are spectra of finitely presented or countably presented algebras. Since we aim for a completely modular development, our notion is more general, which a priori does not favour a class of size-related algebras. 
\end{remark}

Being the fixed points of an adjunction, spatial $\I$-algebras and sober spaces induces certain duality results. The first half of the following result appears in Prop.~2.2.1 of \citet{Cherubini_Coquand_Hutzler_2024} (for the more restrictive definition of quasi-coherence). We include the result here for completeness since our assumption is slightly more general.

\begin{proposition}\label{prop:duality}
  Let $A$ be spatial. For any $\I$-algebra $B$, the canonical map $\I\alg(B,A) \to (\spec B)^{\spec A}$ is an equivalence. Similarly, if $X$ is sober, for any set $Y$ the canonical map $X^Y \to \I\alg(\opens X,\opens Y)$ is an equivalence.
\end{proposition}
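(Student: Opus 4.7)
The plan is to recognise both statements as instances of a general categorical fact: for any adjunction $F\dashv G\colon\mc A\to \mc C$, if $A\in \mc A$ is a fixed point of the counit (so $\epsilon_A\colon FGA\to A$ is an isomorphism), then for every $B\in\mc A$ the functorial map $G\colon \mc A(A,B)\to \mc C(GA,GB)$ is a bijection. This is verified by factoring it as $\mc A(A,B)\xrightarrow{-\circ\epsilon_A}\mc A(FGA,B)\cong \mc C(GA,GB)$, where the first arrow is an equivalence because $\epsilon_A$ is, and the second is the adjunction hom-isomorphism. A dual argument handles fixed points of the unit: if $\eta_X\colon X\to GFX$ is invertible, then $F\colon \mc C(Y,X)\to \mc A(FY,FX)$ is a bijection.

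For the first statement, I would instantiate this with the adjunction $\opens\dashv \spec\colon \I\alg\op\to \Set$ of \Cref{specrightadj}. Quasi-coherence of $A$ says exactly that $\iota_A\colon \opens\spec A\to A$ is an isomorphism in $\I\alg\op$, i.e.\ that $A$ is a fixed point of the counit. So the composite
\[
  \I\alg(B,A)=\I\alg\op(A,B)\xrightarrow{-\circ \iota_A}\I\alg\op(\opens\spec A,B)\cong \Set(\spec A,\spec B)=(\spec B)^{\spec A}
\]
is an equivalence, where the last step uses the hom-isomorphism of \Cref{specrightadj}. A routine chase through the definitions of $\iota_A$ as the evaluation map and of the adjunction hom-iso shows that this composite agrees with the canonical map $f\mapsto \spec f$, as required.

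For the second statement, the argument is strictly symmetric: affineness of $X$ says that the unit $\eta_X\colon X\to \spec\opens X$ is an equivalence, so $X$ is a fixed point of the unit, and applying the dual of the general fact yields
\[
  X^Y=\Set(Y,X)\xrightarrow{\eta_X\circ -}\Set(Y,\spec\opens X)\cong \I\alg\op(\opens Y,\opens X)=\I\alg(\opens X,\opens Y)
\]
as an equivalence, which again coincides with $f\mapsto \opens f$. The only subtlety worth being careful about is bookkeeping the contravariance: reading homs in $\I\alg\op$ versus $\I\alg$ swaps sources and targets, so one must confirm that the composites above land on the intended canonical map rather than some transpose. Apart from that, the proof is purely formal, and no genuine obstacle arises.
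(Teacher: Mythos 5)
Your proof is correct and takes essentially the same route as the paper: the paper's entire proof is the one-line remark that the statement ``holds generally for fixed points of an adjunction,'' and your argument simply spells out that general fact (including the triangle-identity check that the composite really is $f\mapsto\spec f$, resp.\ $f\mapsto\opens f$). The contravariance bookkeeping you flag is handled correctly.
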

\begin{proof}
  This holds generally for fixed points of an adjunction.
\end{proof}

\section{Open propositions and the dominance property}\label{sec:dominance}

Notice that up till this point we have not used any special property of $\T$ rather than the fact that it is a Horn theory. However, to move closer to the intended application in domain theory, we start by assuming our theory $\T$ is \emph{propositionally stable} in the following sense: 

\begin{definition}[Propositionally stable theory]\label{defn:propositional}
  We say a Horn theory $\T$ is \emph{propositionally stable}, if it extends the theory of bounded meet-semi-lattices (1, $\wedge$), and truth of an element is computed by slicing: For any $\T$-model $A$ and element $a:A$, the quotient $A/a=1$ is given by the restriction map
  \[ {a \wedge -} \colon A \surj A/a\text{,} \]
  where the slice $A/a$ by definition is ${\cv} a \coloneq \scomp{b:A}{b\le a}$, with $\le$ the partial order on $A$ induced by the bounded meet-semi-lattice structure.
\end{definition}

\begin{remark}[Examples of propositionally stable theories]
  $\mbb M$ of bounded meet-semi-lattices, $\mbb D$ of bounded distributive lattices, $\mbb H$ of Heyting algebras, and $\mbb B$ of Boolean algebras are all examples of propositionally stable theories. In fact, all finitary quotients of $\mbb H$ or $\mbb B$ will again be propositionally stable. More generally, for any propositionally stable theory $\T$ and any $\T$-model $D$, the theory of $D$-algebras will again be so, because quotients of $D$-algebras are computed the same as quotients of models of $\T$.
\end{remark}

\begin{remark}[The theory of $\sigma$-frames]\label{rem:sigmaframe}
  A $\sigma$-frame $A$ is a lattice with finite meets and countable joins satisfying the distributivity axiom 
  \[ a \wedge \bigvee_{n:\N} b_n = \bigvee_{n:\N} a \wedge b_n\text{,} \]
  for any $a$ in $A$ and $b \colon \N \to A$. We can axiomatise $\sigma$-frames as an \emph{infinitary} algebraic theory $\mbb S$, extending the theory of bounded meet-semi-lattices with a constant $0$ and a function symbol $\bigvee$ with \emph{countable} arity. (Such an axiomatisation can be found e.g.\ in \citet[Exa.~3.26]{adamek1994locally}.) It is easy to see that $\mbb S$ is propositionally stable. 

  A notable fact about $\sigma$-frames is that their finitary behaviours are exactly the same as bounded distributive lattices, in the sense that for any $\sigma$-frame $A$, the finitely presented $\sigma$-frame $A[n]/R$ coincides with the finitely presented \emph{bounded distributive lattice} $A[n]/R$. Thus, there will be no difference between working with $\sigma$-frames or bounded distributive lattice when we work with finitely presented $\I$-algebras. The salience of countable joins will emerge only in \zcref{sec:infdomain} when we prove chain completeness of $\I$.
\end{remark}

\begin{remark}[Propositionally op-stable theories]\label{rem:opprop}
  There is an evident ``opposite'' notion of being propositionally stable, which requires $\T$ to extend the theory of bounded join-semi-lattices and the quotient $A/a = 0$ for any $a:A$ is computed by ${a \vee -} \colon A \surj a/A$. Every propositionally stable theory corresponds to a propositionally op-stable theory by dualising: The theories of join-semi-lattices and coHeyting algebras are propositionally op-stable; Boolean algebras and distributive lattices dualise to themselves; the theory of $\sigma$-frames is also propositionally op-stable.
\end{remark}

For a propositionally stable theory $\T$, we think of the generic model $\I$ as certain interval object, since it is equipped with a partial order. In this case, more types can be realised as spectra. The important examples are \emph{simplices}:

\begin{example}\label{exm:simplicessober}
  For any $n : \N$, let $\Delta^n \hook \I^n$ be the $n$-simplex
  \[ \Delta^n \coloneq \scomp{i \colon n \to \I}{i_1 \ge \cdots \ge i_n}\text{.} \]
  This type is indeed a spectrum, since by definition we have
  \[ \Delta^n \cong \spec\I[\ms{i}_1,\cdots,\ms{i}_n]/\ms{i}_1\ge\cdots\ge \ms{i}_n\text{.} \]
  To simplify later discussion, we might also introduce the following types isomorphic to simplices above,
  \[ \Delta_n \coloneq \scomp{i \colon n \to \I}{i_1 \le \cdots \le i_n}\text{.} \]
\end{example}

The constant $1 : \I$, which is the top element in $\I$, induces a predicate
\[ \istsym\colon \I \to \pp\text{,} \]
which takes $i : \I$ to the proposition $i = 1$. Using $\istsym$, we can think of the spectrum $\spec A$ of an $\I$-algebra $A$ as its set of \emph{models}, with a satisfaction relation $\models$ as follows:

\begin{definition}[Satisfaction]
  For any $\I$-algebra $A$ and elements $a:A$ and $x:\spec A$, we say $x$ \emph{satisfies} $a$, written $x \models a$, if $\ist{xa}$.
\end{definition}

For instance, the satisfaction relation can be used to carve out sober subspaces consisting of models satisfying some element in $A$:

\begin{example}
  For any $\I$-algebra $A$ and $a:A$, the subtype $D_a\subseteq \spec A$ defined below is again a spectrum:
  \[ D_a \coloneq \scomp{x : X}{x \models a} \cong \spec A/a\text{.} \]
\end{example}

The first observation is that $\istsym$ takes $i : \I$ to a spectrum:

\begin{lemma}\label{lem:openpropsober}
  The spectrum $\spec\I/i$ for any $i:\I$ is equivalent to the proposition $\ist{i}$.
\end{lemma}
\begin{proof}
  We have $\spec \I/i \cong \I\alg(\I/i,\I)$ by definition. Since $\I$ is the initial $\I$-algebra and $\I/i$ is a quotient of $\I$, any homomorphism $\I/i \to \I$ will be unique, hence $\spec\I/i$ is a proposition. It thus suffices to show $\spec\I/i$ implies $\ist{i}$. Because $\T$ is propositionally stable, an $\I$-algebra morphism $f \colon \I/i \to \I$ is a commutative diagram as follows,
  \[
  \begin{tikzcd}
    \I/i \ar[rr, dashed, "f"] & & \I \\ 
    & \I \ar[ul, two heads, "i \wedge -"] \ar[ur, equal]
  \end{tikzcd}
  \]
  In this case, we have
  \[ i = f(i \wedge i) = f(i) = 1\text{.} \]
  The final identity holds since $f$ preserves the top element. 
\end{proof}

The goal of this section is to show that, under a suitable quasi-coherence assumption, $\istsym\colon \I \to \pp$ makes $\I$ a \emph{dominance} in the sense of \citet{rosolini1986continuity}. In fact, at the end of this section we will provide an exact characterisation of the dominance property of $\I$ as a quasi-coherence principle. It turns out that it suffices to require all slices of $\I$ to be spatial:

\begin{definition}[Stable spatiality and sobriety]
  An $\I$-algebra $A$ is \emph{stably spatial}, if for any $i:\I$, the slice $A/i$ is spatial. A sober space $X \cong \spec A$ is \emph{stably sober} if $A$ is stably spatial.
  % $ : \N$ and $a,b \colon n \to A$, the quotient $A/a=b$ is again spatial. 
\end{definition}

\PrintAxiomSQCI

As a first consequence, we show the following important lemma:

\begin{lemma}[\AxiomSQCI]\label{lem:intconserve}
  The interval $\I$ is \emph{conservative}: For any $i,j : \I$
  \[ (\ist{i} \eq \ist{j}) \eq i = j\text{.} \]
\end{lemma}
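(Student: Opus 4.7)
The direction $i = j \Rightarrow (\ist{i} \eq \ist{j})$ is immediate by congruence of $\ist{\cdot}$. For the converse, my plan is to apply (\AxiomSQCI) to the principal quotient $\I/i$ and transport the desired inequality $i \le j$ across the resulting concrete description of this algebra.

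The first step is to combine (\AxiomSQCI) with \Cref{lem:openpropaffine}: (\AxiomSQCI) ensures that $A \coloneq \I/i$ is quasi-coherent, while \Cref{lem:openpropaffine} identifies $\spec A$ with the proposition $\ist{i}$. Together these yield an $\I$-algebra isomorphism $A \cong \I^{\ist{i}}$ under which the structural map $\I \to A$ becomes the constant-function map $x \mapsto (p \mapsto x)$.

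Assuming $\ist{i} \eq \ist{j}$, I would then show $i \le j$ as follows. By propositional stability, $i \le j$ is equivalent to $j \wedge i = i$, equivalently to the images of $j$ and of $1$ in the quotient $A$ agreeing. Under the identification $A \cong \I^{\ist{i}}$, these images are the constant functions $p \mapsto j$ and $p \mapsto 1$; by function extensionality their equality reduces to the pointwise claim that $j = 1$ holds whenever $\ist{i}$ does, which is exactly the hypothesis $\ist{i} \to \ist{j}$. The symmetric argument using $\I/j \cong \I^{\ist{j}}$ and $\ist{j} \to \ist{i}$ gives $j \le i$, and antisymmetry concludes $i = j$.

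The crux of the argument is the translation, via quasi-coherence, of an algebraic identity in $\I/i$ into a pointwise statement about the proposition $\ist{i}$. I do not anticipate any significant obstacle beyond carefully tracing the structural map $\I \to \I/i$ through the isomorphism $\I/i \cong \I^{\ist{i}}$ supplied by (\AxiomSQCI).
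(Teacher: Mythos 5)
Your proposal is correct and takes essentially the same approach as the paper: both rest on (\AxiomSQCI) together with \Cref{lem:openpropaffine} to identify the principal quotient $\I/i$ with $\I^{\ist{i}}$ and then extract $i \le j$ from that identification. The only cosmetic difference is that the paper packages the hypothesis $\ist{i}\to\ist{j}$ as a restriction homomorphism $\I/j \cong \I^{\ist{j}} \to \I^{\ist{i}} \cong \I/i$ and concludes as in \Cref{lem:openpropaffine}, whereas you compare the images of $j$ and $1$ in $\I/i \cong \I^{\ist{i}}$ pointwise via function extensionality.
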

\begin{proof}
  If $i \le j$ then trivially $\ist{i} \to \ist{j}$. Thus it suffices to show
  \[ (\ist{i} \to \ist{j}) \to i \le j\text{.} \]
  Take $i,j$ with $\ist{i} \to \ist{j}$. By \zcref{lem:openpropsober} and (\AxiomSQCI), each $\ist{i}$ and $\ist{j}$ will be sober. Then we get a restriction map between $\I$-algebras,
  \[ \I/j \cong \I^{\ist{j}} \to \I^{\ist{i}} \cong \I/i\text{.} \]
  Similar to the proof of \zcref{lem:openpropsober}, the existence of such an $\I$-algebra homomorphism implies $i \le j$. 
\end{proof}

Thus, under (\AxiomSQCI), we may view the generic algebra $\I$ as a subuniverse of \emph{open} propositions via the embedding $\istsym$:

\begin{definition}[Open proposition]
  We say that a proposition $p$ is \emph{open} when $p$ is equivalent to $\ist{i}$ for some $i$ in $\I$.
\end{definition}

Under (\AxiomSQCI), by \zcref{lem:intconserve} the $i$ that $p \eq \ist{i}$ will be \emph{unique}, thus being open is itself a proposition. Also, open propositions are evidently closed under finite conjunctions, since $\istsym$ preserves them. 

More generally, we may define the notion of open subtypes:

\begin{definition}[Open subtype]
  A subtype $U$ of $X$ is \emph{open} if for any $x:X$ the proposition $x\in U$ is open.
\end{definition}

If $X$ is sober, open subtypes are easy to classify:

\begin{lemma}[\AxiomSQCI]\label{lem:openofsobergivesalgebra}
  Let $X \cong \spec A$ be sober. Then any open subset of $X$ is of the form $D_a$ for some $a:A$.
\end{lemma}
\begin{proof}
  Let $U$ be an open subtype of $X$, so that for any $x:X$ there exists some (by conservativity, necessarily unique) element $i:\I$ that $x \in U \eq \ist{i}$.
  Hence we have a map $\qsi a \colon X \to \I$ with $x \in U \eq \ist{\qsi ax}$. Because $X$ is sober, we may identify $\qsi a$ with an element $a:A$, where $\qsi ax = xa$. This way,
  \[ \fa xX x \in U \eq x \models a\text{,} \]
  and thus $U$ is the subtype $D_a$.
\end{proof}

\begin{proposition}[\AxiomSQCI]\label{prop:Idominance}
  The type of opens $\I$ forms a dominance.
\end{proposition}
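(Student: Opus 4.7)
The plan is to verify the three standard conditions for $\istsym\colon \I \to \pp$ to be a dominance: (i) $\istsym$ is an embedding into the subuniverse of propositions; (ii) $\ist{1}$ holds; and (iii) open propositions are closed under dependent conjunction, i.e.\ for every $i : \I$ and every $j : \ist{i} \to \I$ there is some $k : \I$ (necessarily unique, by conservativity) with $\ist{k} \eq \sum_{x : \ist{i}} \ist{j(x)}$. Condition~(i) is exactly \Cref{lem:intconserve}, and (ii) is immediate from the fact that $1$ is the top element of $\I$ under $\istsym$.

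For the main condition (iii), the approach is to transpose the dependent family $j$ into an honest element of $\I$ by means of quasi-coherence. Fix $i : \I$ and $j : \ist{i} \to \I$. By \Cref{lem:openpropaffine} combined with (\AxiomSQCI), the proposition $\ist{i}$ is affine with $\ist{i} \cong \spec \I/i$, and $\I/i$ is quasi-coherent, so the counit is an $\I$-algebra isomorphism $\I/i \cong \I^{\ist{i}}$. Propositional stability of $\T$ identifies $\I/i$ with the slice $\cv i \subseteq \I$; then, by unwinding the counit and noting that, under the assumption $\ist{i}$, the unique $\I$-algebra homomorphism $\I/i \to \I$ must be the inclusion $\cv i \hookrightarrow \I$, one sees that this isomorphism acts by $k \mapsto \lambda x.\, k$. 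Consequently, the dependent family $j$ corresponds to a unique element $k : \I$ with $k \le i$ such that $j(x) = k$ for every $x : \ist{i}$.

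I then claim this $k$ witnesses the dependent conjunction. If $\ist{k}$, then $k = 1$ together with $k \le i$ forces $\ist{i}$, and for the unique $x : \ist{i}$ we have $j(x) = k = 1$, yielding a term of $\sum_{x : \ist{i}} \ist{j(x)}$. Conversely, from any $x : \ist{i}$ together with $\ist{j(x)}$ we simply compute $k = j(x) = 1$. The conceptually non-trivial step is the concrete description of the quasi-coherence isomorphism $\I/i \cong \I^{\ist{i}}$ as the ``constant-function'' map $k \mapsto \lambda x.\, k$: it is exactly this combination of (\AxiomSQCI) and propositional stability that lets us uncurry a family $\ist{i} \to \I$ into a genuine element of $\I$, after which the biconditional is routine.
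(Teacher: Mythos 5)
Your proof is correct and follows essentially the same route as the paper: the paper reduces the dominance condition to showing that an open subtype of an open proposition is open, and then applies \Cref{lem:openofaffinegivesalgebra} to the affine space $\ist{i}\cong\spec\I/i$ to identify that subtype with $D_j$ for some $j\le i$, which is exactly your step of using the quasi-coherence isomorphism $\I/i\cong\I^{\ist{i}}$ (together with propositional stability) to uncurry the family $\ist{i}\to\I$ into an element $k\le i$. The only difference is presentational: you inline the content of that lemma and verify the biconditional $\ist{k}\eq\sum_{x:\ist{i}}\ist{j(x)}$ directly, whereas the paper concludes via $D_j\eq\spec(\I/i)/j\eq\spec\I/j$.
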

\begin{proof}
  Since $\istsym\colon \I \hook \pp$ by definition is closed under finite meets, it suffices to show that an open subtype of an open proposition is again an open proposition. Suppose $p$ is an open proposition and $q$ is an open subtype of $p$. By definition $p \eq \ist{i}$ for some $i:\I$. Since $\ist{i}$ is sober, \zcref{lem:openofsobergivesalgebra} implies that $q$ is $D_j$ for some $j : \I/i$. Since $\T$ is propositionally stable, equivalently $j$ can be viewed as an element $j : \I$ with $j \le i$. Hence,
  \[ q \eq D_j \eq \spec(\I/i)/j \eq \spec\I/j\text{,} \]
  which implies $q$ is also an open proposition.
\end{proof}

\begin{remark}[Dominance without choice]\label{rem:dominancewithoutchoice}
  The related work of \citet{Cherubini_Coquand_Hutzler_2024} contains a similar construction of a dominance from a generic \emph{ring}. However, the proof there relies on certain classical axiom, which they call \emph{Zariski local choice}. The reason that the construction here does not require a similar assumption is precisely due to conservativity of $\I$: the dominance is $\I$ itself, rather than an image of $\I$. In this way, we can transform an open subtype of $X$ to a map $X \to \I$, which by quasi-coherence can be further identified as an element of the algebra if $X$ is sober. 
\end{remark}

We end this section by showing that $\I$ being stably spatial, i.e.\ all slices of $\I$ are spatial, is a \emph{characterisation} the dominance property of $\I$. Under this assumption, we can prove the (surprising) fact that all spatial algebras will indeed be stably spatial:

\begin{theorem}\label{thm:chardominance}
  The following are equivalent:
  \begin{enumerate}
    \item (\AxiomSQCI) holds, i.e.\ $\I$ is stably spatial;
    \item $\I$ form a dominance;
    \item All spatial $\I$-algebras are stably spatial.
  \end{enumerate}
\end{theorem}
\begin{proof}
  We have shown (1) $\nt$ (2) in~\zcref{prop:Idominance}. Evidently, (3) $\nt$ (1) since by~\zcref{exm:intervalqc} $\I$ is spatial. We first show (2) $\nt$ (1). Suppose $\I$ is a dominance and take any $i:\I$. We construct an explicit inverse $s \colon \I^{\ist{i}} \to \I/i$ by defining $s$ as follows,
  \[ s(j) \coloneq \sum_{p:\ist i} j(p). \]
  By construction $s(j) \le i$, since if $(p,x) : s(j)$, then $p : \ist i$ which implies $i = 1$, hence $x \le i$. To show this is an inverse of $\eta \colon \I/i \to \I^{\ist i}$, for any $j \le i$,
  \[ s\eta(j) \eq \sum_{p:\ist i}j \eq j. \]
  On the other hand, for any $j:\I^{\ist i}$, to show $j = \eta s(j)$ by function extensionality it suffices to assume $i = 1$, viz.\ there is $p:\ist i$. In this case, 
  \[ \eta s(j) = \eta j(p) = \ld{q}{\ist i} j(p). \]
  But in this case, for any $q:\ist i$, $q = p$, thus $j(q) = j(p)$, which by function extensionality again it follows that 
  \[ \eta s(j) = j. \]
  This shows $s$ is indeed an inverse of $\eta$, hence $\I/i$ is spatial.

  Now we show (1) $\nt$ (3). Suppose $A$ is a spatial $\I$-algebra. Since the slice $A/i$ is the quotient algebra $A/i = 1$, it follows that $A/i \cong A \otimes \I/i$. By~\zcref{specrightadj}, we have
  \[ \spec(A/i) \cong \spec A \times \spec{\I/i} \cong \spec A \times \ist i. \]
  Since $\I/i$ is spatial, the following canonical isomorphisms hold,
  \[ \opens{\spec(A/i)} \cong (\I/i)^{\spec A} \cong \scomp{U:\opens{\spec A}}{\fa x{\spec A} U(x) \le i}. \]
  Since $A$ is spatial, $A \cong \opens{\spec A}$. In fact, spatiality also implies that there are ``enough points'' in the spectrum to detect its order (cf.~\zcref{lem:canon-beh-coincide}),
  \[ \opens{\spec(A/i)} \cong \scomp{a:A}{\fa x{\spec A} x(a) \le i} \cong \scomp{a:A}{a \le i} \cong A/i. \]
  This shows $A/i$ is also spatial.
\end{proof}

\section{Partial map classifier}\label{sec:lifting}
Given the interval $\I$, we can construct internally the partial map classifier. For any type $X$, it is given by
\[ L X \coloneq \sum_{i:\I}X^{\ist{i}}\text{.} \]
The functoriality is easy to express: For any $f \colon X \to Y$, we have
\[ L f(i,x) \coloneq (i,\ld{w}{\ist{i}}fxw)\text{.} \]
There is an evident unit $\eta_L \colon X \to L(X)$, where
\[ \eta_L(x) \coloneq (1,\ld\_ 1 x)\text{.} \]
The unit $\eta_L$ then classifies the partial maps out of $X$ with an \emph{open} domain. If $\I$ forms a dominance, then there is also a multiplication $\mu \colon L(L X) \to L X$, where $\mu$ takes any $(i,u)$ with $i : \I$ and $u \colon \ist{i} \to L X$ first to $(j,x)$, where $j$ is the dependent sum
\[ j \coloneq \sum_{w:\ist{i}} (uw)_0\text{,} \]
and $x \colon \ist{j} \to X$ is the partial element such that for $w : \ist{i}$ and $v : (uw)_0$
\[ x(w,v) \coloneq (uw)_1(v)\text{.} \]

\begin{example}
  By definition, it is easy to see that
  \[ L 1 \cong \sum_{i:\I}\ist{i} \cong \I\text{.} \]
\end{example}

For synthetic domain theory, the object of particular importance is the partial map classifier of the interval $\I$ itself. The following computation works more generally for all stably spatial algebras:

\begin{proposition}\label{prop:liftingofalgebra}
  If $A$ is stably spatial, then we have
  \[ L A \cong \sum_{i:\I}A/i\text{.} \]
\end{proposition}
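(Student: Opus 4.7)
The plan is to prove the equivalence fiberwise in $i : \I$. Specifically, I will show that for each $i$ we have a canonical equivalence $A^{\ist{i}} \cong A/i$; the dependent sum over $\I$ then yields $L A \cong \sum_{i:\I} A/i$ directly from the definition $L A \coloneq \sum_{i:\I} A^{\ist{i}}$.

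The first step is to identify $\spec(A/i)$ as a subspace of $\spec A$. An $\I$-algebra map $x : A \to \I$ factors through the quotient $A/i$ precisely when $x$ sends the image of $i$ in $A$ to $1 : \I$. But because $x$ is a morphism under $\I$, the composite $\I \to A \xrightarrow{x} \I$ is the identity, so $x$ sends the image of $i$ back to $i$ itself. Hence the factoring condition is simply $\ist{i}$, independently of $x$, yielding a natural equivalence
\[ \spec(A/i) \cong \ist{i} \times \spec A. \]

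The second step is to invoke stable quasi-coherence. By assumption the principal quotient $A/i$ is quasi-coherent, so
\[ A/i \cong \opens\spec(A/i) \cong \I^{\ist{i} \times \spec A}. \]
Currying and using the quasi-coherence of $A$ (so that $\I^{\spec A} \cong A$), we obtain
\[ A/i \cong (\I^{\spec A})^{\ist{i}} \cong A^{\ist{i}}, \]
and this equivalence is manifestly natural in $i$. Summing over $i:\I$ then produces the desired isomorphism $L A \cong \sum_{i:\I} A/i$.

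The main obstacle is the first step, namely the computation of $\spec(A/i)$: one must be careful that the constraint cutting out $\spec(A/i)$ inside $\spec A$ collapses to the proposition $\ist{i}$ because every $\I$-algebra homomorphism is required to be under $\I$. Once this is observed, the remainder reduces to routine applications of quasi-coherence, the definition of $\opens$, and the exponential law.
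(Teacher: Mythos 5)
Your proof is correct and takes essentially the same route as the paper's: both establish the fiberwise equivalence $A^{\ist{i}} \cong A/i$ by applying quasi-coherence of the principal quotient $A/i$ (guaranteed by stable quasi-coherence) together with quasi-coherence of $A$ itself, then currying the exponential and summing over $i:\I$. The only cosmetic difference is that you identify $\spec(A/i)\cong\ist{i}\times\spec A$ directly from the universal property of the quotient and the fact that homomorphisms are under $\I$, whereas the paper routes through the decomposition $A/i\cong A\otimes\I/i$, the limit-preservation of $\spec$, and \Cref{lem:openpropaffine}; the two computations give the same identification.
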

\begin{proof}
  By assumption for $i : \I$, the quotient $A/i$ is again spatial. Now notice that we do have
  \[ A/i \cong A \otimes \I/i\text{.} \]
  By \zcref{specrightadj} and spatiality of $A/i$ and $A$,
  \[ A/i \cong \I^{\spec(A \otimes \I/i)} \cong \I^{\spec A \times \spec\I/i} \cong A^{\spec \I/i} \cong A^{\ist{i}}\text{.} \]
  This way, it follows that 
  \[ L A \cong \sum_{i:\I}A^{\ist{i}} \cong \sum_{i:\I}A/i. \qedhere \]
\end{proof}

In other words, an element $(i,a) : L A$ is simply a pair $i : \I$ and $a : A$ that $a \le i$. This way, we can easily compute $L \I$:

\begin{corollary}[\AxiomSQCI]
  $L \I \cong \Delta^2$.
\end{corollary}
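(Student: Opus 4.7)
The plan is to apply the preceding proposition to $A = \I$, then unpack the definitions. First, I need to verify that (\AxiomSQCI) makes $\I$ stably quasi-coherent so that \Cref{prop:liftingofalgebra} applies. Every finitary quotient $\I/a = b$ (for $a,b : n \to \I$) can be rewritten as a principal quotient $\I/c$ where $c = \bigwedge_{k<n}(a_k \Leftrightarrow b_k)$ once we observe that in a propositionally stable theory, imposing $a = b$ on a lattice algebra is controlled by the joint truth of the biconditionals; more directly, the axiom (\AxiomSQCI) itself asserts precisely that every finitary quotient of $\I$ is quasi-coherent, so this is immediate by hypothesis.

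Next, I would apply \Cref{prop:liftingofalgebra} to obtain
\[ L\I \cong \sum_{i:\I} \I/i\text{.} \]
By propositional stability of $\T$ (\Cref{defn:propositional}), the quotient $\I/i$ is just the downset ${\cv i} = \scomp{j:\I}{j \le i}$, so
\[ L\I \cong \sum_{i:\I}\scomp{j:\I}{j\le i} \cong \scomp{(i,j) : \I \times \I}{j \le i}\text{.} \]

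Finally, I compare with the definition of $\Delta^2$ in \Cref{exm:simplicesaffine}, which gives $\Delta^2 \cong \scomp{(i_1,i_2):\I\times\I}{i_1 \ge i_2}$. The map $(i,j) \mapsto (i,j)$ with $j \le i$ corresponds directly to a pair $(i_1,i_2) = (i,j)$ satisfying $i_1 \ge i_2$, exhibiting the desired equivalence $L\I \cong \Delta^2$. There is no real obstacle here: the corollary is essentially bookkeeping once \Cref{prop:liftingofalgebra} is in hand, with the only substantive step being the invocation of (\AxiomSQCI) to guarantee stable quasi-coherence of $\I$.
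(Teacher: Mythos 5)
Your proposal is correct and follows the paper's own proof essentially verbatim: apply \Cref{prop:liftingofalgebra} (noting that (\AxiomSQCI) is by definition the statement that $\I$ is stably quasi-coherent, so the detour through biconditionals is unnecessary), then use propositional stability to identify $\I/i$ with the downset $\scomp{j:\I}{j\le i}$ and hence the dependent sum with $\Delta^2$. No gaps.
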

\begin{proof}
  By \zcref{prop:liftingofalgebra} and the assumption (\AxiomSQCI),
  \[ L\I \cong \sum_{i:\I}\I/i \cong \scomp{i,j : \I}{i \ge j} \cong \Delta^2\text{.} \]
  The second equivalence is again due to $\T$ being propositionally stable.
\end{proof}

\begin{remark}[Algebra vs.\ geometry]\label{rem:alggeoI}
  One interesting fact to note here is that, though $L\I$ by computation is a dependent sum of algebras, it is naturally equivalent to a \emph{spectrum}. In some sense the source is the assumption that $\T$ is propositionally stable, which allows us to identify the algebraic object $\I/i$ as a subset $\scomp{j : \I}{j \le i}$. There will be more examples of such nature once we work more specifically with bounded distributive lattices; cf.\ \zcref{prop:simplicesasalgebra}.
\end{remark}

More generally, for domain theoretic applications we would want to compute the partial map classifier of $\Delta^2$, or $\Delta^n$ for any $n:\N$. For this purpose, we observe that we can more generally compute them for any spectra (not only sober spaces):

\begin{proposition}[\AxiomSQCI]\label{prop:liftofsober}
  For an $\I$-algebra $A$, we have
  \[ L\spec A \cong \sum_{i:\I}\I\alg(A,\I/i)\text{.} \]
\end{proposition}
\begin{proof}
  By (\AxiomSQCI) and \zcref{lem:openpropsober}, $\ist{i}$ is sober for any $i:\I$. By the duality result in \zcref{prop:duality}, we have
  \[ L\spec A \cong \sum_{i:\I}(\spec A)^{\ist{i}}\cong \sum_{i:\I}\I\alg(A,\I/i). \qedhere \]
\end{proof}

% Thus, motivated by domain theory, we find ourselves naturally move towards the following axiomatisation:

% \begin{axiom}[\AxiomSQCF]
%   All finitely generated free $\I$-algebras, i.e.\ $\I[n]$ for $n : \N$, are stably spatial.
% \end{axiom}

% Equivalently, (\AxiomSQCF) says that any finitely presented $\I$-algebra is (stably) spatial, where an $\I$-algebra is finitely presented if it is merely of the form $\I[n]/s=t$ with $s,t \colon m \to \I[n]$ for some finite $n,m$.

% Of course, (\AxiomSQCF) implies (\AxiomSQCI) when taking $n$ to be 0. Furthermore, (\AxiomSQCF) implies that the simplices $\Delta^n$ are now (stably) sober as well. This way, we can indeed compute their lifts:

\begin{corollary}[\AxiomSQCI]
  For any $n : \N$, we have
  \[ L\Delta^n \cong \Delta^{n+1}\text{,} \]
  and the unit $\eta_L \colon \Delta^n \hook L\Delta^n \cong \Delta^{n+1}$ takes $i_1 \ge \cdots \ge i_n$ in $\Delta^n$ to $1 \ge i_1 \ge \cdots \ge i_n$ in $\Delta^{n+1}$. 
\end{corollary}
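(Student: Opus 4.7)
The plan is to apply the previous corollary (\Cref{prop:liftofaffine}) directly, using the explicit presentation of $\Delta^n$ as a spectrum recorded in \Cref{exm:simplicesaffine}. Writing $A_n \coloneq \I[\ms{i}_1,\ldots,\ms{i}_n]/\ms{i}_1 \ge \cdots \ge \ms{i}_n$, so that $\Delta^n \cong \spec A_n$, we obtain
\[
L\Delta^n \cong \sum_{i:\I}\I\alg(A_n,\I/i).
\]

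The key step is then to unfold what an $\I$-algebra homomorphism $A_n \to \I/i$ is. By the universal property of the presentation, such a homomorphism corresponds to a tuple of elements $j_1,\ldots,j_n : \I/i$ satisfying $j_1 \ge \cdots \ge j_n$. Since $\T$ is propositionally stable, elements of $\I/i$ are precisely elements $j : \I$ with $j \le i$ (the same identification that was used in the proof of the corollary computing $L\I \cong \Delta^2$). Hence a point of $\sum_{i:\I}\I\alg(A_n,\I/i)$ is a tuple $(i,j_1,\ldots,j_n)$ of elements of $\I$ satisfying $i \ge j_1 \ge \cdots \ge j_n$, which is exactly an element of $\Delta^{n+1}$.

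For the description of the unit, I would trace through the identifications explicitly. By the definition of $\eta_L$, a point $x \in \Delta^n$ is sent to $(1,\lambda\_.x) \in L\Delta^n$. Under the first equivalence this becomes the pair consisting of $1:\I$ together with the constant map $\ist{1} \to \Delta^n$ at $x$; transposing along the duality of \Cref{prop:liftofaffine} gives the homomorphism $A_n \to \I/1 \cong \I$ that classifies $x = (x_1,\ldots,x_n)$. Under the further identification with $\Delta^{n+1}$, this is the tuple $(1,x_1,\ldots,x_n)$, as claimed.

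I do not expect any real obstacle here: the content is entirely bookkeeping once \Cref{prop:liftofaffine} is in hand. The only point requiring mild care is keeping track of the two dualities — the one between generators and evaluation points, and the propositional-stability identification of $\I/i$ with the downset $\cv i$ — so as to confirm that the order conditions assemble into the single chain $i \ge j_1 \ge \cdots \ge j_n$ rather than a pair of separate constraints.
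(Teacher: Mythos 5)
Your proof is correct and follows essentially the same route as the paper's: apply \Cref{prop:liftofaffine} to the presentation of $\Delta^n$ as a spectrum, unfold homomorphisms into $\I/i$ as descending tuples, and use propositional stability to identify $\I/i$ with the downset of $i$, assembling the order constraints into a single chain. Your explicit tracing of the unit is a welcome addition that the paper's proof leaves implicit.
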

\begin{proof}
  By \zcref{prop:liftofsober}, for the spectrum $\Delta^n \cong \spec\I[n]/\ms{i}_1 \ge \cdots \ge \ms{i}_n$,
  \begin{align*}
    L\Delta^n
    &\cong \sum_{i:\I}\I\alg(\I[n]/\ms{i}_1\ge\cdots\ge \ms{i}_n,\I/i) \\
    &\cong \scomp{i_1,\cdots,i_n:\I/i}{i_1 \ge \cdots \ge i_n} \\
    &\cong \scomp{i,i_1,\cdots,i_n:\I}{i \ge i_1 \ge \cdots \ge i_n} \\
    &\cong \Delta^{n+1}
  \end{align*}
  Again the third steps holds since $\T$ is propositionally stable.
\end{proof}

\section{Distributive lattices and locality}\label{sec:locality}

One important finitary axiom for synthetic domain theory is \emph{Phoa's principle}, which we have mentioned in~\zcref{subsec:classtopphoa} is a consequence of the quasi-coherence principle for bounded distributive lattices. 
Thus, we now work with the theory $\mbb D$ of bounded distributive lattices, or more generally the theory of $D$-algebra for some bounded distributive lattice $D$. As indicated in \zcref{rem:sigmaframe}, computation for finitely presented algebras 
will not change if we replace bounded distributive lattices by $\sigma$-frames. Hence, the results here and in \zcref{sec:order-theoretic-structure} will be applicable to $\sigma$-frames as well, where only the properties of finitely presented $\I$-algebras matter.

As mentioned in \zcref{rem:opprop}, the theory of bounded distributive lattices is also \emph{propositionally op-stable}, and similarly for $\sigma$-frames. Hence, dual versions of the previous results will also hold by symmetry, when we exchange $1$ for $0$ and $\wedge$ for $\vee$. Thus, we first record some simple corollaries for the dual structure.

\subsection{The dual dominance and co-partial map classifier}
In this case, $\I$ has a minimal element $0$. Notice that the constant $0$ also induces a predicate on $\I$,
\[ \isfsym\colon \I \to \pp\text{,} \]
which takes any $i : \I$ to $i = 0$. Equivalently, we can view it as the $\ms t$ map for $\I\op$.

\begin{corollary}
  If $\I\op$ is stably spatial, then $\isfsym\colon \I \to \pp$ is an embedding.
\end{corollary}

We will now call propositions in the image of $\isfsym$ \emph{closed}, and a subtype classified by $\isfsym$ a \emph{closed subtype}. Analogously to the previous \zcref{prop:Idominance}, closed propositions also form a dominance if $\I\op$ is stably spatial:

\begin{corollary}\label{cor:dualisdominance}
  If $\I\op$ is stably spatial, then $\isfsym$ forms a dominance.
\end{corollary}

There is an accompanying ``co-partial map classifier'' $T$ for partial maps with \emph{closed} domains, as \citet{hyland1990first} pointed out. Concretely, for any type $X$,
\[ T X \coloneq \sum_{i:\I} X^{\isf{i}}\text{.} \]
And analogously to \zcref{prop:liftingofalgebra}, under the assumption that $\I\op$ is stably spatial, we can explicitly compute 
\[ T\Delta^n \cong \Delta^{n+1}\text{,} \]
where now the unit $\eta_T \colon \Delta^n \hook T\Delta^n \cong \Delta^{n+1}$ takes a sequence $i_1 \ge \cdots \ge i_n$ to $i_1 \ge \cdots \ge i_n \ge 0$ in $\Delta^{n+1}$. 

To maintain this symmetry, we also introduce the following axiom:

\PrintAxiomSQCID

The remaining part of this section will introduce various locality axioms for the interval $\I$, and discuss some of their consequences with the quasi-coherence axiom.

\subsection{Non-triviality}

As a start, a minimal requirement for synthetic domain theory to model divergent computation is for the dominance $\I$ to be closed under falsum. For this, we introduce the following minimal amount of non-triviality:

\PrintAxiomNT

Geometrically, the proposition $0 = 1$ can be identified with the spectrum $\spec\I/0=1$. Hence, (\AxiomNT) states that the unique map $\emp \hook \spec\I/0=1$ is an equivalence.
This way, $\emp \eq \ist{0}$ now will be sober, and it is both an open and closed proposition. As mentioned in introduction, (\AxiomNT) together with a strong enough quasi-coherence principle will imply all of Hyland's axioms for synthetic domain theory. Before we discuss that in \zcref{sec:infdomain}, a minimal amount of quasi-coherence provided by (\AxiomSQCI) already implies a lot of elementary properties. As a first consequence of (\AxiomNT) with (\AxiomSQCI), we can show that $\I$ is almost the Boolean algebra $2$ in the following sense:

\begin{proposition}[\AxiomNT, \AxiomSQCID]\label{prop:filed}
  For any $i : \I$ we have:
  \[ \neg \ist{i} \eq \isf{i}, \quad \neg\isf{i} \eq \ist{i}\text{.} \]
  In particular, the embedding $2 \hook \I$ induced by $0,1$ is $\neg\neg$-dense:
  \[ \fa i\I \dneg(\ist{i} \vee \isf{i})\text{.} \]
\end{proposition}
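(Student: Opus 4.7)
The plan is to establish $\neg\ist{i} \eq \isf{i}$ first; the dual equivalence $\neg\isf{i} \eq \ist{i}$ will then follow by symmetry (using propositional op-stability of bounded distributive lattices from \Cref{rem:opprop}), and the $\dneg$-density claim will be a one-line consequence of classical tautologies inside intuitionistic logic. In both equivalences, the backward direction is essentially immediate from (\AxiomNT): if $i = 0$ and $i = 1$, then $0 = 1$, contradicting non-triviality. So the entire content lies in the forward direction of the first equivalence.

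For the forward direction $\neg\ist{i} \to \isf{i}$, the idea is to squeeze $i$ between two algebraic computations of the quotient $\I/i$ (which here means $\I/i=1$). On the one hand, (\AxiomSQCI) says $\I/i$ is quasi-coherent, so combining this with \Cref{lem:openpropaffine}, which gives $\spec\I/i \eq \ist{i}$, yields an $\I$-algebra isomorphism
\[ \I/i \;\cong\; \opens\spec\I/i \;\cong\; \I^{\ist{i}}. \]
Under the hypothesis $\neg\ist{i}$ we have $\ist{i} \eq \emp$, hence $\I/i \cong \I^{\emp}$, which is the terminal $\I$-algebra (a singleton in which $0 = 1$). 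On the other hand, propositional stability of $\T$ (\Cref{defn:propositional}) presents $\I/i$ concretely as $\cv i$ via the surjection $i \wedge - \colon \I \surj \cv i$; in this presentation the top element is $i$ and the bottom is $0$. Triviality of $\I/i$ therefore reads as the equation $0 = i$ inside $\cv i \hookrightarrow \I$, i.e.\ $\isf{i}$.

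For $\neg\isf{i} \eq \ist{i}$, I would run exactly the same argument with $\I/i=0$ in place of $\I/i=1$. By propositional op-stability, $\I/i=0$ is computed as $\dv i = \scomp{j:\I}{j \ge i}$ via $i \vee -$, whose bottom element is $i$ and top is $1$; the spectrum $\spec \I/i=0$ is an open-style proposition equivalent to $\isf{i}$ (the dual of \Cref{lem:openpropaffine}); (\AxiomSQCI) still applies, since $\I/i=0$ is a finitary quotient of $\I$. Triviality of the resulting algebra then forces $i = 1$.

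Finally, the $\dneg$-density statement follows purely logically: for any proposition $p$, the instance $\dneg(p \vee \neg p)$ of weak excluded middle is provable intuitionistically (assuming the negation and deriving $\neg p$ from it, hence the right disjunct, hence a contradiction). Applying this with $p \coloneq \ist{i}$ and substituting $\neg\ist{i}$ by $\isf{i}$ via the first equivalence gives $\dneg(\ist{i} \vee \isf{i})$, as required. The only step I expect to require care is the identification of the trivial $\I$-algebra $\I^{\emp}$ with the concrete quotient $\cv i$ under propositional stability, but this is essentially the definition of \Cref{defn:propositional} combined with the fact that the empty product in $\I\alg$ is the one-point algebra.
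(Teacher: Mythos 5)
Your proof is correct, but the forward direction takes a genuinely different route from the paper's. The paper derives $\neg\ist{i}\to\isf{i}$ from the conservativity lemma (\cref{lem:intconserve}): under $\neg\ist{i}$ and (\AxiomNT) one has $\ist{i}\eq\emp\eq\ist{0}$, whence $i=0$ by conservativity. You instead unroll the quasi-coherence machinery directly: $\I/i\cong\opens\spec\I/i\cong\I^{\ist{i}}\cong\I^{\emp}$ is the terminal algebra, and propositional stability identifies $\I/i$ with $\cv i$ so that triviality reads off as $i=0$. This is essentially a specialisation of the Nullstellensatz argument that the paper only records later as \cref{lem:nulls}, and every step checks out: $\I/(i=1)$ and $\I/(i=0)$ are both finitary quotients covered by (\AxiomSQCI), the dual case is legitimately available because bounded distributive lattices are propositionally op-stable (\cref{rem:opprop}), and the transfer of $0=1$ along the algebra isomorphism is sound. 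What the paper's route buys is brevity, since \cref{lem:intconserve} is already proved; what yours buys is self-containedness and an explicit geometric reading (empty spectrum forces trivial algebra) that foreshadows \cref{lem:nulls}. Your derivation of the $\dneg$-density claim as an instance of weak excluded middle $\dneg(p\vee\neg p)$ combined with the substitution $\neg\ist{i}\eq\isf{i}$ is the standard argument and is, if anything, stated more cleanly than in the paper.
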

\begin{proof}
  If $\isf{i}$ then $\neg\ist{i}$ by (\AxiomNT). On the other hand, by the conservativity result in \zcref{lem:intconserve} and (\AxiomNT), if $i \neq 1$ then $i = 0$ since $0 \neq 1$. The dual case also follows by symmetry. Now given $i :\I$, if $\neg(\ist{i} \vee \isf{i})$, which implies $\neg\ist{i} \vee \neg\isf{i}$, equivalently $\isf{i} \wedge \ist{i}$, contradictory to (\AxiomNT). Hence, $\neg\neg(\ist{i} \wedge \isf{i})$.
\end{proof}

This allows us to observe that the open and closed propositions have a bijective correspondence between them:

\begin{corollary}[\AxiomNT, \AxiomSQCID]\label{cor:opendnegclose}
  For any proposition $p$, $p$ is open iff $\neg p$ is closed and vice versa. Furthermore, open and closed propositions are $\dneg$-stable.
\end{corollary}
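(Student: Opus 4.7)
The proof is a direct corollary of Proposition~\ref{prop:filed}, whose equivalences $\neg\ist{i} \eff \isf{i}$ and $\neg\isf{i} \eff \ist{i}$ carry all the weight. I would handle the ``Furthermore'' clause first, since it gets used in the biconditional.

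For $\dneg$-stability: if $p \eff \ist{i}$ is open, then two applications of Proposition~\ref{prop:filed} collapse $\neg\neg p \eff \neg\neg \ist{i}$ to $\neg \isf{i} \eff \ist{i} \eff p$, so $p$ is $\dneg$-stable. A symmetric computation (swapping $\istsym$ with $\isfsym$, $1$ with $0$, and $\wedge$ with $\vee$) handles closed propositions.

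For the biconditional, the forward direction is direct: if $p \eff \ist{i}$ then $\neg p \eff \neg\ist{i} \eff \isf{i}$ is closed by Proposition~\ref{prop:filed}, exhibiting $\neg p$ as closed with the same witness; dually, $p \eff \isf{i}$ closed yields $\neg p \eff \ist{i}$ open. For the converse, assume $\neg p \eff \isf{j}$ is closed; negating gives $\neg\neg p \eff \neg\isf{j} \eff \ist{j}$, so $\neg\neg p$ is manifestly open. Combined with the $\dneg$-stability of opens already in hand, this realises $\neg$ as inverse equivalences between the subuniverses of open and of closed propositions, with the $\neg\neg$ round trips absorbed by $\dneg$-stability on both sides. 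The dual converse (``$\neg p$ open $\Rightarrow$ $p$ closed'') is symmetric.

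The only mild subtlety is this last inference: for strictly arbitrary $p$, the converse only manifestly promotes $\neg\neg p$ (not $p$ itself) to an open proposition, and the passage back to $p$ depends on reading the biconditional at the level of subuniverses, where $\dneg$-stability of both sides makes the $\neg\neg$ absorbed. I do not anticipate any other obstacle, since the substantive computations are already packaged in Proposition~\ref{prop:filed}.
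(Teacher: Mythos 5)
Your argument is correct and is exactly the immediate derivation from \Cref{prop:filed} that the paper intends (it supplies no separate proof of this corollary). The subtlety you flag is genuine: for a proposition $p$ that is not $\dneg$-stable, ``$\neg p$ closed'' only yields ``$\neg\neg p$ open'' --- e.g.\ any dense $p \neq \top$ has $\neg p \eq \bot \eq \isf{1}$ closed without $p$ being open --- so the corollary must indeed be read, as you do, as asserting that negation restricts to mutually inverse bijections between the open and the closed propositions (with the round trips absorbed by the $\dneg$-stability you establish first), which is the reading supported by the sentence preceding the corollary in the paper.
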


On the other hand, \zcref{sec:order-theoretic-structure} will show that under the axiom (\AxiomNT), quasi-coherence implies the interval $\I$ will \emph{never} be isomorphic to $2$. 

\subsection{Locality}

A slightly stronger axiom which is common in the practice of domain theory is that the dominance $\I$ should furthermore be closed under all finite disjunctions. For this, we consider the following axiom:

\PrintAxiomL

Geometrically, the second condition states that the following two inclusions are jointly surjective. 
  \[ 
  \begin{tikzcd}
    \I \ar[r, hook, "{i \mapsto (i,1)}"] & \scomp{i,j : \I}{\ist{i\vee j}} & \I \ar[l, hook', "{j \mapsto (1,j)}"']
  \end{tikzcd}
  \]

Under (\AxiomL), the dominance $\I$ will be closed under finite disjunctions. Furthermore, it also implies more elementary properties about $\I$:

\begin{lemma}[\AxiomL, \AxiomSQCID]\label{lem:intisnotBoolean}
  $\I$ has no non-trivial complemented elements, i.e.\ if $i : \I$ is complemented, then $\ist{i} \vee \isf{i}$. 
\end{lemma}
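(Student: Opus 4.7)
The plan is to directly unfold what it means for $i : \I$ to be complemented and extract the disjunction from axiom (\AxiomL). Recall that in a bounded distributive lattice, $i$ is complemented precisely when there is some $j : \I$ with $i \wedge j = 0$ and $i \vee j = 1$. The second equation is nothing but $\ist{i \vee j}$, so the key move is to feed this through the defining equivalence of (\AxiomL).

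Concretely, I would first note that the hypothesis gives us some $j : \I$ witnessing $\ist{i \vee j}$. Applying (\AxiomL) to this, we obtain $\ist{i} \vee \ist{j}$. In the left disjunct we immediately conclude $\ist{i}$, so the interesting case is when $\ist{j}$. Then $j = 1$ by definition of $\istsym$, and combining this with the complementation equation $i \wedge j = 0$ yields $i = i \wedge 1 = i \wedge j = 0$, i.e.\ $\isf{i}$. Either way we have $\ist{i} \vee \isf{i}$, which is what was required.

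The main thing to be careful about is that the disjunction in (\AxiomL) is the propositionally truncated $\vee$, not the sum type, so when we perform the case analysis we must do it inside the truncation (which is fine since the conclusion $\ist{i} \vee \isf{i}$ is itself a proposition). Note that (\AxiomSQCI) is not strictly needed to conclude $j = 1$ from $\ist{j}$, since this is already definitional; the axiom is really background ambient throughout the section for consistency of presentation. I expect no substantive obstacle here — the proof is essentially a one-line unfolding of (\AxiomL) against the equational content of complementation.
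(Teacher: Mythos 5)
Your proof is correct, and it takes a genuinely different --- and more economical --- route than the paper's. The paper first records both $\ist{i\vee j}\eq \ist{i}\vee\ist{j}$ and $\ist{i\wedge j}\eq\ist{i}\wedge\ist{j}\eq\emp$, deduces $\ist{j}\eq\neg\ist{i}$, and then invokes \Cref{prop:filed} to identify $\neg\ist{i}$ with $\isf{i}$; that last step is where (\AxiomSQCI) genuinely enters, since \Cref{prop:filed} rests on the conservativity of $\I$ (\Cref{lem:intconserve}). You instead case-split directly on the truncated disjunction $\ist{i}\vee\ist{j}$ supplied by (\AxiomL) and, in the $\ist{j}$ branch, use the lattice equation $i\wedge j=0$ to compute $i = i\wedge 1 = i\wedge j = 0$. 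Your handling of the truncation is right: the goal $\ist{i}\vee\isf{i}$ is a proposition, so the elimination is legitimate, and being complemented is in any case a proposition since complements in a distributive lattice are unique. The net effect is that your argument establishes the lemma from (\AxiomL) alone, with no use of (\AxiomSQCI) --- your parenthetical observation to this effect is accurate and in fact sharpens the stated hypotheses. What the paper's longer route buys is the additional identification $\ist{j}\eq\isf{i}$ of the complement's open proposition with the closed proposition of $i$, which fits the surrounding discussion of the open/closed correspondence (\Cref{cor:opendnegclose}) but is not needed for the statement itself.
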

\begin{proof}
  Suppose $i$ has a complement $j$. Then by (\AxiomL) we have
  \[ 1 \eq \ist{i\vee j} \eq \ist{i} \vee \ist{j}\text{,} \]
  and similarly,
  \[ \emp \eq \ist{i\wedge j} \eq \ist{i} \wedge \ist{j}\text{.} \]
  It follows that $\ist{j} \eq \neg \ist{i}$, and by \zcref{prop:filed}, $\ist{j} \eq \isf{i}$. Hence, $\ist{i} \vee \isf{i}$.
\end{proof}

\zcref{lem:intisnotBoolean} above allows us to realise 2 as a \emph{spectrum}:

\begin{example}\label{exm:2issober}
  Consider the algebra $B \coloneq \I[\ms{i},\ms{j}]/{\ms{i}\wedge \ms{j} =0,\ms{i}\vee \ms{j} = 1}$. By construction, $B$ classifies complemented elements in $\I$-algebras. Thus, assuming (\AxiomL) and (\AxiomSQCI), by \zcref{lem:intisnotBoolean}
  \[ \spec B \cong \scomp{i,j : \I}{i \wedge j = 0 \wedge i \vee j = 1} \cong 2\text{.} \]
  In particular, this means that for any sober space $X \cong \spec A$, $2^X$ will be equivalent to the set of complemented elements in $A$ by \zcref{prop:duality}.
\end{example}

As another example of a new spectrum (\AxiomL) allows us to define:

\begin{example}\label{exm:hornsober}
  Consider the right outer horn $\Lambda^2_2$, which we may describe by the following pushout:
  \[
    \begin{tikzcd}
      1 \ar[d, "1"'] \ar[r, "1"] & \I \ar[d] \\
      \I \ar[r] & \Lambda^2_2
      \arrow["\lrcorner"{anchor=center, pos=0.125, rotate=180}, draw=none, from=2-2, to=1-1]
    \end{tikzcd}
  \]
  By viewing $\Lambda^2_2$ as a subspace of $\I^2$, we may identify it as follows:
  \[ \Lambda^2_2 \cong \scomp{(i,j) : \I^2}{\ist{i} \vee \ist{j}}\text{.} \]
  Assuming (\AxiomL), $\ist{i} \vee \ist{j} \eq \ist{i\vee j}$, it follows that 
  \[ \Lambda^2_2 \cong \spec \I[\ms{i},\ms{j}]/\ms{i} \vee \ms{j}\text{.} \]
\end{example}

Axiom (\AxiomL) clearly has a dual counterpart:

\PrintAxiomCL

Similarly to \zcref{exm:hornsober}, this allows one to identify the left outer horn $\Lambda^2_0$ as a spectrum. Of course we can combine the axioms (\AxiomL) and (\AxiomCL). 

\subsection{Linearity and coskeletality}\label{sec:linearity-and-coskeletality}

An axiom even stronger than both (\AxiomL) and (\AxiomCL) is the simplicial axiom (\AxiomSL):

\PrintAxiomSL

Geometrically, the simplicial axiom states that the two simplices $\Delta^2,\Delta_2$ covers the square $\Delta^2 \cup \Delta_2 \cong \I^2$.
The strongest locality axiom states that the simplicial structure is truncated to level 1:

\PrintAxiomOneCS

Geometrically, (\AxiomOneCS) additionally requires the canonical inclusion $\partial\Delta_2 \hook \Delta_2$ from the boundary $\partial\Delta_2$ to the 2-simplex $\Delta_2$ is an equivalence, where $\partial\Delta_2$ type-theoretically is simply
\[ \partial\Delta_2 \coloneq \scomp{i,j : \I}{\isf{i} \vee (i = j) \vee \ist{i}}\text{.} \]

In general we will at most work with the weakest locality axiom (\AxiomNT). However, we will show in \zcref{sec:local} that $\I$, hence all spatial algebras and spectra, will be \emph{right orthogonal} to the maps that define these local principles as shown above. We will see in \zcref{sec:model} that these axioms corresponds to various coverages we can put on the presheaf classifying topos, and the orthogonality result particularly implies that $\I$ will belong to these subtopoi.

\section{Order-theoretic structure on algebras and spaces}\label{sec:order-theoretic-structure}

In this section, we describe several important local orderings on algebras and spaces, summarised in \zcref{table:orders}.

\begin{table}[ht]
  \begin{align*}
    a\le_A b &\eq a\wedge b = a
    \tag{canonical order}
    \\
    a\behle[A]b &\eq 
    \fa{x}{\spec{A}}x(a)\le_\I x(b)
    \tag{behavioural preorder}
    \\
    x\obsle[X]y &\eq \fa{U}{\opens{X}} U(x)\le_\I V(x)
    \tag{observational preorder}
    \\ 
    x\satle[\spec{A}]y &\eq \fa{a}{A} x(a)\le_\I y(a)
    \tag{satisfaction order}
  \end{align*}
  \begin{tabular}{ll}
  \end{tabular}
  \caption{List of important (pre)order structures on algebras and spaces. The \emph{canonical} and \emph{behavioural} (pre)orders are defined on all $\I$-algebras; the \emph{observational} preorder is defined on all types; and the \emph{satisfaction} preorder is defined just on spectra.}
  \label{table:orders}
\end{table}

\subsection{Local ordering of algebras}

Every $\I$-algebras has a built-in ``canonical'' partial ordering.

\begin{definition}[Canonical partial order]
  We shall write $\le_A$ for the \emph{canonical} partial ordering of an $\I$-algebra $A$ defined equivalently by meets or joins:
  \[ 
    a\le_A b \eq (a\wedge b = a) \eq (a\vee b = b)
  \]
\end{definition}

\begin{example}
  The canonical partial ordering of an algebra of observations $\opens{X}$ is given pointwise because $\opens{X}$ is the product of algebras $\prod_{x\in X}\I$:
  \begin{align*}
    U\le_{\opens{X}} V 
    &\eq \fa{x}{X} U(x)\le_\I V(x)
  \end{align*}
\end{example}

In addition to the canonical partial order, there is a local preordering of algebras that mirrors the observational order for spaces.

\begin{definition}[Behavioural preorder]
  For any $\I$-algebra $A$, we define the \emph{behavioural preorder} on $A$ as follows:
  \begin{align*}
    a \behle[A] b &\eq \fa{x}{\spec{A}} x(a) \le_{\I} x(b)
  \end{align*}
\end{definition}

To relate the behavioural preorder to the canonical partial ordering, we first define a slight weakening of the quasi-coherence principle.

\begin{definition}
  We say that an $\I$-algebra $A$ is \emph{pre-spatial} when the counit component $\iota_A\colon A\to \opens\spec{A}$ is an embedding.
\end{definition}

\begin{proposition}\label{lem:canon-beh-coincide}
  The canonical partial order of any pre-spatial $\I$-algebra $A$ is its behavioural preorder.
\end{proposition}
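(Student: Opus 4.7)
The plan is to prove the two directions of the equivalence $a \le_A b \eq a \behle[A] b$ separately. The forward direction does not require pre-quasi-coherence at all, and the reverse direction is exactly where the embedding hypothesis on $\iota_A$ does its work.

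First I would establish the forward implication: if $a \wedge b = a$, then for any $x : \spec A$ (which, being an element of $\I\alg(A,\I)$, is a homomorphism of $\I$-algebras) we have $x(a) \wedge x(b) = x(a \wedge b) = x(a)$, so $x(a) \le_\I x(b)$. This direction simply uses that $\wedge$-preserving maps reflect nothing but preserve the canonical order; no hypothesis on $A$ is needed.

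Second, for the reverse implication, I would argue as follows. Suppose $a \behle[A] b$, so that $x(a) \le_\I x(b)$ for all $x : \spec A$. Equivalently, $x(a \wedge b) = x(a) \wedge x(b) = x(a)$ for every $x : \spec A$. Unfolding the definition of $\iota_A \colon A \to \opens \spec A = \I^{\spec A}$, which sends $c : A$ to the function $x \mapsto x(c)$, this says exactly that $\iota_A(a \wedge b)$ and $\iota_A(a)$ are equal as elements of $\opens \spec A$ (using function extensionality). Because $A$ is pre-quasi-coherent, $\iota_A$ is an embedding, so we conclude $a \wedge b = a$, i.e.\ $a \le_A b$.

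There is essentially no obstacle here; the argument is a direct unfolding of the definitions, and the only subtle point is noticing that the behavioural preorder, rephrased via $a \wedge b = a$ rather than via inequality, transports across $\iota_A$ pointwise and hence becomes an equation in $\opens \spec A$ that the embedding $\iota_A$ reflects. I would also remark that this proposition explains \emph{why} the behavioural preorder deserves its name in the quasi-coherent setting: two elements are identified by all $\spec A$-valued observations precisely when they are equal in $A$.
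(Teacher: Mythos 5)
Your proof is correct and follows essentially the same route as the paper: the paper writes the argument as a single chain of equivalences whose first step ($a \le_A b \eq \iota_A(a) \le_{\opens\spec A} \iota_A(b)$) packages both your directions—preservation of the order because $\iota_A$ is a homomorphism, and reflection because it is an embedding—after which the pointwise order on $\opens\spec A$ is unfolded to the behavioural preorder. Your split into two implications, with the observation that only the reverse direction uses pre-quasi-coherence, is a slightly more explicit presentation of the identical content.
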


\begin{proof}
  For $a,b:A$ we have the following sequence of equivalences:
  \begin{align*}
    a\le_A b 
    &\eq \iota_A(a) \le_{\opens\spec{A}} \iota_A(a)
    \\ 
    &\eq \fa{x}{\spec{A}} x(a) \le_{\I} x(b)
    \\ 
    &\eq a\behle[A]b
  \end{align*}
  The first equivalence requires that $A$ is pre-spatial.
\end{proof}

\subsection{Local ordering of spaces}

The \emph{observational preorder} is a classical notion in synthetic domain theory (cf.\ \citet{PhoaWesleyKym-Son1991DtiR,hyland1990first}):

\begin{definition}[Observational preorder]\label{defn:specialisation}
  For any type $X$, we define the \emph{observational preorder} on $X$ as follows:
  \begin{align*}
    x\obsle[X] y 
    &\eq \fa{U}{\opens{X}} U(x)\le_\I U(x)
  \end{align*}
\end{definition}

By definition, the observational preorder is reflexive and transitive. As already observed in \emph{loc.\ cit.}, one important property of the observational preorder is that \emph{every} map is monotone w.r.t.\ this order:

\begin{lemma}\label{lem:anymapmonotoneintriscorder}
  For $f \colon X \to Y$, $x \obsle y$ in $X$ implies $fx \obsle fy$ in $Y$.
\end{lemma}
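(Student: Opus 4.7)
The plan is to prove monotonicity by pulling back observations along $f$. Concretely, suppose we are given $f\colon X\to Y$ and $x\obsle[X] y$, and we wish to establish $fx\obsle[Y] fy$. Unfolding \Cref{defn:specialisation}, the goal amounts to showing that for every observation $V\colon \opens{Y}$, we have $V(fx) \le_\I V(fy)$.

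The key move is to observe that precomposition with $f$ sends observations on $Y$ to observations on $X$: given $V\colon Y\to \I$, the composite $V\circ f\colon X\to \I$ is an element of $\opens{X}$. Applying the hypothesis $x\obsle[X] y$ to the particular observation $U\coloneq V\circ f$, we obtain $(V\circ f)(x) \le_\I (V\circ f)(y)$, which is definitionally the desired inequality $V(fx) \le_\I V(fy)$. Since $V$ was arbitrary, this establishes $fx\obsle[Y] fy$.

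There is essentially no obstacle here: the proof is a one-line unfolding, and it relies only on the fact that $\opens{(-)} = \I^{(-)}$ is contravariantly functorial in an evident way. In particular, no quasi-coherence or locality hypothesis is needed — this is a purely formal consequence of the definition of $\obsle$ as the preorder induced on $X$ by the family of all maps $X\to \I$, analogous to the way the specialisation order is induced on a topological space by its open sets.
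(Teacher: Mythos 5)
Your proof is correct and is exactly the paper's argument: the paper's proof reads ``this simply follows from compositionality of functions,'' and your precomposition of observations $V\mapsto V\circ f$ is precisely that compositionality spelled out. No further comment is needed.
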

\begin{proof}
  This simply follows from compositionality of functions.
\end{proof}

We can very easily classify the observational preorder on a set with decidable equality.
%% JMS: Removed "classical set", because maybe it could be argued that classical sets should also be projective?

\begin{lemma}[\AxiomNT]\label{lem:obs-preord-on-set-with-decidable-equality}
  If $M$ has decidable equality, then the observational preorder on $M$ is discrete, in the sense that for $n,m : M$,
  \[ n \obsle[M] m \to n = m\text{.} \]
\end{lemma}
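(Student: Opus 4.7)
The plan is to use decidable equality on $M$ to build, for each $n:M$, a characteristic function $\chi_n : M \to \I$ valued in $\{0,1\}\subseteq \I$, and then feed this observable into the hypothesis $n \obsle[M] m$ to force $n = m$.

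First, fix $n, m : M$ with $n \obsle[M] m$, so that by definition $\fa{U}{\I^M} U(n) \le_\I U(m)$. Using decidable equality, define $\chi_n : M \to \I$ by $\chi_n(x) \coloneq 1$ when $x = n$ and $\chi_n(x) \coloneq 0$ otherwise; this is a well-defined element of $\opens{M}$. Applying the hypothesis to $U \coloneq \chi_n$ gives $1 = \chi_n(n) \le_\I \chi_n(m)$, and since $1$ is the top element of $\I$, this forces $\chi_n(m) = 1$.

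Now case-split on the decidable proposition $n = m$. If $n = m$ we are done. Otherwise $n \neq m$, whence $\chi_n(m) = 0$ by definition, so combining with the previous paragraph we obtain $0 = 1$ in $\I$, contradicting (\AxiomNT). Hence $n = m$.

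There is no real obstacle: the entire argument reduces to noticing that decidable equality is exactly what is needed to exhibit a separating observable in $\I$, and that (\AxiomNT) suffices to rule out the off-diagonal case. Note that we do not need any quasi-coherence assumption here, only the non-triviality $0 \neq 1$.
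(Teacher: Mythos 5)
Your proof is correct and follows essentially the same route as the paper: decidable equality is used to build a separating observable $U$ with $U(n)=1$, $U(m)=0$ in the case $n\neq m$, the hypothesis $n\obsle[M]m$ then forces $0=1$ contradicting (\AxiomNT), and decidability of equality upgrades the resulting $\neg\neg(n=m)$ to $n=m$. The only cosmetic difference is that you case-split on $n=m$ up front rather than deriving the double negation explicitly.
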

\begin{proof}
  Since $M$ has decidable equality, we can define functions by case distinction. If $n \neq m$, we can construct a function $U : M \to \I$ with $U(n) = 1,U(m) = 0$, contradicting with $n \obsle[M] m$. Hence, $\neg\neg (n=m)$, and with decidable equality this implies $n = m$.
\end{proof}

\begin{corollary}[\AxiomNT]\label{cor:connectedpreservediscrete}
  If the observational preorder on $X$ has a least or greatest element, then $X$ is \emph{internally connected} in the sense that $X$ is right orthogonal to $M \to 1$ for any set $M$ with decidable equality.
\end{corollary}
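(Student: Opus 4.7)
The plan is to reduce the orthogonality condition to showing that, for every set $M$ with decidable equality, every function $f \colon X \to M$ is constant—which is what the internal connectedness statement amounts to. Without loss of generality, I will assume $X$ has a least element $\bot$ with $\bot \obsle[X] x$ for every $x : X$; the case of a greatest element will follow symmetrically, with $\top$ replacing $\bot$ throughout.

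Given any $f \colon X \to M$, the proof is a direct two-step application of the preceding lemmas. First, by \Cref{lem:anymapmonotoneintriscorder}, $f$ is monotone with respect to the observational preorder, so the universal lower bound $\bot \obsle[X] x$ transfers to $f(\bot) \obsle[M] f(x)$ for every $x : X$. Second, by \Cref{lem:obs-preord-on-set-with-decidable-equality}, the observational preorder on $M$ is discrete (because $M$ has decidable equality), which forces $f(\bot) = f(x)$. Hence $f$ is the constant function at $f(\bot)$, and this produces the required factorisation through $1$ uniformly in $M$.

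No substantial obstacle is expected: the argument is simply the clean composition of the three ingredients already in hand, namely the least-element hypothesis (providing a universal observational lower bound), monotonicity of arbitrary maps (\Cref{lem:anymapmonotoneintriscorder}), and discreteness of $\obsle[M]$ on sets with decidable equality (\Cref{lem:obs-preord-on-set-with-decidable-equality}). The dual case uses $\top$ in place of $\bot$ to obtain $f(\top) = f(x)$ in exactly the same way, so a single paragraph should handle both cases in the final write-up.
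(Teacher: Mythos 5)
Your proposal is correct and follows essentially the same route as the paper: monotonicity of arbitrary maps (\cref{lem:anymapmonotoneintriscorder}) transfers the least/greatest element to a universal bound $f(\bot)\obsle[M] f(x)$, and discreteness of the observational preorder on $M$ (\cref{lem:obs-preord-on-set-with-decidable-equality}) forces $f$ to be constant. The paper is only slightly more explicit in noting that the restriction map $M\to M^X$ is an embedding because $X$ is inhabited and $M$ is a set, so that surjectivity (i.e.\ constancy of every $f$) suffices for the equivalence; your inhabitedness via $\bot$ covers this implicitly.
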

\begin{proof}
  Since $X$ is inhabited, the restriction map $M \to M^X$ is an embedding (here we use the fact that $M$ is a set). It suffices to show that this map is also surjective.
  %% Since $M$ and $M^X$ are sets, it suffices to show this is also surjective, i.e.\ any $f \colon X \to M$ is a constant map. 
  % In HoTT, embedding ++ surjection => equivalence holds regardless of h-level. But we **do** need $M$ to be a set in order to deduce that M -> M^X is an embedding.
  %
  Fixing $f\colon X\to M$, we must show that there exists $m:M$ such that $f(x) = m$ for all $x:X$. Let $x_0$ be the least element of $X$ in its observational preorder, so that we have $x_0\obsle[X] x$; by \zcref{lem:anymapmonotoneintriscorder} we have $f(x_0) \obsle f(x)$ which implies $f(x_0) = f(x)$ by \zcref{lem:obs-preord-on-set-with-decidable-equality}. Therefore, we may choose $m\coloneq f(x_0)$. The case for a maximal element is analogous.
\end{proof}

\begin{remark}
  With greater efforts, the above proof would also work with the weaker assumption that the observational preorder on $X$ is \emph{connected}. This generalises a similar result given by \citet[Prop.~4.4.1]{hyland1990first}.
\end{remark}

\subsection{The satisfaction order of a spectrum}

\zcref{cor:connectedpreservediscrete} demonstrated the usefulness of the observational preorder. We would like to use it to show e.g.\ that $\I$ is internally connected. For this, we need to characterise the observational preorder on sober spaces. It turns out that for any sober space, this is simply its \emph{satisfaction order} in the sense defined below:

\begin{definition}[Satisfaction order]
  We shall write $\satle[\spec{A}]$ for the \emph{satisfaction order} on the spectrum $\spec{A}$ of an $\I$-algebra as defined below:
  \begin{align*}
    x\satle[\spec{A}] y 
    &\eq \fa{a}{A}
    x(a) \le_\I y(a) 
  \end{align*}
\end{definition}

\begin{observation}\label{lem:specorderofsober}
  When $A$ is a spatial $\I$-algebra (and thus $\spec{A}$ is sober), the observational and satisfaction orders on $\spec{A}$ coincide.
\end{observation}
\begin{proof}
  For any $x,y:\spec{A}$ we have the following chain of equivalences:
  \begin{align*}
    x\obsle[\spec{A}] y 
    &\eq 
    \fa{U}{\opens\spec{A}} U(x) \le_{\I} U(y)
    \\ 
    &\eq 
    \fa{a}{A} 
    \iota_A(a)(x)
    \le_\I
    \iota_A(a)(y)
    \\ 
    &\eq 
    \fa{a}{A} 
    x(a)
    \le_\I
    y(a)
    \\ 
    &\eq 
    x\satle[\spec{A}] y
    \qedhere
  \end{align*}
\end{proof}

\begin{observation}
  The satisfaction order on the spectrum $\spec{\I[X]}$ of a free $\I$-algebra is induced by the canonical order on the \emph{observation algebra} $\opens{X}$ via the canonical bijection $\spec{\I[X]}\cong \opens{X}$.
\end{observation}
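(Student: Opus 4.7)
The plan is to exploit the universal property of the free algebra $\I[X]$. Under the canonical bijection $\spec{\I[X]}\cong \opens{X}$, a homomorphism $x\colon \I[X]\to \I$ is identified with its restriction $\bar{x}\colon X\to \I$ along the universal map $\iota\colon X\to \I[X]$, so the task is to establish the biconditional
\[
  x\satle[\spec{\I[X]}] y \eq \fa{t}{X} \bar{x}(t)\le_\I \bar{y}(t).
\]
The forward direction is immediate: restricting the quantifier $\fa{a}{\I[X]} x(a)\le_\I y(a)$ to $a = \iota(t)$ for a generator $t\in X$ yields pointwise domination.

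For the converse, the cleanest route is to identify the 2-simplex $\Delta_2 = \scomp{(i,j) : \I^2}{i\le_\I j}$ as an $\I$-subalgebra of the product $\I\times \I$. Because the bounded distributive lattice operations $\wedge, \vee, 0, 1$ are monotone in each argument, $\Delta_2$ is closed under the coordinatewise operations and so inherits the structure of an $\I$-algebra. The pair $(x,y)$ then defines an $\I$-algebra homomorphism $\I[X]\to \I\times \I$, and the satisfaction inequality $x\satle y$ is precisely the assertion that this homomorphism factors through the inclusion $\Delta_2\hookrightarrow \I\times \I$. By the universal property of the free algebra, such a factorisation is equivalent to the set-function $(\bar{x},\bar{y})\colon X\to \I\times \I$ factoring through $\Delta_2$, which is exactly the pointwise inequality $\bar{x}(t)\le_\I \bar{y}(t)$ for every $t\in X$.

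No serious obstacle is anticipated; the only verification requiring attention is that $\Delta_2$ is a subalgebra of $\I\times \I$, which reduces entirely to monotonicity of the bounded distributive lattice operations. The argument does not depend on $X$ in any essential way and generalises: any subspace of $\I^n$ cut out by order constraints is a subalgebra of $\I^n$, so satisfaction orders on spectra of free algebras are always determined by their values on generators.
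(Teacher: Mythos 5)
Your proof is correct and follows essentially the same route as the paper: both reduce the satisfaction order, quantified over all of $\I[X]$, to the pointwise comparison on the generators $X$. The paper simply asserts this reduction as one step in a chain of equivalences, whereas you justify it by observing that $\Delta_2 \hookrightarrow \I\times\I$ is a subalgebra (by monotonicity of the lattice operations) and invoking the universal property of $\I[X]$ --- a worthwhile detail that the paper leaves implicit.
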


In the above, $\opens{X}$ is playing two roles simultaneously: it is both a spectrum of $\I[X]$ and the observational algebra of $X$.

\begin{proof}
  For any $u,v:\spec{\I[X]}$ we have the following equivalences:
  \begin{align*}
    u \satle[\spec{\I[X]}] v
    &\eq \fa{p}{\I[X]} u(p) \le_{\I} v(p)
    \\ 
    &\eq \fa{x}{X} u(x) \le_{\I} v(x)
    \\ 
    &\eq (\ld{x}{X}u(x)) \le_{\opens{X}} (\ld{x}{X}v(x))
    \qedhere
  \end{align*}
\end{proof}

\begin{lemma}\label{lem:cancoincide}
  For any surjective homomorphism $q \colon \I[Y] \surj A$ of a free algebra $\I[Y]$, the satisfaction order on $\spec A$ is induced by the satisfaction order on $\spec{\I[Y]}$ via the inclusion ${\spec{q}}\colon\spec{A}\hook \spec{\I[Y]}$.
\end{lemma}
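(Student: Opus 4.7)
The plan is to unfold both sides of the desired equivalence and use surjectivity of $q$ to transport the inequality from the free algebra down to the quotient. Concretely, by definition we have
\[
  x \satle[\spec{A}] y \eq \fa{a}{A} x(a)\le_\I y(a),
  \qquad
  \spec(q)(x) \satle[\spec{\I[Y]}] \spec(q)(y) \eq \fa{p}{\I[Y]} x(qp)\le_\I y(qp),
\]
so I want to show the right-hand proposition is equivalent to the left-hand one.

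The forward direction is immediate by instantiating the universal quantifier over $A$ at $a \coloneq q(p)$. For the backward direction, let $a : A$ be arbitrary. Since $q \colon \I[Y] \surj A$ is a surjection of sets, there merely exists $p : \I[Y]$ with $q(p) = a$. The target proposition $x(a) \le_\I y(a)$ is a proposition (an inequality in $\I$), so we may eliminate the propositional truncation and pick such a $p$; then the hypothesis yields $x(a) = x(qp) \le_\I y(qp) = y(a)$, as required.

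I do not expect any serious obstacle here: the lemma is essentially the observation that the satisfaction order, being defined by a universal quantifier over the algebra, factors through any set-level surjection of algebras. The only subtlety worth flagging is the propositional truncation step, which works precisely because $\le_\I$ is valued in $\pp$. This pattern — using surjectivity of presentations to reduce a property over $A$ to one over the free algebra — will presumably be reused later when comparing orders on general spectra with orders on cubes.
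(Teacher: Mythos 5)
Your proof is correct and follows exactly the paper's argument: unfold both satisfaction orders and use surjectivity of $q$ to pass between quantification over $\I[Y]$ and over $A$. The paper compresses the surjectivity step into one line, whereas you spell out the (correct) detail that eliminating the mere existence of a preimage is licensed because $x(a)\le_\I y(a)$ is a proposition.
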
 

\begin{proof}
  Let $x,x' : \spec A$. We have the following equivalences:
  \begin{align*}
    (\spec{q})(x)\satle[\spec{\I[Y]}] (\spec{q})(x')
    &\eq 
    \fa{p}{\I[Y]}
    x(qp) \le_\I x'(qp)
    \\ 
    &\eq 
    \fa{a}{A} x(a) \le_\I x'(a) 
    \\ 
    &\eq 
    x\satle[\spec{A}] x'
  \end{align*}
  The second equivalence holds because $q\colon \I[Y]\surj A$ is surjective.
\end{proof}

%% I don't think this one is too enlightening on its own.
% \begin{corollary}\label{cor:satiscan}
%   In particular, the satisfaction order on $\I^n$ and $\Delta^n$ viewed as spectra coincides with their pointwise order.
% \end{corollary}

\section{Phoa's principle, finitary quasi-coherence, and homotopy}

The interval $\I$ is playing multiple roles so far:
\begin{enumerate}
  \item Mapping a space $X$ into $\I$ computes its algebra of observations $\opens{X} = \I^X$.
  \item Homomorphically mapping an algebra $A$ into $\I$ computes its spectrum $\spec{A} = \I\alg(A,\I)$.
\end{enumerate}

In both cases above, $\I$ is viewed as an algebra of observations and not as a geometrical figure. For the latter, we may consider mappings from $\I$ into \emph{either} an algebra or a space as defining a notion of (directed) homotopy. In particular, the function space $X^\I$ classifies \emph{paths} drawn in $X$; this geometrical role of $\I$ as a figure shape for paths is the primary one in applications to synthetic category theory~\citep{riehl2017type}, and it is also upon the geometry of the interval that the notion of \emph{chain completeness} in synthetic domain theory is founded (cf.\ \zcref{sec:infdomain}).

\subsection{The generalised Phoa principle and spatiality of polynomial algebras}

As soon as we have path spaces $X^\I$, we immediately wish to begin characterising them for specific $X$. For example, the path space of a function space $A\to B$ is given pointwise in terms of the path space of $B$, which follows immediately from the laws of exponentials. To characterise paths in spaces like $\I^n$ and $\Delta^n$, however, we need additional axioms. The simplest such axiom asserts that the polynomial algebra $\I[\ms{i}]$ is spatial---which, in the case of bounded distributive lattices, is equivalent to Phoa's principle.

\begin{definition}\label{def:gen-phoa}
  We say that an $\I$-algebra $A$ satisfies the \emph{generalised Phoa principle} when either of the following equivalent conditions hold:
  \begin{enumerate}
    \item The path space $A^\I$ classifies the canonical order on $A$.
    \item For any function $\alpha\colon \I\to A$, we have $\alpha(i) = \alpha(0)\vee i \wedge \alpha(1)$.
  \end{enumerate}
\end{definition}

\begin{proof}
  Suppose that $A^\I$ classifies the canonical order on $A$; we must show that for any $\alpha\colon \I\to A$ we have $\alpha(i) = \alpha(0)\vee i \wedge \alpha(1)$. By our assumption that $A^\I$ classifies the canonical order on $A$, we know that that any such function must be uniquely determined by its values on $0$ and $1$ and moreover that $\alpha(0) \le_A \alpha(1)$; therefore, it suffices to observe for arbitrary such $\alpha$ that
  \begin{align*}
    \alpha(0) &= \alpha(0) \vee 0 = \alpha(0)\vee 0 \wedge \alpha(1)\text{,}\\
    \alpha(1) &= \alpha(0)\vee\alpha(1) = \alpha(0)\vee 1 \wedge \alpha(1)\text{.}
  \end{align*}

  Conversely, suppose that every path $\alpha\colon \I\to A$ is of the form $\alpha(i) = \alpha(0)\vee i \wedge \alpha(1)$. We must show that $a\le_A b$ holds if and only if there exists a unique path $\alpha\colon \I\to A$ sending $0$ and $1$ to $a$ and $b$ respectively.
  \begin{enumerate}
    \item If $a\le_A b$ holds, we define $\alpha(i) \coloneq a\vee i \wedge b$; this path is unique with the required property by our assumption. 
    \item If there exists a path $\alpha\colon\I\to A$ from $a$ to $b$, we know that this takes the form $\alpha(i) = a \vee i \wedge b$. Therefore we have
    \[ 
      b = \alpha(1) = a\vee 1\wedge b = a \vee b.
      \qedhere
    \] 
  \end{enumerate}
\end{proof}

The standard Phoa principle is the generalised Phoa principle for $\I$ itself.

\begin{theorem}\label{thm:gen-phoa-poly}
  If $A$ satisfies the generalised Phoa principle, then so does the polynomial $\I$-algebra $A[\ms{j}]$.
\end{theorem}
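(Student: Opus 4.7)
The idea is to reduce the principle for $A[\ms{j}]$ to that of $A$ by exploiting the universal property of the polynomial algebra: for each $c : \I$ there is an $\I$-algebra homomorphism $\ev_c \colon A[\ms{j}] \to A$ sending $\ms{j}$ to the image of $c$ under the structure map $\I \to A$. The plan is first to establish the \emph{normal form} identity $p = \ev_0(p) \vee \ms{j} \wedge \ev_1(p)$ for every $p : A[\ms{j}]$, and then to combine this with the hypothesis to compute an arbitrary path $\beta \colon \I \to A[\ms{j}]$.

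For the normal form, fix $p : A[\ms{j}]$ and consider the path $\alpha_p \colon \I \to A$ defined by $\alpha_p(i) \coloneq \ev_i(p)$. Applying the generalised Phoa principle on $A$ yields $\ev_i(p) = \ev_0(p) \vee i \wedge \ev_1(p)$, and taking $i = 1$ gives in particular $\ev_0(p) \le_A \ev_1(p)$. I would then define $\phi \colon A[\ms{j}] \to A[\ms{j}]$ by $\phi(p) \coloneq \ev_0(p) \vee \ms{j} \wedge \ev_1(p)$ and check that $\phi$ is an $A$-algebra endomorphism. Preservation of joins, of $0$, $1$, and of the restriction to $A$ is routine; preservation of binary meets requires expanding $\phi(p) \wedge \phi(q)$ by distributivity and absorbing the cross terms $\ev_0(p) \wedge \ms{j} \wedge \ev_1(q)$ and $\ms{j} \wedge \ev_1(p) \wedge \ev_0(q)$ into $\ms{j} \wedge \ev_1(p) \wedge \ev_1(q)$ using $\ev_0(p) \le \ev_1(p)$ and $\ev_0(q) \le \ev_1(q)$. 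Since $\phi(\ms{j}) = 0 \vee \ms{j} \wedge 1 = \ms{j}$, the universal property of $A[\ms{j}]$ as the free $A$-algebra on one generator forces $\phi = \id$, giving the normal form.

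Given the normal form identity, the main statement follows quickly. Fix $\beta \colon \I \to A[\ms{j}]$. For each $c \in \{0,1\}$ the composite $\ev_c \circ \beta \colon \I \to A$ is a path in $A$, so by the generalised Phoa principle on $A$ together with the fact that $\ev_c$ is a homomorphism,
\[
  \ev_c(\beta(i)) = \ev_c(\beta(0)) \vee i \wedge \ev_c(\beta(1)) = \ev_c\bigl(\beta(0) \vee i \wedge \beta(1)\bigr).
\]
The normal form identity shows that any element of $A[\ms{j}]$ is determined by its values under $\ev_0$ and $\ev_1$, so $\beta(i) = \beta(0) \vee i \wedge \beta(1)$ as required.

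The main obstacle is the verification that $\phi$ preserves binary meets; this is the only non-formal step, and it is precisely where the hypothesis on $A$ enters (through the inequality $\ev_0(p) \le \ev_1(p)$ that lets the cross terms be absorbed).
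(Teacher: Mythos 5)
Your proof is correct. At its core it performs the same reduction as the paper's: write each element of $A[\ms{j}]$ in the normal form $p = \ev_0(p)\vee\ms{j}\wedge\ev_1(p)$ with $\ev_0(p)\le_A\ev_1(p)$, push a path $\beta\colon\I\to A[\ms{j}]$ down to the two paths $\ev_0\circ\beta$ and $\ev_1\circ\beta$ in $A$, apply the generalised Phoa principle there, and reassemble. The genuine difference lies in how the normal form is obtained and deployed. The paper cites the normal form theorem for polynomials over bounded distributive lattices (Lausch--N\"obauer), which holds unconditionally, and closes with an explicit lattice computation. You instead derive the normal form from the hypothesis itself via the endomorphism $\phi(p)=\ev_0(p)\vee\ms{j}\wedge\ev_1(p)$: the inequality $\ev_0(p)\le_A\ev_1(p)$ --- which you extract from Phoa on $A$ applied to $i\mapsto\ev_i(p)$, though it also follows unconditionally from monotonicity of polynomial evaluation --- is exactly what absorbs the cross terms and makes $\phi$ preserve meets, and the universal property of the free $A$-algebra on one generator then forces $\phi=\id$. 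This is self-contained, avoids the external citation, and lets you replace the paper's concluding lattice algebra with the cleaner observation that $(\ev_0,\ev_1)$ is jointly injective on $A[\ms{j}]$. What you give up is only the uniqueness clause of the cited normal form and its unconditional validity, neither of which is needed for this theorem. Both arguments tacitly work in the finitary distributive-lattice fragment, which is the standing convention of this section, so there is no gap there either.
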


\begin{proof}
  Fixing $\alpha\colon \I\to A[\ms{j}]$, we must check that $\alpha(i) = \alpha(0) \vee i \wedge \alpha(1)$. 
  The normal form theorem for polynomials of bounded distributive lattices~\citep[Ch.\ 1, Thm.\ 10.11]{lausch2000algebra} implies that any function $\alpha\colon \I\to A[\ms{j}]$ must be determined by functions $\alpha_0,\alpha_1\colon \I\to A$ such that $\alpha(i) = \alpha_0(i)\vee \ms{j} \wedge \alpha_1(i)$ in $A[\ms{j}]$. Meanwhile, Phoa's principle for $A$ characterises $\alpha_0$ and $\alpha_1$ as follows:
  \begin{align*}
    \alpha_0(i) &= \alpha_0(0) \vee i \wedge \alpha_0(1)
    \\ 
    \alpha_1(i) &= \alpha_1(0) \vee i \wedge \alpha_1(1)
  \end{align*}

  Hence we have $\alpha(i) = (\alpha_0(0) \vee i \wedge \alpha_0(1))\lor x \wedge (\alpha_0(0) \vee i \wedge \alpha_0(1))$. On the other hand, we can use the normal form to compute $\alpha(0)$ and $\alpha(1)$ as polynomials in $\ms{j}$: 
  \begin{align*}
    \alpha(0) &= \alpha_0(0) \vee \ms{j}\wedge \alpha_1(0)\\
    \alpha(1) &= \alpha_0(1) \vee \ms{j}\wedge \alpha_1(1)
  \end{align*}

  Hence $\alpha(0)\lor i \wedge \alpha(1) = (\alpha_0(0) \vee \ms{j}\wedge \alpha_1(0))\vee i \wedge (\alpha_0(1) \vee \ms{j}\wedge \alpha_1(1))$. Using elementary lattice algebra we deduce that $\alpha(i) = \alpha(0) \vee i \wedge \alpha(1)$.
\end{proof}

\begin{theorem}\label{obs:sqcp-phoa-gen}
  Let $A$ be a spatial $\I$-algebra. Then the following are equivalent:
  \begin{enumerate}
    \item The polynomial $\I$-algebra $A[\ms{j}]$ is spatial.
    \item The generalised Phoa principle holds for $A$.
  \end{enumerate}
\end{theorem}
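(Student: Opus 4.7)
The plan is to show that, assuming quasi-coherence of $A$, the counit map $\iota_{A[\ms{j}]}\colon A[\ms{j}]\to \opens\spec A[\ms{j}]$ is naturally identified with the evaluation map $\mathrm{ev}\colon A[\ms{j}]\to A^{\I}$ sending a polynomial $p(\ms{j})$ to the function $i\mapsto p(i)$. Granted this identification, $A[\ms{j}]$ is quasi-coherent precisely when $\mathrm{ev}$ is an equivalence of $\I$-algebras, and I will then read off that this is precisely the generalised Phoa principle for $A$.

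For the computation, I would first note that $A[\ms{j}]$ is the coproduct $A\otimes \I[\ms{i}]$ in $\I\alg$, since it is the free $A$-algebra on a single generator. Because $\spec\colon \I\alg\op\to \Set$ is a right adjoint (\Cref{specrightadj}), it sends this coproduct to a product, giving $\spec A[\ms{j}]\cong \spec A\times \spec \I[\ms{i}]\cong \spec A\times \I$. Using quasi-coherence of $A$ and currying, we then compute
\[
  \opens\spec A[\ms{j}]\cong \I^{\spec A\times \I}\cong (\I^{\spec A})^{\I}\cong A^{\I}.
\]
Tracing $\iota_{A[\ms{j}]}$ through these identifications, a homomorphism $\phi\colon A[\ms{j}]\to \I$ corresponds to a pair $(f,i)$ with $f = \phi|_A$ and $i=\phi(\ms{j})$, and $\iota_{A[\ms{j}]}(p)(\phi)=\phi(p)$ is the result of applying $f$ to the coefficients of $p$ and substituting $i$ for $\ms{j}$; quasi-coherence of $A$ identifies this, uniformly in $f$, with $f(p(i))$, so $\iota_{A[\ms{j}]}(p)$ is the function $i\mapsto p(i)\in A$.

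To conclude, I would invoke the normal form theorem for polynomials over a bounded distributive lattice used already in \Cref{thm:gen-phoa-poly}: every element of $A[\ms{j}]$ has a (suitably unique) presentation $a_0\vee \ms{j}\wedge a_1$, whose image under $\mathrm{ev}$ is the path $i\mapsto a_0\vee i\wedge a_1$. Hence $\mathrm{ev}$ is an equivalence precisely when every $\alpha\colon \I\to A$ is uniquely of this form — which, setting $a_0=\alpha(0)$ and $a_1=\alpha(1)$, is exactly $\alpha(i)=\alpha(0)\vee i\wedge \alpha(1)$, i.e.\ the generalised Phoa principle as formulated in \Cref{def:gen-phoa}. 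The main subtlety is the identification of $\iota_{A[\ms{j}]}$ with $\mathrm{ev}$; once this is in place, the rest is an unpacking together with the normal form theorem.
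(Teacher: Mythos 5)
Your proposal is correct and follows essentially the same route as the paper's own proof: identify $\spec A[\ms{j}]\cong\spec A\times\I$, use quasi-coherence of $A$ to recognise the counit $\iota_{A[\ms{j}]}$ as the evaluation map $A[\ms{j}]\to A^\I$, and then conclude via the normal form theorem for polynomials over a bounded distributive lattice. Your version is slightly more explicit about tracing the counit through the identifications, but the argument is the same.
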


\begin{proof}
  We note that $\spec{A[\ms{j}]} \cong \spec{A}\times \I$ unconditionally, and so the algebra of observations $\opens\spec{A[\ms{j}]}$ is actually the path space of $\opens\spec{A}$; taking spatiality of $A$ into account, the counit component $A[\ms{j}]\to \opens\spec A[\ms{j}]$ is the evaluation map $A[\ms{j}] \to A^\I$, and we wish to prove that the latter is an equivalence if and only if for each $\alpha\colon \I\to A$ we have $\alpha(i)=\alpha(0)\vee i\wedge\alpha(1)$. 
  This follows immediately from the normal form theorem for polynomials of bounded distributive lattices~\citep[Ch.\ 1, Thm.\ 10.11]{lausch2000algebra}: any polynomial $p:A[\ms{j}]$ is uniquely of the form $p = \ev_0(p) \vee \ms{j} \wedge \ev_1(p)$ with $\ev_0(p)\le_A \ev_1(p)$.
\end{proof}

\begin{corollary}\label{cor:A-and-Ax-qc-to-Axy-qc}
  If both $A$ and $A[\ms{j}]$ are spatial, then so is $A[\ms{j},\ms{k}]$.
\end{corollary}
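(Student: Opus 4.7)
The plan is to chain together the two preceding results: \Cref{obs:sqcp-phoa-gen}, which translates quasi-coherence of a polynomial extension into the generalised Phoa principle, and \Cref{thm:gen-phoa-poly}, which propagates the generalised Phoa principle from $A$ to $A[\ms{j}]$.

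First, since $A$ is quasi-coherent and $A[\ms{j}]$ is quasi-coherent, the equivalence recorded in \Cref{obs:sqcp-phoa-gen} gives us the generalised Phoa principle for $A$. Next, by \Cref{thm:gen-phoa-poly}, the generalised Phoa principle is inherited by the polynomial algebra $A[\ms{j}]$. At this point $A[\ms{j}]$ is itself a quasi-coherent $\I$-algebra satisfying the generalised Phoa principle, so a second application of \Cref{obs:sqcp-phoa-gen}—now with $A[\ms{j}]$ playing the role of the base algebra—yields that the polynomial extension $A[\ms{j}][\ms{k}] \cong A[\ms{j},\ms{k}]$ is quasi-coherent, as required.

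There is no real obstacle: the only subtlety is the bookkeeping required to re-use \Cref{obs:sqcp-phoa-gen} with $A[\ms{j}]$ in place of $A$, which is legitimate because quasi-coherence of $A[\ms{j}]$ is one of the hypotheses and the identification $A[\ms{j}][\ms{k}] \cong A[\ms{j},\ms{k}]$ is standard for polynomial algebras.
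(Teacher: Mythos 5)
Your proof is correct and follows essentially the same route as the paper's: extract the generalised Phoa principle for $A$ from \Cref{obs:sqcp-phoa-gen} using the hypothesis that $A$ and $A[\ms{j}]$ are quasi-coherent, propagate it to $A[\ms{j}]$ via \Cref{thm:gen-phoa-poly}, and then convert back to quasi-coherence of $A[\ms{j},\ms{k}]\cong A[\ms{j}][\ms{k}]$. If anything, your closing step is stated more explicitly than the paper's, which ends by deriving the generalised Phoa principle for $A[\ms{j}][\ms{k}]$ and leaves the final application of \Cref{obs:sqcp-phoa-gen} (with $A[\ms{j}]$ as the quasi-coherent base) implicit.
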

\begin{proof}
  By \zcref{obs:sqcp-phoa-gen}, the generalised Phoa principle holds for $A$ because $A$ and $A[\ms{j}]$ are spatial; by \zcref{thm:gen-phoa-poly}, the generalised Phoa principle holds for $A[\ms{j}]$ because it holds for $A$.  By \zcref{thm:gen-phoa-poly} again, the generalised Phoa principle holds for $A[\ms{j},\ms{k}] = A[\ms{j}][\ms{k}]$ because it holds for $A[\ms{j}]$.
\end{proof}

\begin{corollary}\label{cor:phoa-vs-quasicoherence}
  The following are equivalent:
  \begin{enumerate}
    \item The polynomial $\I$-algebra $\I[\ms{i}]$ is spatial.
    \item Phoa's principle holds.
  \end{enumerate}
\end{corollary}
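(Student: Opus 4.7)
The plan is to derive this as an immediate instance of the more general \Cref{obs:sqcp-phoa-gen}, specialising to $A=\I$. First, I would observe that $\I$ itself is quasi-coherent by \Cref{exm:intervalqc}, so the hypothesis of \Cref{obs:sqcp-phoa-gen} is satisfied when we take $A=\I$. Applying that theorem then gives that $\I[\ms{i}]$ is quasi-coherent if and only if the generalised Phoa principle holds for $\I$.

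Next, I would invoke the remark immediately following \Cref{def:gen-phoa} that the standard Phoa principle is by definition the generalised Phoa principle in the special case $A=\I$. Chaining these two equivalences then yields the desired equivalence between quasi-coherence of $\I[\ms{i}]$ and Phoa's principle.

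There is essentially no obstacle here, as the work has already been done in \Cref{obs:sqcp-phoa-gen}; the corollary is purely the $A=\I$ instantiation. The only thing worth spelling out for the reader is the identification of the counit $\I[\ms{i}]\to\opens\spec{\I[\ms{i}]}$ with the evaluation map $\I[\ms{i}]\to\I^\I$, which relies on the unconditional identification $\spec{\I[\ms{i}]}\cong\I$ noted earlier. Since this identification was already used inside the proof of \Cref{obs:sqcp-phoa-gen}, nothing new needs to be verified and the proof reduces to a one-line citation.
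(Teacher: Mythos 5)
Your proposal is correct and matches the paper's own proof, which reads simply ``By \Cref{obs:sqcp-phoa-gen}, because $\I$ is unconditionally quasi-coherent''---i.e.\ exactly the $A=\I$ instantiation you describe. The additional remarks about identifying the counit with the evaluation map are accurate but, as you note, already contained in the proof of \Cref{obs:sqcp-phoa-gen} and need not be repeated.
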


\begin{proof}
  By \zcref{obs:sqcp-phoa-gen}, because $\I$ is unconditionally spatial.
\end{proof}

We therefore identify the axiom (\AxiomSQCP) below accordingly, which we have seen to be equivalent to Phoa's principle.

\PrintAxiomSQCP

The quasi-coherence axiom for polynomials in one variable is surprisingly strong:

%% I think what follows is fine, but I haven't used it.
% \begin{lemma}[\AxiomSQCP]
%   For any type $K$, the path space $(\I^K)^\I$ classifies the canonical order on $\I^K$.
% \end{lemma}
% \begin{proof}
%   Let $f,g\colon K\to \I$ be two functions. We must show that the type of paths from $f$ to $g$ is a proposition that is equivalent to the canonical order $f\le_{\I^K}g$. We have the following sequence of equivalences:
% %
%   \[ 
%     \mathsf{Path}_{\I^K}(f,g) \cong 
%     \prod_{k:K} \mathsf{Path}_{\I}(f(k),g(k))
%     \cong
%     \prod_{k:K} f(k) \le_\I g(k) 
%     \cong f\le_{\I^K} g
%   \] 

%   The first equivalence is immediate; the second equivalence follows from is Phoa's principle (\AxiomSQCP), which characterises paths in $\I$ itself. The final equivalence is definitional.
% \end{proof}

\begin{lemma}[\AxiomSQCP]\label{lem:fg-qc}
  Every finitely generated free $\I$-algebra is spatial.
\end{lemma}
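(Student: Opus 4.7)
The plan is a straightforward induction on the number of generators, bootstrapping off \Cref{cor:A-and-Ax-qc-to-Axy-qc}. Write $\I[n]$ for the free $\I$-algebra on generators $\ms{i}_1,\ldots,\ms{i}_n$. I want to show by strong induction on $n$ that $\I[n]$ is quasi-coherent.

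For the base cases: $\I[0] = \I$ is quasi-coherent by \Cref{exm:intervalqc}, and $\I[1] = \I[\ms{i}]$ is quasi-coherent by the assumption (\AxiomSQCP). For the inductive step with $n \ge 2$, observe that $\I[n] \cong \I[n-2][\ms{i}_{n-1}, \ms{i}_n]$ as $\I$-algebras. By the induction hypothesis applied at levels $n-2$ and $n-1$, both $A \coloneq \I[n-2]$ and $A[\ms{j}] \cong \I[n-1]$ are quasi-coherent, so \Cref{cor:A-and-Ax-qc-to-Axy-qc} immediately yields that $A[\ms{j},\ms{k}] \cong \I[n]$ is quasi-coherent.

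No step is an obstacle; the work has already been done in \Cref{obs:sqcp-phoa-gen}, \Cref{thm:gen-phoa-poly}, and \Cref{cor:A-and-Ax-qc-to-Axy-qc}. The only subtlety to keep in mind is that \Cref{cor:A-and-Ax-qc-to-Axy-qc} requires two consecutive quasi-coherent steps to produce the next one, which is exactly why the induction must carry both $\I[n-2]$ and $\I[n-1]$ along — a single-step induction on $n$ alone would not suffice because Phoa's principle for $\I[n-1]$ is not directly among our hypotheses.
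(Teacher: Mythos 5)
Your proof is correct and is essentially the paper's own argument: both proceed by the two-level bootstrap through \Cref{cor:A-and-Ax-qc-to-Axy-qc}, starting from $\I$ and $\I[\ms{i}]$ and adjoining one generator at a time. You have merely made explicit the strong induction that the paper leaves informal (``and so on''), and your closing remark about needing to carry two consecutive levels is an accurate reading of the hypotheses of \Cref{cor:A-and-Ax-qc-to-Axy-qc}.
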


\begin{proof}
  The free $\I$-algebra on $0$ generators is just $\I$, which is automatically spatial. The free $\I$-algebra $\I[\ms{i}]$ on one generator is spatial by assumption. In the case of $\I[\ms{i},\ms{j}_0,\ldots,\ms{j}_n]$ we recall \zcref{cor:A-and-Ax-qc-to-Axy-qc} and see that $\I[\ms{i}]$ spatial implies $\I[\ms{i},\ms{j}_0]$ spatial, which implies $\I[\ms{i},\ms{j}_0,\ms{j}_1]$ spatial and so on.
\end{proof}

\begin{theorem}[\AxiomSQCP]\label{thm:phoa-spectrum}
  For any $\I$-algebra $A$, the path space ${(\spec A)}^\I$ classifies the satisfaction order on $\spec A$.
\end{theorem}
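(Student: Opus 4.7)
The plan is to compute $(\spec A)^\I$ by combining an adjunction argument with Phoa's principle, supplied under (\AxiomSQCP) by \Cref{cor:phoa-vs-quasicoherence}.

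First I would extend the natural isomorphism of \Cref{specrightadj} from sets to arbitrary types: for any type $X$, currying shows that a function $X \to \spec A = \I\alg(A,\I)$ is the same as an $\I$-algebra homomorphism $A \to \I^X$ into the pointwise $\I$-algebra on $\I^X$ (the proof is literally the calculation in \Cref{specrightadj} and uses nothing about $X$ being a set; $\I^X$ remains a set since $\I$ is). Specialising to $X = \I$ yields
\[ (\spec A)^\I \cong \I\alg(A, \I^\I). \]

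Next, (\AxiomSQCP) together with \Cref{cor:phoa-vs-quasicoherence} gives an $\I$-algebra equivalence $\I[\ms{i}] \cong \I^\I$ supplied by the counit $\iota_{\I[\ms{i}]}$. The normal form theorem for polynomials of bounded distributive lattices, already invoked in \Cref{obs:sqcp-phoa-gen} and \Cref{thm:gen-phoa-poly}, identifies $\I[\ms{i}]$ as an $\I$-algebra with the sub-$\I$-algebra $\{(a,b) \in \I \times \I \mid a \le b\}$ of $\I \times \I$ equipped with pointwise operations, via the evaluation map $p \mapsto (p(0), p(1))$. Composing, we obtain
\[ (\spec A)^\I \cong \I\alg(A, \I[\ms{i}]) \cong \{(x, y) \in \spec A \times \spec A \mid x \satle[\spec{A}] y\}, \]
where the last step uses that an $\I$-algebra homomorphism $A \to \{(a,b) \mid a \le b\}$ is exactly a pair of $\I$-algebra homomorphisms $x, y\colon A \to \I$ with $x(a) \le y(a)$ for every $a : A$, i.e.\ a pair with $x \satle[\spec{A}] y$. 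Tracing the composite shows that the induced equivalence sends a path $\alpha$ to the endpoint pair $(\alpha(0), \alpha(1))$, which is precisely the claimed classification.

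The proof is essentially formal adjunction calculus once Phoa's principle is available; the only lemma doing real work is the normal form identification of $\I[\ms{i}]$ with $\{(a,b) \mid a \le b\}$, and this is already packaged in the earlier appeals to the generalised Phoa principle. I do not anticipate any substantive technical obstacle.
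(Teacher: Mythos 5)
Your proof is correct and takes essentially the same route as the paper: both reduce $(\spec A)^\I$ to $\I\alg(A,\I[\ms{i}])$ via the adjunction plus quasi-coherence of $\I[\ms{i}]$ (which the paper packages as \Cref{prop:duality}), and then apply the normal form theorem for $\I[\ms{i}]$ to identify homomorphisms $A\to\I[\ms{i}]$ with pairs $x\satle[\spec{A}]y$. The only cosmetic differences are that you pass through $\I\alg(A,\I^\I)$ rather than $(\spec A)^{\spec{\I[\ms{i}]}}$, and that your extension of \Cref{specrightadj} to arbitrary types is unnecessary here since $\I$ is already a set.
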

\begin{proof}
  By assumption $\I[\ms{i}]$ is spatial, so from \zcref{prop:duality} may characterise the path space of $\spec{A}$ algebraically as follows:
  \[ (\spec A)^\I \cong (\spec{A})^{\spec{\I[\ms{i}]}}\cong \I\alg(A,\I[\ms{i}])\text{.} \]
  
  By the normalisation theorem for $\I[\ms{i}]$, the pair of evaluation maps
  \[ \pair{\ev_0,\ev_1} \colon \I[\ms{i}] \to \I \times \I \]
  classifies the canonical order on $\I$.
  Thus every homomorphism $h\colon A\to \I[\ms{i}]$ is canonically induced by elements $h_0,h_1:\spec{A}$ with $h_0\satle[\spec{A}] h_1$ such that $h(a) = h_0(a) \vee \ms{i} \wedge h_1(a)$.
\end{proof}

\begin{theorem}[\AxiomSQCP]\label{thm:algebraphoa}
  If both $A$ and $A[\ms{i}]$ are spatial $\I$-algebras, then the path space $A^\I$ classifies the canonical order on $A$.
\end{theorem}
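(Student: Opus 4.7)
The plan is to observe that this theorem is essentially an immediate consequence of results already established. Specifically, \Cref{obs:sqcp-phoa-gen} asserts that for a quasi-coherent $\I$-algebra $A$, quasi-coherence of the polynomial extension $A[\ms{i}]$ is equivalent to the generalised Phoa principle for $A$; and clause (1) of \Cref{def:gen-phoa} \emph{defines} the generalised Phoa principle for $A$ to be exactly the statement that the path space $A^\I$ classifies the canonical order $\le_A$. So the proof is a direct chaining of these two facts.

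In detail, I would proceed as follows. First, invoke \Cref{obs:sqcp-phoa-gen} with $A$ in the role of the quasi-coherent algebra: both hypotheses (namely, that $A$ is quasi-coherent and that $A[\ms{i}]$ is quasi-coherent) are supplied directly by the assumptions of the theorem, so the conclusion is that the generalised Phoa principle holds for $A$. Then unfold condition (1) of \Cref{def:gen-phoa}, which says that $A^\I$ classifies $\le_A$; this yields the statement of the theorem.

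There is no real obstacle here, since the substantive work is carried out inside \Cref{obs:sqcp-phoa-gen}, whose proof relies on the normal form theorem for polynomials over bounded distributive lattices to identify the evaluation map $A[\ms{i}] \to A^\I$ with the counit $\iota_{A[\ms{i}]}$ via the identification $\spec{A[\ms{i}]} \cong \spec A \times \I$. The role of the ambient axiom (\AxiomSQCP) in the statement of this theorem is contextual rather than logical: it is the natural background under which one verifies, via \Cref{lem:fg-qc} and \Cref{cor:A-and-Ax-qc-to-Axy-qc}, that the polynomial extensions $A[\ms{i}]$ of interest are indeed quasi-coherent, so that the hypothesis of the theorem can be met in practice. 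The theorem itself, however, follows formally just from \Cref{obs:sqcp-phoa-gen} and the definition of the generalised Phoa principle.
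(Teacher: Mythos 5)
Your proposal is correct and is essentially the paper's own argument in disguise: the paper proves \Cref{thm:algebraphoa} by directly computing $A^\I \cong \opens\spec{A[\ms{i}]} \cong A[\ms{i}]$ from quasi-coherence of $A$ and $A[\ms{i}]$ and then invoking the normal form theorem, which is exactly the content of the proof of \Cref{obs:sqcp-phoa-gen} that you cite instead of re-deriving. Your observation that the ambient axiom (\AxiomSQCP) plays no logical role beyond supplying instances of the hypotheses is also consistent with the paper's proof, which uses only the stated assumptions together with \Cref{specrightadj} and the normalisation result.
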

\begin{proof}
  By assumption and \zcref{specrightadj}, since $A \otimes \I[\ms{i}] \cong A[\ms{i}]$, we have
  \[ A^\I \cong (\opens\spec A)^\I = (\I^{\spec A})^\I \cong \I^{\spec A \times \spec\I[\ms{i}]} \cong \I^{\spec A[\ms{i}]} \cong A[\ms{i}]\text{.} \]
  We have already seen that $A[\ms{i}]$ classifies the canonical order of $A$ by the normalisation result for polynomial $\I$-algebras.
\end{proof}

\begin{remark}[Algebraic properties in classifying topoi]\label{rem:normalalgebra}
  We emphasise that the above proofs are purely the consequences of the algebraic fact that $A[\ms{i}]$ classifies the canonical order of any $\I$-algebra $A$, plus quasi-coherence. This is a perfect example of how an algebraic property of a theory has a non-trivial effect on the internal logic of its classifying topos.
\end{remark}

\begin{corollary}[\AxiomSQCP]\label{lem:cubes-are-sober}
  Each cube $\I^n$ is sober.
\end{corollary}
\begin{proof}
  The $n$-cube $\I^n$ is the spectrum of the free $\I$-algebra $\I[n]$, which is spatial by \zcref{lem:fg-qc}.
\end{proof}

\begin{corollary}[\AxiomNT, \AxiomSQCP]\label{cor:cubes-internally-connected}
  The $n$-cube $\I^n$ is internally connected.
\end{corollary}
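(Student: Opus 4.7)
The plan is to apply \Cref{cor:connectedpreservediscrete}, which tells us that it suffices to exhibit a least (or greatest) element for the observational preorder on $\I^n$. Since $\I^n$ is affine by \Cref{lem:cubes-are-affine}, \Cref{lem:specorderofaffine} identifies its observational preorder with the satisfaction order on the spectrum $\spec{\I[n]}$; and the second observation following \Cref{lem:specorderofaffine} then identifies this further with the canonical (pointwise) order on the observation algebra $\opens{n}\cong \I^n$ along the canonical bijection $\spec{\I[n]}\cong \I^n$.

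Under this identification, the constant tuple $(0,\ldots,0)\colon n\to \I$ is a least element, since the canonical order on each coordinate copy of $\I$ makes $0\colon\I$ the bottom element of the bounded distributive lattice. (Dually, $(1,\ldots,1)$ is a greatest element; either suffices.) Consequently the observational preorder on $\I^n$ has a least element, and \Cref{cor:connectedpreservediscrete} yields internal connectedness.

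There is no substantive obstacle here: the whole content is that the chain of identifications observational $=$ satisfaction $=$ pointwise canonical order, valid in the affine case, transports the ordinary least element of the cube to an observational least element. The use of \Cref{cor:connectedpreservediscrete} is where (\AxiomNT) enters, and (\AxiomSQCP) enters through \Cref{lem:cubes-are-affine} to guarantee affineness of $\I^n$.
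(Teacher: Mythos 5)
Your proof is correct and follows essentially the same route as the paper: identify the observational preorder on the affine cube with the satisfaction order via \Cref{lem:specorderofaffine}, recognise it as the pointwise order, exhibit an extremum, and conclude by \Cref{cor:connectedpreservediscrete}. The only cosmetic difference is that you invoke the observation on free algebras where the paper cites \Cref{lem:cancoincide}; for the cube (spectrum of a free algebra) your citation is if anything the more direct one.
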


In particular, $\I$ is not the boolean set $2$.

\begin{proof}
   By \zcref{lem:specorderofsober,lem:cancoincide}, the observational preorder space $\I^n$ is induced by the satisfaction order on $\spec\I[n]$ because $\I^n$ is sober (\zcref{lem:cubes-are-sober}); the satisfaction order is simply the pointwise order, so it has a top element. Thus by \zcref{cor:connectedpreservediscrete}, $\I^n$ is internally connected. So are $\Delta^n$ because they are retracts of $\I^n$.
\end{proof}

\subsection{Comparing the different orders}

An immediate application of Phoa principle is to compare path structure with the observational preorder.

\begin{observation}[\AxiomSQCP]
  For any type $X$, the boundary evaluation map 
  \[ \pair{\ev_0,\ev_1}\colon X^\I\to X\times X \] 
  factors through the observational preorder relation ${\obsle[X]}\hook X\times X$ as displayed below:
  \[ 
    \begin{tikzcd}
      X^\I 
        \arrow[rr,densely dashed, "\exists"]
        \arrow[dr]
      && 
      {\obsle[X]}
        \arrow[dl,hookrightarrow]
      \\ 
      &
      X\times X
    \end{tikzcd}
  \] 
\end{observation}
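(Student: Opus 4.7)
The plan is to verify that for every path $\alpha\colon \I\to X$, the endpoints satisfy $\alpha(0)\obsle[X]\alpha(1)$; once this is established, the dashed factorisation is simply $\alpha\mapsto(\alpha(0),\alpha(1))$, landing in ${\obsle[X]}\hook X\times X$.

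Unpacking \Cref{defn:specialisation}, showing $\alpha(0)\obsle[X]\alpha(1)$ amounts to checking that for every observation $U\colon X\to\I$ we have $U(\alpha(0))\le_\I U(\alpha(1))$. The key move is to compose: the map $U\circ\alpha\colon \I\to\I$ is a path in $\I$ itself, so we may bring Phoa's principle to bear. Under (\AxiomSQCP), by \Cref{cor:phoa-vs-quasicoherence} together with \Cref{def:gen-phoa}(2) applied to $A=\I$, every function $\beta\colon\I\to\I$ satisfies $\beta(i)=\beta(0)\vee i\wedge\beta(1)$; specialising at $i=1$ gives $\beta(1)=\beta(0)\vee\beta(1)$, i.e. $\beta(0)\le_\I\beta(1)$. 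Applying this to $\beta\coloneq U\circ\alpha$ yields exactly the required inequality $U(\alpha(0))\le_\I U(\alpha(1))$.

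Because the subtype inclusion ${\obsle[X]}\hook X\times X$ is a proposition-valued predicate, the diagram commutes on the nose: the evident map $X^\I\to X\times X$ literally factors through ${\obsle[X]}$, giving the desired dashed arrow. There is no real obstacle; the only subtlety worth flagging is that it is Phoa's principle for $\I$ itself (equivalently, quasi-coherence of $\I[\ms{i}]$) that is used, so the assumption (\AxiomSQCP) is exactly enough, with no further hypothesis needed on $X$.
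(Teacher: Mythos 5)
Your proof is correct and is essentially the paper's argument: the paper first notes $0\obsle[\I]1$ by Phoa's principle and then invokes \Cref{lem:anymapmonotoneintriscorder} (monotonicity of every map in the observational preorder) applied to $\alpha$, whereas you simply inline both steps by composing $U\circ\alpha$ and applying Phoa's principle directly. No gap.
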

\begin{proof}
  Fix a path $\alpha\colon \I\to X$. We have $0\obsle[\I] 1$ by Phoa's principle; because every function is monotone in the observation preorder, we conclude $\alpha(0)\obsle[X] \alpha(1)$.
\end{proof}

\begin{definition}
  A type $X$ for which $X^\I\to {\obsle[X]}$ is an surjection is called \emph{linked} (following \citet{PhoaWesleyKym-Son1991DtiR}); when the map is an equivalence, $X$ is called \emph{strongly linked}.
\end{definition}

\begin{lemma}[\AxiomSQCP]\label{lem:sober-strongly-linked}
  Any sober space is strongly linked.
\end{lemma}

\begin{proof}
  Let $X=\spec{A}$ be the spectrum of a spatial algebra $A$. 
  By \zcref{thm:phoa-spectrum}, the path space $X^\I$ classifies the satisfaction order on $\spec{A}$, which is to say that there is a (necessarily unique) path from $x$ to $y$ if and only if \[\fa{a}{A} x(a)\le_\I y(a)\text{.}\] Because $A$ is spatial, we have $A\cong \opens X$ and there is a unique path from $x$ to $y$ if and only if \[\fa{U}{\opens{X}} U(x)\le_\I U(y)\text{,}\] but this is the observational order of $X$.
\end{proof}

\begin{lemma}[\AxiomSQCP]\label{lem:int-strong-link}
  The interval is strongly linked.
\end{lemma}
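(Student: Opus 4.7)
The plan is to observe that this is essentially a direct corollary of \Cref{lem:affine-strongly-linked} once we recognise $\I$ as an affine space. Recall that $\I \cong \spec\I[\ms{i}]$ because $\I\alg(\I[\ms{i}],\I)\cong \I$ by the universal property of the free $\I$-algebra on one generator. Under the assumption (\AxiomSQCP), the polynomial $\I$-algebra $\I[\ms{i}]$ is quasi-coherent, so $\I$ is affine.

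Having identified $\I$ as an affine space, we apply \Cref{lem:affine-strongly-linked} directly to conclude that $\I$ is strongly linked. Unwinding, this says the canonical map $\I^\I \to {\obsle[\I]}$ sending a path $\alpha$ to $(\alpha(0),\alpha(1))$ is an equivalence: by \Cref{thm:phoa-spectrum} the path space $\I^\I$ classifies the satisfaction order on $\spec\I[\ms{i}]$, and by \Cref{lem:specorderofaffine} this coincides with the observational preorder on $\I$ since $\I$ is affine.

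There is no real obstacle here; the work has already been done in establishing \Cref{lem:affine-strongly-linked}, and the only content specific to $\I$ is recognising its affineness from (\AxiomSQCP) together with the fact that $\I$ is the spectrum of the polynomial $\I$-algebra $\I[\ms{i}]$.
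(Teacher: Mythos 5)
Your proposal is correct and matches the paper's proof exactly: both identify $\I$ as the spectrum of the polynomial algebra $\I[\ms{i}]$, invoke (\AxiomSQCP) to conclude that $\I$ is affine, and then apply \Cref{lem:affine-strongly-linked}. The extra unwinding via \Cref{thm:phoa-spectrum} and \Cref{lem:specorderofaffine} is harmless but not needed beyond what \Cref{lem:affine-strongly-linked} already provides.
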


\begin{proof}
  The interval is sober because it is the spectrum of $\I[\ms{i}]$, which is spatial by (\AxiomSQCP).
\end{proof}

\begin{lemma}\label{lem:strong-link-exp-ideal}
  Being strongly linked is an exponential ideal.
\end{lemma}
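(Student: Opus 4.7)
The plan is to unfold the definition of strong linkedness and leverage exponential transposition together with the pointwise nature of the observational preorder on function spaces. Given a strongly linked $Y$ and an arbitrary type $X$, I want to show that the endpoint map $(Y^X)^\I \to {\obsle[Y^X]}$ is an equivalence. Since both sides embed into $Y^X \times Y^X$ (the latter as a subtype by definition, the former via the observation preceding this lemma which uses \AxiomSQCP), it suffices to identify their images as subtypes of $Y^X \times Y^X$.

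First, using the exponential law $(Y^X)^\I \cong (Y^\I)^X$ and the strong linkedness of $Y$ which gives $Y^\I \cong {\obsle[Y]}$, I would obtain an equivalence $(Y^X)^\I \cong ({\obsle[Y]})^X$. Under this identification, the endpoint evaluation $(Y^X)^\I \to Y^X \times Y^X$ is the post-composition with the inclusion ${\obsle[Y]}\hookrightarrow Y\times Y$. Consequently, a pair $(f,g) : Y^X \times Y^X$ lies in the image of the endpoint map if and only if $f(x) \obsle[Y] g(x)$ holds pointwise for every $x : X$.

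Next I would verify that this pointwise condition is exactly the observational preorder on $Y^X$. For $(\subseteq)$: since the endpoint map $(Y^X)^\I \to Y^X\times Y^X$ already factors through ${\obsle[Y^X]}$ by the earlier observation, any pair in the pointwise image is automatically in $\obsle[Y^X]$. For $(\supseteq)$: given $f \obsle[Y^X] g$ and any $x : X$ together with any observation $V : Y \to \I$, the composite $V \circ \ev_x : Y^X \to \I$ is an observation on $Y^X$, hence $V(f(x)) \le_\I V(g(x))$, yielding $f(x) \obsle[Y] g(x)$ for all $x$.

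The only subtle step is the second inclusion, but it dissolves into a one-line calculation using that observations on $Y^X$ include all those of the form $V \circ \ev_x$; no further axioms beyond the presence of the factorisation through $\obsle[-]$ (already implicit in the definition of strongly linked) are needed. Combining the two inclusions, the two embeddings $(Y^X)^\I \hookrightarrow Y^X \times Y^X$ and ${\obsle[Y^X]}\hookrightarrow Y^X \times Y^X$ have the same image, so the canonical map $(Y^X)^\I \to {\obsle[Y^X]}$ is an equivalence, proving $Y^X$ strongly linked.
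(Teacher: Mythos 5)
Your proof is correct, but it takes a genuinely different route from the paper. The paper factors the notion: a type is strongly linked iff it is linked \emph{and} $\I$-separated; it then cites Phoa's thesis (Prop.\ 5.4.4) for the fact that linked types form an exponential ideal (exponentials being internal limits) and appeals to the general fact that separated classes form a (reflective) exponential ideal. You instead argue directly: the identification $(Y^X)^\I\cong (Y^\I)^X\cong ({\obsle[Y]})^X$ exhibits the path space as the pointwise relation inside $Y^X\times Y^X$, and you then check that the pointwise relation coincides with ${\obsle[Y^X]}$ — one inclusion via the unlabeled observation that boundary evaluation factors through $\obsle$, the other via observations of the form $V\circ\ev_x$. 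Two remarks. First, a small imprecision: the observation you cite gives only the \emph{factorisation} of $\pair{\ev_0,\ev_1}$ through ${\obsle[Y^X]}$, not that $(Y^X)^\I\to Y^X\times Y^X$ is an embedding; the embeddingness is in fact supplied by your own identification $(Y^X)^\I\cong({\obsle[Y]})^X\hookrightarrow (Y\times Y)^X$, so the argument stands but the attribution should be corrected. Second, your proof invokes (\AxiomSQCP) (needed for $0\obsle[\I]1$ in the factorisation observation), whereas the paper states the lemma without an axiom tag; in practice the lemma is only applied under (\AxiomSQCP), so this costs nothing here, but it is a genuine difference in hypotheses. What your approach buys is self-containedness and the fact that it establishes \Cref{cor:strong-link-obs-pw} (the pointwise description of ${\obsle[Y^X]}$) along the way rather than deriving it afterwards; what the paper's approach buys is brevity and modularity, at the price of an external citation.
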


\begin{proof}
  \citet[Prop.\ 5.4.4]{PhoaWesleyKym-Son1991DtiR} implies that being linked is an exponential ideal, because an exponential is an internal limit. A type $X$ is strongly linked if and only if it is linked \emph{and} it is $\I$-separated in the sense that $X^\I\to X\times X$ is an embedding; but $\I$-separated types also form an exponential ideal (in fact, a reflective exponential ideal).
\end{proof}

\begin{lemma}[\AxiomSQCP]\label{lem:O-strongly-linked}
  Any algebra of observations is strongly linked.
\end{lemma}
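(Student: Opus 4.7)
The statement is immediate from the two preceding lemmas, so my plan is simply to combine them. By \Cref{lem:int-strong-link}, the interval $\I$ is strongly linked under (\AxiomSQCP). By \Cref{lem:strong-link-exp-ideal}, the class of strongly linked types is an exponential ideal, meaning it is closed under forming exponentials $Y^X$ whenever the codomain $Y$ is strongly linked, for arbitrary $X$.

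The algebra of observations $\opens{X}$ is by definition the exponential $\I^X$, i.e.\ the exponential of a strongly linked type by an arbitrary type. Applying the exponential ideal property from \Cref{lem:strong-link-exp-ideal} to the strongly linked type $\I$ from \Cref{lem:int-strong-link} immediately yields that $\opens{X}$ is strongly linked. Hence the proof is a one-line appeal to these two facts and requires no further computation; there is no real obstacle to speak of, since the content of strong linkedness has been packaged into the exponential ideal lemma precisely to make results of this form automatic.
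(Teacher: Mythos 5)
Your proof is correct and is exactly the paper's argument: combine \cref{lem:int-strong-link} ($\I$ is strongly linked under (\AxiomSQCP)) with \cref{lem:strong-link-exp-ideal} (strongly linked types form an exponential ideal) to conclude that $\opens{X}=\I^X$ is strongly linked. No differences worth noting.
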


\begin{proof}
  We assume that $A=\opens{X}$ for some not necessarily sober $X$; then $\opens{X}=\I^X$ is strongly linked by \zcref{lem:strong-link-exp-ideal,lem:int-strong-link}.
\end{proof}

\begin{corollary}\label{cor:strong-link-obs-pw}
  When $Y$ is strongly linked, the observational preorder on $Y^X$ is induced pointwise by the observational preorder on $Y$.
\end{corollary}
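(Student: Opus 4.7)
The plan is to prove both directions of the equivalence \(f\obsle[Y^X]g \eq \fa{x}{X} f(x)\obsle[Y]g(x)\) separately, relying chiefly on the identification of the observational preorder on strongly linked types with the path space, together with the universal monotonicity of any map in the observational preorder (\Cref{lem:anymapmonotoneintriscorder}).

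For the forward direction, I would fix \(x:X\) and apply \Cref{lem:anymapmonotoneintriscorder} to the evaluation map \(\ev_x\colon Y^X\to Y\): since every function is monotone in the observational preorder, \(f\obsle[Y^X]g\) immediately yields \(f(x) = \ev_x(f)\obsle[Y]\ev_x(g) = g(x)\). This part does not use any assumption on \(Y\).

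For the backward direction, the plan is to realise the pointwise hypothesis as a path in \(Y^X\), and then use that \(Y^X\) is itself strongly linked to convert this path back into an observational comparison. Concretely, given \(f,g:Y^X\) with \(f(x)\obsle[Y]g(x)\) for every \(x:X\), strong linkedness of \(Y\) gives an equivalence \(Y^\I\cong{\obsle[Y]}\) sending a path to its pair of endpoints; under this equivalence, the family of observational inequalities assembles into a map \(X\to Y^\I\) that picks out, for each \(x\), the unique path from \(f(x)\) to \(g(x)\). Transposing yields \(\beta\colon\I\to Y^X\) with \(\beta(0)=f\) and \(\beta(1)=g\).

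To finish, I would invoke \Cref{lem:strong-link-exp-ideal}: since \(Y\) is strongly linked, so is \(Y^X\), hence the evaluation map \((Y^X)^\I\to{\obsle[Y^X]}\) is an equivalence. The path \(\beta\) thus corresponds to a witness of \(f\obsle[Y^X]g\), completing the proof. The main delicate point is the transposition step: one must observe that the existence of a well-defined map \(X\to Y^\I\) requires no choice, because strong linkedness means the path assigned to each pointwise inequality is literally a value of the equivalence \(Y^\I\cong{\obsle[Y]}\), not merely a mere existence of some path; this is exactly what the adjective \emph{strongly} provides over plain linkedness.
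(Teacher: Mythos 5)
Your proposal is correct and rests on the same two ingredients as the paper's proof: the exponential-ideal property of strong linkedness (\cref{lem:strong-link-exp-ideal}) and the currying equivalence $(Y^X)^\I\cong (Y^\I)^X$ combined with the identification of path spaces with observational preorders. The paper compresses this into a single chain of equivalences ${\obsle[Y^X]}\cong (Y^X)^\I\cong (Y^\I)^X\cong (\obsle[Y])^X$, whereas you unfold it into two implications (using monotonicity of evaluation maps for the forward direction), but the mathematical content is the same.
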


\begin{proof}
  Let $Y$ be strongly linked; by \zcref{lem:strong-link-exp-ideal}, so is $Y^X$. Thus $\obsle[Y^X]$ is precisely $(Y^X)^\I\cong (Y^\I)^X\cong (\obsle[Y])^X$.
\end{proof}

\begin{proposition}[\AxiomSQCP]\label{prop:specordopens}
  On any observation algebra $\opens{X}$, the observational preorder coincides with the canonical order.
\end{proposition}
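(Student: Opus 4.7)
The plan is to chain together three ingredients already established in the section: strong linkedness of $\I$, the corollary that observational preorders on function spaces into strongly linked targets are computed pointwise, and the identification of the observational preorder on $\I$ with its canonical order. Explicitly, I want to reduce the statement to the identity $\obsle[\I]\,=\,\le_\I$ on the interval itself, then propagate it pointwise.

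First I would unfold the canonical order on the observation algebra: by the example following \Cref{defn:specialisation} (namely the computation of $\le_{\opens X}$), the canonical order on $\opens X = \I^X$ is already pointwise $\le_\I$. So it suffices to show that $\obsle[\opens X]$ is also pointwise $\le_\I$. For this, invoke \Cref{lem:int-strong-link} to get that $\I$ is strongly linked under (\AxiomSQCP), then apply \Cref{cor:strong-link-obs-pw} with $Y = \I$ to conclude that $\obsle[\I^X]$ is induced pointwise by $\obsle[\I]$.

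It then remains to identify $\obsle[\I]$ with the canonical order $\le_\I$. Here the cleanest route is to note that $\I \cong \spec\I[\ms{i}]$ is affine (since $\I[\ms{i}]$ is quasi-coherent by (\AxiomSQCP) and \Cref{cor:phoa-vs-quasicoherence}), so by \Cref{lem:specorderofaffine} the observational preorder on $\I$ agrees with its satisfaction order as a spectrum; the satisfaction order on $\spec\I[\ms{i}]$ is in turn induced by the canonical order on $\opens{1} \cong \I$ per the observation following \Cref{lem:specorderofaffine}, which is just $\le_\I$.

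Combining the two identifications, $\obsle[\opens X]$ is pointwise $\obsle[\I]\,=\,\le_\I$, which is exactly $\le_{\opens X}$. I do not expect any genuine obstacle here: each step is a direct citation of a result from Sections 5 and 6, and the only thing to keep straight is the distinction between $\I$ in its role as a free algebra on one generator (giving the satisfaction/canonical correspondence) and $\I$ as the target of observations (giving the pointwise computation).
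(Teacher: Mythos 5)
Your proof is correct and follows essentially the same route as the paper: reduce to the pointwise computation of $\obsle[\I^X]$ via strong linkedness of $\I$ and \cref{cor:strong-link-obs-pw}, then identify $\obsle[\I]$ with $\le_\I$. The only (cosmetic) difference is that the paper justifies this last identification directly from Phoa's principle and strong linkedness of $\I$, whereas you route it through affineness of $\I\cong\spec\I[\ms{i}]$ and \cref{lem:specorderofaffine}; both are valid one-line citations.
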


\begin{proof}
  By Phoa's principle, $\I$ is strongly linked; thus by \zcref{cor:strong-link-obs-pw}, the observational preorder on $\I^X$ is pointwise induced by that of $\I$. In particular, we have the following equivalences involving the function space $\opens{X}=\I^X$:
  \begin{align*}
    U\obsle[\I^X]V 
    &\eq \fa{x}{X} U(x)\obsle[\I] V(x)
    \\
    &\eq \fa{x}{X} U(x)\le_{\I} V(x)
    \\
    &\eq U\le_{\opens{X}} V \qedhere
  \end{align*}
  The second equivalence uses Phoa's principle and \zcref{lem:O-strongly-linked} to identify the canonical order on $\I$ with its observational preorder.
\end{proof}

\begin{proposition}[\AxiomSQCP]\label{cor:specordonalgiscan}
  For a spatial $\I$-algebra $A$, the observational preorder on the underlying set of $A$ is induced by the canonical order on $\opens\spec{A}$ by the counit isomorphism $\iota_A\colon A\to \opens\spec{A}$. Hence the observational order and canonical order of $A$ coincide.
\end{proposition}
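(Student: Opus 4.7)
The plan is to transport the equivalence of orders from \Cref{prop:specordopens} along the counit isomorphism $\iota_A\colon A\to \opens\spec{A}$, which exists as an isomorphism of $\I$-algebras precisely because $A$ is quasi-coherent. The key observation is that the observational preorder is a purely type-theoretic invariant, so any type equivalence transports it in both directions; and since $\iota_A$ is in particular an isomorphism of $\I$-algebras, it also transports the canonical order. Concatenating these two transports with the already established \Cref{prop:specordopens} will give the result.

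More concretely, first I would note that \Cref{lem:anymapmonotoneintriscorder} applied to $\iota_A$ and its inverse yields that for $a,a':A$ we have
\[
  a\obsle[A] a' \eq \iota_A(a)\obsle[\opens\spec{A}] \iota_A(a').
\]
Next, I would invoke \Cref{prop:specordopens} (which requires (\AxiomSQCP)) to replace the observational preorder on $\opens\spec{A}$ by the canonical order, giving
\[
  \iota_A(a)\obsle[\opens\spec{A}] \iota_A(a') \eq \iota_A(a)\le_{\opens\spec{A}} \iota_A(a').
\]
Finally, because $\iota_A$ is an isomorphism of $\I$-algebras, it preserves and reflects the canonical order, so the right-hand side is equivalent to $a\le_A a'$. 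Chaining these three equivalences yields the claim that the observational preorder on $A$ is induced by the canonical order on $\opens\spec{A}$ via $\iota_A$, and consequently coincides with the canonical order on $A$ itself.

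There is no real obstacle here: the work has already been done in \Cref{prop:specordopens}, and this statement is essentially its transfer along the quasi-coherence isomorphism. The only thing worth being careful about is that one genuinely needs $\iota_A$ to be an \emph{algebra} isomorphism (not merely a bijection of underlying sets) in order to conclude that the canonical orders on $A$ and $\opens\spec{A}$ match; this is guaranteed by the definition of quasi-coherence.
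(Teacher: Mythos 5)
Your proof is correct and follows essentially the same route as the paper's: both transport along the counit isomorphism $\iota_A$, use \cref{prop:specordopens} to identify the canonical and observational orders on $\opens\spec{A}$, and use monotonicity of arbitrary maps (your appeal to \cref{lem:anymapmonotoneintriscorder} applied to $\iota_A$ and its inverse is exactly the paper's remark that any bijection tracks an isomorphism of observational preorders). The only difference is the direction in which the chain of equivalences is read, which is immaterial.
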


\begin{proof}
  Because $A$ is spatial, the counit component $\iota_A$ gives us an order-isomorphism ${(A,\le_A)}\to (\opens\spec{A},\le_{\opens\spec{A}})$. By \zcref{prop:specordopens}, this is an order-isomorphism ${(A,\le_A)}\to {(\opens\spec{A},\obsle[\opens\spec{A}])}$. Meanwhile, any bijective function tracks an isomorphism of observational preorders, so the inverse function $\iota_A^{-1}\colon \opens\spec{A}\to A$ tracks an order-isomorphism  ${(\opens\spec{A},\obsle[\opens\spec{A}])}\to {(A,\obsle[A])}$. 
\end{proof}

\begin{corollary}[\AxiomSQCP]
  The canonical partial order, observational preorder, and behavioural preorder of a spatial $\I$-algebra all coincide.
\end{corollary}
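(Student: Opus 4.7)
The plan is to assemble this as a direct corollary of the two previous results that each identified one of the preorders with the canonical order. Since a quasi-coherent $\I$-algebra is in particular pre-quasi-coherent, \Cref{lem:canon-beh-coincide} already gives the equivalence ${\le_A} = {\behle[A]}$ on any quasi-coherent $A$, so nothing further is required for the behavioural preorder.

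For the observational preorder, I would invoke \Cref{cor:specordonalgiscan}, which states that the counit isomorphism $\iota_A\colon A\to\opens\spec{A}$ transports the canonical order on $\opens\spec{A}$ to the observational preorder on $A$, and that these coincide with the canonical order on $A$. Combining this with the first step yields the three-way identification ${\le_A} = {\obsle[A]} = {\behle[A]}$.

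The proof is therefore a one-liner citing \Cref{lem:canon-beh-coincide} and \Cref{cor:specordonalgiscan}. There is no real obstacle here, since the substantive work has already been done: \Cref{lem:canon-beh-coincide} needed only the embedding property of $\iota_A$, while \Cref{cor:specordonalgiscan} used (\AxiomSQCP) via \Cref{prop:specordopens} to compare canonical and observational orders through the observation algebra $\opens\spec{A}$.
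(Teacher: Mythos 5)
Your proposal is correct and matches the paper's proof exactly: the paper also dispatches this corollary by citing \Cref{cor:specordonalgiscan} together with \Cref{lem:canon-beh-coincide}. Your additional remark that quasi-coherence implies pre-quasi-coherence (so that \Cref{lem:canon-beh-coincide} applies) is the right justification for the behavioural half.
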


\begin{proof}
  By \zcref{cor:specordonalgiscan,lem:canon-beh-coincide}.
\end{proof}

\subsection{Spatiality of finitely presented algebras}

The (\AxiomSQCP) axiom does not imply that the simplices $\Delta^n$ are sober; for this, we need a stronger quasi-coherence principle that applies to all finitely \emph{presented} algebras.

\PrintAxiomSQCF

\begin{remark}
  In particular, the axiom (\AxiomSQCF) implies all our previously mentioned quasi-coherence axioms, including (\AxiomSQCP) and (\AxiomSQCID) (hence also (\AxiomSQCI)).
\end{remark}

It is immediate that $\Delta^n\cong \spec\I[\ms{i}_1,\cdots,\ms{i}_n]/\ms{i}_1\geq \cdots \geq \ms{i}_n$ is sober under (\AxiomSQCF).

\begin{corollary}[\AxiomSQCF]
  Each $n$-simplex $\Delta^n$ is internally connected.
\end{corollary}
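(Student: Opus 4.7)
The plan is to invoke \Cref{cor:connectedpreservediscrete} by exhibiting a top element in the observational preorder on $\Delta^n$. Under (\AxiomSQCF), the simplex $\Delta^n\cong\spec A$ with $A = \I[\ms{i}_1,\cdots,\ms{i}_n]/\ms{i}_1\geq\cdots\geq \ms{i}_n$ is affine, so by \Cref{lem:specorderofaffine} its observational preorder agrees with the satisfaction order on $\spec A$.

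Applying \Cref{lem:cancoincide} to the canonical quotient $\I[n]\surj A$, the satisfaction order on $\spec A$ is obtained by restricting the satisfaction order on $\spec \I[n]\cong \I^n$ along the embedding $\spec A\hookrightarrow \spec \I[n]$. The satisfaction order on $\spec\I[n]$ is just the pointwise order on $\I^n$, and its top element $(1,\ldots,1)$ trivially satisfies the chain $1\geq\cdots\geq 1$, so it lies in $\Delta^n$. Hence the constant-$1$ tuple is a top element for the observational preorder on $\Delta^n$, and \Cref{cor:connectedpreservediscrete} gives internal connectedness.

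No nontrivial step arises; the entire argument is a mechanical chaining of results from the section on order structure, replacing the retraction argument used in \Cref{cor:cubes-internally-connected} (which continues to provide an alternative route, since $\Delta^n$ is a retract of $\I^n$ via, e.g., $(i_1,\ldots,i_n)\mapsto(i_1,i_1\wedge i_2,\ldots,i_1\wedge\cdots\wedge i_n)$, and (\AxiomSQCF) is stronger than (\AxiomSQCP)). The only mild subtlety is that \Cref{cor:connectedpreservediscrete} nominally requires (\AxiomNT) to conclude that decidable-equality sets have discrete observational preorder; this hypothesis is presumed to be in force throughout the present discussion, as in \Cref{cor:cubes-internally-connected}.
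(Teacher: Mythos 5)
Your proposal is correct and follows essentially the same route as the paper, which proves this corollary by the same argument as \Cref{cor:cubes-internally-connected}: under (\AxiomSQCF) the simplex is affine, its observational preorder is the satisfaction (pointwise) order, which has the top element $(1,\ldots,1)$, and \Cref{cor:connectedpreservediscrete} applies. Your remark that (\AxiomNT) is tacitly required (the statement lists only (\AxiomSQCF)) is a fair observation about the paper's hypothesis bookkeeping, and your alternative via the retraction of $\I^n$ is precisely the mechanism the paper already used to handle $\Delta^n$ in \Cref{cor:cubes-internally-connected}.
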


\begin{proof}
  This follows in the same way as \zcref{cor:cubes-internally-connected}.
\end{proof}

\subsection{Classification of simplices by algebras}

At the end of this section, we describe another interesting perspective arising from the proof of \zcref{thm:phoa-spectrum}. We have seen that the dualising object $\I$ has a double role: It is both an algebra and a spectrum. The mixture of algebraic and geometric object has also been observed in \zcref{rem:alggeoI}. The proof of \zcref{thm:phoa-spectrum} gives us many more such examples. For instance, $\I[i]$ classifies the order on $\I$, which by definition is the spectrum $\Delta_2$. In fact, \emph{all} the simplices are classified by some algebra.

Just as we described $\Delta^n$ in terms of descending sequences in $\I$, we can do the same replacing $\I$ with another algebra $A$.
For any $n : \N$ and $\I$-algebra $A$, we define $\Delta[A]^{n}$ to be the type of lists of decreasing elements of length $n$ in $A$, namely
\[ \Delta[A]^{n} \coloneq \scomp{a_1,\cdots,a_n : A}{a_1 \ge \cdots \ge a_n}\text{.} \]
Of course $\Delta[\I]^n$ is simply $\Delta^n$. We have a general algebraic description of $\Delta[A]^n$.

\begin{proposition}\label{prop:simplicesasalgebra}
  For any $n : \N$ and $\I$-algebra $A$, there is an equivalence 
  \[ \Delta[A]^{n+1} \cong A[n]/\ms{i}_1 \le \cdots \le \ms{i}_n\text{,} \]
  sending any $a_0 \ge \cdots \ge a_n$ in $A$ to the polynomial 
  \[ a_0 \wedge \ms{i}_1 \vee a_1 \wedge \cdots \wedge \ms{i}_n \vee a_n\text{.} \]
\end{proposition}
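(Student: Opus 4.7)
The plan is to construct an explicit inverse $\psi$ to $\phi$ via a family of evaluation homomorphisms and verify both round-trips directly. For each $k \in \{0, 1, \ldots, n\}$, the assignment sending $\ms{i}_j$ to $0$ for $j \le k$ and to $1$ for $j > k$ yields a valid chain $0 \le \cdots \le 0 \le 1 \le \cdots \le 1$, and hence extends $\id_A$ to an $\I$-algebra homomorphism $\ev_k \colon A[n]/\ms{i}_1 \le \cdots \le \ms{i}_n \to A$. Since changing $\ms{i}_j$ from $1$ to $0$ can only decrease any lattice polynomial (by monotonicity of $\wedge$ and $\vee$), we have $\ev_0(p) \ge \ev_1(p) \ge \cdots \ge \ev_n(p)$ for every $p$, so the assignment $\psi(p) \coloneq (\ev_0(p), \ldots, \ev_n(p))$ defines a map $\psi \colon A[n]/\ms{i}_1 \le \cdots \le \ms{i}_n \to \Delta[A]^{n+1}$.

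A direct calculation then gives $\psi \circ \phi = \id$: applying $\ev_k$ to $\phi(a_0, \ldots, a_n)$ zeros out each disjunct $a_j \wedge \ms{i}_{j+1}$ with $j < k$ and collapses the remainder to $a_k \vee a_{k+1} \vee \cdots \vee a_n$, which equals $a_k$ because the sequence is descending. The main step is the reverse round-trip $\phi \circ \psi = \id$, which I would deduce by showing that the image of $\phi$ is a sub-$\I$-algebra containing $A$ and every generator $\ms{i}_k$. Membership of $A$ uses the constant sequence $(a, \ldots, a)$, and membership of $\ms{i}_k$ uses the indicator sequence $(1, \ldots, 1, 0, \ldots, 0)$ with transition after position $k-1$, exploiting $\ms{i}_1 \vee \cdots \vee \ms{i}_k = \ms{i}_k$ from the chain relation. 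Closure under join is immediate because $\phi$ is pointwise: $\phi(a) \vee \phi(b) = \phi((a_j \vee b_j)_j)$.

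The main obstacle is closure under meet. After expanding $\phi(a) \wedge \phi(b)$, the chain relation simplifies $\ms{i}_j \wedge \ms{i}_k$ to $\ms{i}_{\min(j,k)}$, and grouping the resulting monomials by $m = \min(j,k)$, the descending property of $a$ and $b$ delivers the crucial simplification $\bigvee_{\min(j,k)=m} a_j \wedge b_k = a_m \wedge b_m$, collapsing the product to $\phi(a \wedge b)$. This meet-closure computation is elementary but requires careful indexing and is the one place where the chain relation interacts essentially with the descending structure of the sequences. An alternative strategy would be induction on $n$ using the normal form theorem for $A[\ms{j}]$ cited earlier, peeling off one generator at a time, but propagating the chain relation across the induction step appears to demand similar bookkeeping, so the direct approach outlined above is cleaner.
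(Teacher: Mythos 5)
Your proof is correct, and it is genuinely different from what the paper does: the paper omits the argument entirely, deferring to the normal form theorem for lattice polynomials of Lausch--N\"obauer (Ch.~1, Thm.~10.21), whereas you give a self-contained verification. Your two halves are both sound. The retraction $\psi = (\ev_0,\dots,\ev_n)$ is well defined because each assignment $\ms{i}_j\mapsto 0$ ($j\le k$), $\ms{i}_j\mapsto 1$ ($j>k$) respects the chain relation, and lands in $\Delta[A]^{n+1}$ by monotonicity of term functions in each generator; the computation $\ev_k(\phi(a)) = a_k\vee\cdots\vee a_n = a_k$ is right. For surjectivity, the generation argument works: the image contains $A$ (constant tuples, by absorption) and each $\ms{i}_k$ (via the indicator tuple and $\ms{i}_1\vee\cdots\vee\ms{i}_k=\ms{i}_k$), and is closed under $\vee$ by distributivity and under $\wedge$ by your grouping $\bigvee_{\min(j,k)=m}a_j\wedge b_k = a_m\wedge b_m$, which indeed uses only that $a$ and $b$ are descending. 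Since injectivity follows from the retraction, $\phi$ is bijective with inverse $\psi$. A side benefit of your route is that it exhibits $\phi$ explicitly as a lattice isomorphism (commuting with $\wedge$, $\vee$, and the $A$-algebra structure), which is what the paper actually needs downstream (e.g.\ in the proofs of \cref{specissimplicial,specis1t,thm:complete}) but leaves implicit in the citation; the cost is the index bookkeeping in the meet-closure step, which the external normal form theorem packages away. One presentational note: you should state explicitly (as the paper's remark following the proposition does) that the displayed polynomial is read as the join of meets $\bigvee_{j<n}(a_j\wedge\ms{i}_{j+1})\vee a_n$, since all your computations use that reading.
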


We omit the proof here, which again follows from a general normal form for polynomials with finitely many variables for bounded distributive lattices; see \citet[Ch.\ 1, Thm.\ 10.21]{lausch2000algebra}. We only note here that the above polynomial is well-defined, in the sense that the expression is associative and its value does not depend on how one add parenthesises. The reason for this is that for any expression $a \vee b \wedge c$ to have a definite meaning in a distributive lattice, i.e.\ the two value $(a \vee b) \wedge c$ and $a \vee (b \wedge c)$ coincide, it suffices to have $a \le c$. Then the condition $a_0 \ge \cdots \ge a_n$ for the parameters, and the fact that $\ms{i}_1 \le \cdots \le \ms{i}_n$, makes sure the above polynomial has a definite meaning in the quotient.

\section{Chain completeness and infinitary domain theory}\label{sec:infdomain}

Until this point, we have seen that elementary axioms for synthetic domain theory follow from (SQCF) for bounded distributive lattices and thus $\sigma$-frames. In this section we show that the final axiom of synthetic domain theory, viz.\ chain completeness of the interval $\I$, is also a consequence of quasi-coherence for $\sigma$-frames. In fact, it implies the main infinitary axiom of synthetic domain theory in~\citet{fiore-plotkin:1996}---the initial algebra $\omega$ for $L$ is inductive.

\subsection{Chain completeness and inductivity of the initial algebra}

Chain completeness is again specified as an orthogonality condition. Let $\omega,\ov\omega$ denote the initial algebra and final coalgebra for the partial map classifier $L$, respectively. Intuitively, $\omega$ behaves as an infinite chain, and $\ov\omega$ as an infinite chain with a top element. The chain completeness condition is thus intuitively stating a type has joins for infinite sequences internally:

\begin{definition}[Chain completeness]
  A type $X$ is \emph{chain complete} if it is right orthogonal to the inclusion $\omega \hook \ov\omega$.
\end{definition}

The importance for chain completeness of $\I$ for domain theory is that, as observed in~\citet{hyland1990first}, it produces fixed points of endo-morphisms on a suitably defined class of objects, which could be viewed as \emph{domains}. 

As $L$ by construction is a polynomial functor, it preserves connected limits. This implies the final coalgebra $\ov\omega$ can always be constructed as a sequential limit as follows, 
\[\begin{tikzcd}
	{\ov\omega\cong\lt_{n:\N}L^n1} \ar[r] & \cdots & {L^21} & L1 & 1
	\arrow[from=1-2, to=1-3]
	\arrow["{{L!}}"', from=1-3, to=1-4]
	\arrow["{{!}}"', from=1-4, to=1-5]
\end{tikzcd}\]
However, the dual statement fails for the initial algebra $\omega$, i.e.\ it is \emph{not} always the following sequential colimit,
\[\begin{tikzcd}
	\emp & {L(\emp)} & {L^2(\emp)} & \cdots
	\arrow["{{?}}"', from=1-1, to=1-2]
	\arrow["{{L(?)}}"', from=1-2, to=1-3]
	\arrow[from=1-3, to=1-4]
\end{tikzcd}\]

\begin{definition}
  $\omega$ is \emph{inductive} if it is the sequential colimit above.
\end{definition}
The failure of inductivity of $\omega$ has been observed in various realisability models for synthetic domain theory~\citep{VANOOSTEN2000233}. As shown by \citet{fiore-plotkin:1996}, the inductivity of $\omega$ is indeed a desirable property: If it holds, then there will be a much closer correspondence between models for \emph{synthetic domain theory} and models for \emph{axiomatic domain theory}. Specifically, for a model of synthetic domain theory where $\omega$ is inductive, one can construct a corresponding model of axiomatic domain theory, where the domains are the \emph{well-complete types}\footnote{A type $X$ is well-complete if $LX$ is chain complete.}. The inductivity of $\omega$ makes $\ov\omega$ an \emph{inductive fixed point object} in the category of domains in the sense of \emph{loc.\ cit.}, which is part of their requirement for an axiomatic model of domain theory. 

The remaining part of this section is dedicated to showing that under a suitable quasi-coherence assumption, we can indeed show that $\I$ is chain complete (\zcref{thm:complete}) and $\omega$ is inductive (\zcref{prop:omegacolimit}). In fact, the proof of \zcref{thm:complete} is the only place we where have used the essential properties of $\sigma$-frames, and we will also indicate why it fails for bounded distributive lattices in \zcref{rem:whynotdis}.

\subsection{Spatiality of countably presented algebras and infinitary domain theory}
The connection between the infinitary aspect of synthetic domain theory and quasi-coherence starts from the observation that the final coalgebra $\ov\omega$ for the functor $L$ can be described as a \emph{spectrum}:

\begin{example}[$\ov\omega$ is a spectrum]\label{exm:ovomegasober}
  As mentioned, the final coalgebra $\ov\omega$ can be characterised as the sequential limit,
  \[\begin{tikzcd}
    \cdots & {\Delta^2} & {\Delta^1} & {\Delta^0} \\
    \cdots & {L^21} & L1 & 1
    \arrow[from=1-1, to=1-2]
    \arrow[from=1-2, to=1-3]
    \arrow["\cong"{description}, from=1-2, to=2-2]
    \arrow[from=1-3, to=1-4]
    \arrow["\cong"{description}, from=1-3, to=2-3]
    \arrow["\cong"{description}, from=1-4, to=2-4]
    \arrow[from=2-1, to=2-2]
    \arrow["{L!}"', from=2-2, to=2-3]
    \arrow["{!}"', from=2-3, to=2-4]
  \end{tikzcd}\]
  where under the isomorphisms, the transition map $\Delta^{n+1} \to \Delta^n$ takes the sequence $i_0 \ge \cdots \ge i_n$ to the final segment $i_1 \ge \cdots \ge i_n$. As observed by \citet[Sec.\ 5.2]{hyland1990first}, it also has an equivalent type-theoretic description as the object of infinite descending sequences in $\I$,
  \[ \ov\omega \cong \scomp{i : \N \to \I}{\fa n\N i_n \ge i_{n+1}}\text{.} \]
  This way, we have a natural equivalence
  \[ \ov\omega \cong \spec\I[\N]/\pair{\ms{i}_n \ge \ms{i}_{n+1}}_{n:\N}\text{.} \]
  Here now $\I[\N]/\pair{\ms{i}_n \ge \ms{i}_{n+1}}_{n:\N}$ is a \emph{countably presented} $\I$-algebra.
\end{example}

In general, a countably presented $\I$-algebra is of the form $\I[\N]/\pair{s_n = t_n}_{n:\N}$ for some $\N$-indexed lists of terms $s,t \colon \N \to \I[\N]$. In particular, all finitely presented $\I$-algebras will also be countably presented. Motivated by the above characterisation of $\ov\omega$, we naturally consider the following stronger quasi-coherence principle:

\PrintAxiomSQCC

\begin{remark}
  Our modular development in the previous sections can be now immediately applied to $\ov\omega$ if we assume (\AxiomSQCC). For instance, the description of the observational preorder for sober spaces in \zcref{lem:specorderofsober} now applies to $\ov\omega$, which shows this again coincides with its pointwise order as a subspace of $\I^\N$. In particular, it also has both a top and bottom element, thus \zcref{cor:connectedpreservediscrete} now implies $\ov\omega$ is also internally connected. 
\end{remark}

(\AxiomSQCC) combined with the non-triviality axiom (\AxiomNT) has many logical consequences. The crucial observation is that (\AxiomNT) implies a weak form of \emph{Nullstellensatz} result, as already noted by several authors~\citep{blechschmidt2021using,blechschmidt2020general,Cherubini_Coquand_Hutzler_2024}:

\begin{lemma}[\AxiomNT]\label{lem:nulls}
  For an sober space $X \cong \spec A$, $X \cong \emp$ iff $A$ is trivial, viz.\ $0=1$ in $A$.
\end{lemma}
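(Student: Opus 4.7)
The plan is to prove the two directions separately, exploiting quasi-coherence for the forward direction and the non-triviality axiom (\AxiomNT) for the backward direction.

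For the backward direction, I would suppose $0 = 1$ in $A$ and show $\spec A \cong \emp$. Any $\I$-algebra homomorphism $f\colon A \to \I$ must preserve the constants $0$ and $1$, so $0 = f(0) = f(1) = 1$ would hold in $\I$, directly contradicting (\AxiomNT). Hence there are no homomorphisms $A \to \I$, so $\spec A = \I\alg(A, \I) \cong \emp$.

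For the forward direction, the key move is to apply quasi-coherence. Since $X \cong \spec A$ is affine, $A$ is quasi-coherent, so the counit map gives an isomorphism of $\I$-algebras $A \cong \opens \spec A = \I^{\spec A} \cong \I^X$. If $X \cong \emp$, then $\I^\emp \cong 1$ is the terminal $\I$-algebra, in which trivially $0 = 1$. Transporting across the isomorphism $A \cong 1$, we conclude $0 = 1$ in $A$.

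There is no main obstacle here; both directions are essentially immediate once quasi-coherence and (\AxiomNT) are in play. The only thing to be slightly careful about is that the equivalence $A \cong \opens \spec A$ in the forward direction is an equivalence of $\I$-algebras and not merely of underlying types, so that the equation $0 = 1$ can be transported; this is already part of the definition of quasi-coherence (and indeed was emphasised in the discussion following the definition of quasi-coherent algebras).
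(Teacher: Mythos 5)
Your proof is correct and follows essentially the same route as the paper's: the backward direction uses (\AxiomNT) to rule out homomorphisms from a trivial algebra into $\I$, and the forward direction uses quasi-coherence to identify $A$ with $\I^{\spec A}\cong\I^\emp$, which is trivial. Your added remark that the counit is an isomorphism of $\I$-algebras (not just of types) is a reasonable point of care but does not change the argument.
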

\begin{proof}
  The backward direction holds since because $\I$ is non-trivial (\AxiomNT), thus there is no homomorphism from a trivial algebra to $\I$. For the forward direction: by assumption $A \cong \I^{\spec A} \cong \I^\emp$, which implies $A$ is trivial. 
\end{proof}

Together with (\AxiomSQCC), this implies the following form of Markov's principle; a similar result is shown by \citet{cherubini2024foundation}:

\begin{lemma}[\AxiomNT, \AxiomSQCC]\label{lem:markov}
  For any $i : \ov\omega$, we have
  \[ \neg\fa{n}{\N}\ist{i_n} \to \ex n\N\isf{i_n}\text{.} \]
\end{lemma}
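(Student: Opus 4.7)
The plan is to apply the Nullstellensatz (\Cref{lem:nulls}) to the countably presented $\I$-algebra $A \coloneq \I/\pair{i_n = 1}_{n : \N}$. This algebra encodes the hypothesis ``each $i_n$ equals $1$'', and it is genuinely countably presented because $i : \ov\omega$ furnishes a countable sequence in $\I$. By (\AxiomSQCC), $A$ is quasi-coherent and therefore affine.

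I would first identify $\spec A$ as a proposition: by initiality of $\I$, any $\I$-algebra homomorphism $A \to \I$ must be the (necessarily unique) identity-like map, and the relations $i_n = 1$ force $\ist{i_n}$ for every $n$. Extending the argument of \Cref{lem:openpropaffine} to countably many relations yields $\spec A \eq \fa{n}{\N}\ist{i_n}$. Under the hypothesis $\neg\fa{n}{\N}\ist{i_n}$, then, $\spec A \cong \emp$; combining \Cref{lem:nulls} with (\AxiomNT), it follows that $A$ must be the trivial algebra, i.e.\ $0 = 1$ in $A$.

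It remains to extract $\ex{n}{\N}\isf{i_n}$ from the triviality of $A$. Here I would invoke propositional stability (for the theory of $\sigma$-frames) together with the descending condition $i_0 \ge i_1 \ge \cdots$: each finitary partial quotient $\I/\pair{i_m = 1}_{m \le N}$ is computed as $\cv i_N = \scomp{j : \I}{j \le i_N}$, and $A$ arises as the $\sigma$-frame colimit of these finite quotients along the transition maps $\cv i_N \to \cv i_{N+1}$ given by $j\mapsto j\wedge i_{N+1}$. Triviality of $A$ should then transfer to triviality of some $\cv i_N$, which means $i_N = 0$ in $\I$, i.e.\ $\isf{i_N}$.

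The main obstacle is this last step: arguing rigorously that triviality of the $\sigma$-frame colimit $A$ implies triviality at some finite stage. This is precisely where the specific structure of $\sigma$-frames---in particular the presence of countable joins---enters the picture, since $\omega$-filtered colimits in $\sigma$-frames need not coincide with their set-theoretic analogues, and one must carefully track how countable joins interact with the finitary quotients $\cv i_N$ before concluding that the equality $0 = 1$ descends from $A$ to some individual stage.
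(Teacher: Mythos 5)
Your first two steps reproduce the paper's proof exactly: the paper identifies the proposition $\fa{n}{\N}\ist{i_n}$ with the spectrum of the countably presented algebra $\I/i \coloneq \I/\pair{i_n=1}_{n:\N}$, and combines (\AxiomSQCC) with the Nullstellensatz (\Cref{lem:nulls}) and (\AxiomNT) to conclude that $\neg\fa{n}{\N}\ist{i_n}$ forces $\I/i$ to be trivial. Where you diverge is the end: the paper dispatches the remaining step with the single unargued sentence ``this algebra is trivial iff $\ex{n}{\N}\isf{i_n}$'', whereas you flag precisely this step as the main obstacle and try to justify it by presenting $\I/i$ as the $\sigma$-frame colimit of the finitary quotients $\cv i_N$.

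Your worry is well founded, and your proposed route does not close the gap. The direction from $\ex{n}{\N}\isf{i_n}$ to triviality is immediate, but the converse is \emph{not} a fact of pure $\sigma$-frame algebra: since $\sigma$-frames form a countably-ary theory, the congruence generated by the countable family $\pair{i_n=1}_{n:\N}$ is in general strictly larger than the union of the congruences generated by its finite subfamilies, so triviality of the colimit need not descend to any finite stage $\cv i_N$. Concretely, in the $\sigma$-frame $\mc P(\N)$ (which is even countably presented, being generated by the singletons subject to pairwise disjointness and $\bigvee_m\set{m}=1$) the descending sequence of tails $i_n = \scomp{m:\N}{m\ge n}$ has no zero term, yet $\mc P(\N)/\pair{i_n=1}_{n:\N}$ is trivial: any $\sigma$-frame homomorphism sending every tail to $1$ must send every singleton to $0$ (as $\set{m}\wedge i_{m+1}=0$), hence sends $1=\bigvee_m\set{m}$ to $0$. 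So whatever makes the final biconditional true for the \emph{generic} interval must draw on more than the algebra of the quotient---some further interaction between quasi-coherence, the topology on the site, and countable joins in $\I$---and neither your sketch nor the paper's one-sentence assertion supplies that argument. You have correctly located where the real content of this lemma lies; the corresponding step of the paper's own proof needs the same expansion you were unable to provide.
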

\begin{proof}
  Let $i : \ov\omega$. Note that similar to \zcref{lem:openpropsober}, the proposition $\fa n\N \ist{i_n}$ by construction is the following sober space, 
  \[ \spec\I/i \cong \I\alg(\I/i,\I) \cong \fa n\N \ist{i_n}\text{,} \]
  where we have abbreviated the countably presented $\I$-algebra $\I/\pair{i_n=1}_{n:\N}$ as $\I/i$. Now if we have $\neg\fa n\N \ist{i_n}$, then $\spec\I/i \cong \emp$ which by \zcref{lem:nulls} implies $\I/i$ is trivial. But this algebra is trivial iff $\ex n\N \isf{i_n}$.
\end{proof}

Equipped with the above result, we can now proceed to study the initial algebra $\omega$ for $L$. \citet{JIBLADZE1997185} has given a beautiful formula for a type-theoretic description of $\omega$ as the following subset of $\ov\omega$, 
\[ \omega \coloneq \scomp{i : \ov\omega}{\fa\phi{\pp} (\fa n{\N} (\ist{i_n} \to \phi) \to \phi) \to \phi}\text{.} \]
For another proof, see e.g.\ that of \citet{VANOOSTEN2000233}. In the presence of \zcref{lem:markov}, this description can be greatly simplified:

\begin{proposition}[\AxiomNT, \AxiomSQCC]\label{prop:omegacolimit}
  $\omega$ is equivalent to the following subset of $\ov\omega$,
  \[ \omega \cong \scomp{i : \ov\omega}{\ex n{\N} \isf{i_n}}\text{.} \]
  In particular, $\omega$ is the colimit of the following sequence, thus is inductive:
  \[\begin{tikzcd}
    \emp & {\Delta^0} & {\Delta^1} & \cdots \\
    \emp & {L(\emp)} & {L^2(\emp)} & \cdots
    \arrow[from=1-1, to=1-2]
    \arrow["\cong"{description}, from=1-1, to=2-1]
    \arrow[from=1-2, to=1-3]
    \arrow["\cong"{description}, from=1-2, to=2-2]
    \arrow[from=1-3, to=1-4]
    \arrow["\cong"{description}, from=1-3, to=2-3]
    \arrow["{?}"', from=2-1, to=2-2]
    \arrow["{L(?)}"', from=2-2, to=2-3]
    \arrow[from=2-3, to=2-4]
  \end{tikzcd}\]
\end{proposition}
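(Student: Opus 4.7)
The plan is to split the proof into two stages: first simplify the Jibladze formula using the Markov-style principle of \Cref{lem:markov}, and then recognise the resulting subtype as the sequential colimit.

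For the first stage, I show $\omega \cong \scomp{i : \ov\omega}{\ex n\N \isf{i_n}}$. $(\Rightarrow)$ Suppose $i$ satisfies the Jibladze condition. By \Cref{lem:markov}, it suffices to prove $\neg\fa n\N \ist{i_n}$. Assume $\fa n\N \ist{i_n}$; then for any proposition $\phi$, the implication $\ist{i_n} \to \phi$ collapses to $\phi$, so $(\ist{i_n} \to \phi) \to \phi$ is trivially $\phi \to \phi$ for every $n$. Instantiating the Jibladze condition with $\phi \coloneq \bot$ then derives $\bot$. $(\Leftarrow)$ Suppose $\ex n\N \isf{i_n}$. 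Fix $\phi:\pp$ and assume $\fa n\N ((\ist{i_n} \to \phi) \to \phi)$. Since $\phi$ is a proposition, we may extract a concrete $n$ with $\isf{i_n}$; by (\AxiomNT), $\ist{i_n}$ is absurd, so $\ist{i_n} \to \phi$ holds vacuously, and the hypothesis at this $n$ delivers $\phi$.

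For the second stage, I assemble the identification of $L^n(\emp)$ with initial segments of $\ov\omega$. The base case $L(\emp) \cong 1 \cong \Delta^0$ follows because $L(\emp) \cong \sum_{i:\I}\emp^{\ist{i}} \cong \sum_{i:\I}\isf{i} \cong \set{0}$, using \Cref{prop:filed}. Iterating $L\Delta^n \cong \Delta^{n+1}$ from the corollary to \Cref{prop:liftofaffine} yields $L^n(\emp) \cong \Delta^{n-1}$ for $n \ge 1$. Under the embedding of each $L^n(\emp)$ into $\ov\omega$ (pad a descending tuple with zeros), the image is precisely the subtype $\scomp{i : \ov\omega}{\isf{i_{n-1}}}$, and the connecting map $L^n(\emp) \to L^{n+1}(\emp)$ corresponds to the obvious subset inclusion. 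Since each $L^n(\emp)$ embeds as a subset of the set $\ov\omega$, the sequential colimit is just the ascending union $\bigcup_n \scomp{i:\ov\omega}{\isf{i_n}} = \scomp{i:\ov\omega}{\ex n\N \isf{i_n}}$, which coincides with the description of $\omega$ from stage one.

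The main obstacle I anticipate is in stage two: verifying that the map $L^n(!) \colon L^n(\emp) \to L^{n+1}(\emp)$ in the colimit diagram genuinely corresponds to the subset inclusion on the $\ov\omega$ side. This requires careful naturality arguments — chasing the identifications $L^n(\emp) \cong \Delta^{n-1}$ and the embeddings into $\ov\omega$ through the functorial action of $L$, and checking they are compatible with the inverse-system presentation of $\ov\omega$. The Jibladze-to-Markov simplification in stage one is conceptually the heart of the matter, but is relatively short once one isolates the right observation that $\forall n.\ist{i_n}$ trivialises the Jibladze hypothesis.
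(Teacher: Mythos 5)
Your proposal is correct and its core (stage one) is essentially the paper's argument: both reduce the Jibladze description of $\omega$ to the predicate $\ex{n}{\N}\isf{i_n}$ by deriving $\neg\fa{n}{\N}\ist{i_n}$ from the Jibladze condition and then invoking \Cref{lem:markov}. Your variant instantiates $\phi \coloneq \bot$ after assuming $\fa{n}{\N}\ist{i_n}$, which is a touch cleaner than the paper's route through $\neg\fa{n}{\N}\neg\neg\ist{i_n}$ and \Cref{cor:opendnegclose} (in fact the direction of that equivalence you need is the trivial one), and you also spell out the converse implication, which the paper leaves implicit behind ``it suffices to show.'' Where you genuinely diverge is stage two: the paper simply cites \citet[Cor.~1.10]{VANOOSTEN2000233} for the fact that the simplified description makes $\omega$ the sequential colimit, whereas you sketch a direct verification via $L^{n+1}(\emp)\cong\Delta^n$, the zero-padding embeddings into $\ov\omega$, and the identification of the colimit of a chain of embeddings of sets with the ascending union. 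That sketch is sound (note $L(\emp)\cong\sum_{i:\I}\neg\ist{i}\cong\sum_{i:\I}\isf{i}\cong 1$ does use \Cref{prop:filed}, hence (\AxiomNT) and the fragment of (\AxiomSQCC) covering finitary quotients of $\I$), but the naturality check you flag — that $L^n(!)$ really is the subset inclusion under these identifications, so that the equivalence of stage one is the \emph{canonical} comparison map out of the colimit — is exactly the content the paper outsources to the citation, and it is the one substantive piece you would still have to write out to make the argument self-contained.
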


The transition maps $\Delta^n\to \Delta^{n+1}$ above are given by appending $0$ to the end of a descending sequence. Under the identification $\Delta^{n+1}\cong T(\Delta^n)$, the corresponding map $\Delta^n\to T(\Delta^n)$ is the unit component $\eta_T\colon \Delta^n\to T(\Delta^{n})$.
%%% JMS: I needed to explain some of the above, but I am starting to think that we might just want to have a section that describes all the different ways to identify constructions of the simplices, etc. This is getting confusing and there is some actual geometry to see here ;-)

\begin{proof}
  Let $i : \ov\omega$. It suffices to show that
  \[ \prth{\fa\phi{\pp} (\fa n{\N} (\ist{i_n} \to \phi) \to \phi) \to \phi} \to \ex{n}\N \isf{i_n}\text{.} \]
  Assume the premise. We can instantiate $\phi$ to $\emp$. By assumption, $\neg\fa n\N \neg\neg\ist{i_n}$, which by \zcref{cor:opendnegclose} is equivalent to $\neg\fa n\N \ist{i_n}$. Then \zcref{lem:markov} implies this is $\ex n\N \isf{i_n}$. The fact that this makes $\omega$ into the above sequential colimit follows from \citet[Cor.~1.10]{VANOOSTEN2000233}.
\end{proof}

% \begin{remark}\label{rem:omegaalwayscolim}
  
% \end{remark}

Using this colimit description of $\omega$ and the fact that $\ov\omega$ is sober, we can show $\I$ is chain complete. This again crucially depends on a normal form result for a countably presented $\sigma$-frame, which generalises the finitary version for bounded distributive lattices given in \zcref{prop:simplicesasalgebra}:

\begin{lemma}\label{app:normalsigma}
  For the countably presented $\sigma$-frame $\I[\N]/\pair{\ms{i}_n \ge \ms{i}_{n+1}}_{n:\N}$, an element $p$ can be uniquely written as 
  \[ p = p_0 \vee \bigvee_{n:\N} p_{n+1} \wedge \ms{i}_n\text{,} \]
  with $p_n \le p_{n+1}$ for all $n$. In other words, $\I[\N]/\pair{\ms{i}_n \ge \ms{i}_{n+1}}_{n:\N}$ is isomorphic to the following $\sigma$-frame with the pointwise order induced by $\I$,
  \[ \Delta_\infty \coloneq \scomp{i : \N \to \I}{\fa n\N i_n \le i_{n+1}}\text{.} \]
\end{lemma}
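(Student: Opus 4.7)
The plan is to exhibit the claimed isomorphism as a pair of mutually inverse $\sigma$-frame homomorphisms $\Phi \colon \Delta_\infty \to S$ and $\Psi \colon S \to \Delta_\infty$, where $S \coloneq \I[\N]/\pair{\ms{i}_n \ge \ms{i}_{n+1}}_{n:\N}$; uniqueness of the coefficient sequence $(p_n)$ in the normal form then follows as an immediate corollary of the injectivity of $\Psi$.

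I would construct $\Psi$ first, via evaluation at ``truncation'' homomorphisms. For each $N:\N$, the assignment $\ms{i}_n \mapsto 1$ for $n < N$ and $\ms{i}_n \mapsto 0$ for $n \ge N$ preserves the defining relations $\ms{i}_n \ge \ms{i}_{n+1}$ and thus induces, by the universal property of $S$, an $\I$-algebra $\sigma$-frame homomorphism $e_N \colon S \to \I$. On generators $e_N \le e_{N+1}$ holds pointwise, and since $\sigma$-frame operations are monotone this inequality propagates to all of $S$; hence $\Psi(p) \coloneq (e_N(p))_{N:\N}$ lies in $\Delta_\infty$. Since each component $\Psi(-)_N = e_N$ is a $\sigma$-frame homomorphism and finite meets, joins, and countable joins in $\Delta_\infty$ are computed pointwise, $\Psi$ itself is a $\sigma$-frame homomorphism. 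Dually, define $\Phi((p_n)_n) \coloneq p_0 \vee \bigvee_{n:\N} p_{n+1} \wedge \ms{i}_n$, which is well-defined in $S$ using its countable joins. The identity $\Psi\Phi = \mr{id}$ is then immediate: applying $e_N$ to the defining sum produces $p_0 \vee \bigvee_{n < N} p_{n+1} = p_N$ using the ascending hypothesis on $(p_n)$.

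The remaining and most delicate step is $\Phi\Psi = \mr{id}$, which amounts to the existence claim for the normal form. My plan is to show that, by $\sigma$-frame distributivity, every $p : S$ can be written as a countable join of monomials $a \wedge \ms{i}_{n_1} \wedge \cdots \wedge \ms{i}_{n_k}$ with $a : \I$; the relations $\ms{i}_n \ge \ms{i}_{n+1}$ then collapse $\ms{i}_{n_1} \wedge \cdots \wedge \ms{i}_{n_k}$ to $\ms{i}_{\max_j n_j}$ (or to $1$ when $k=0$), yielding a representation $p = a_0 \vee \bigvee_{n:\N} q_n \wedge \ms{i}_n$ for suitable $a_0, q_n : \I$. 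A further absorption step, using $\ms{i}_m \le \ms{i}_n$ for $m \ge n$, rearranges this into the desired ascending form with $p_0 = a_0$ and $p_{n+1} = a_0 \vee q_0 \vee \cdots \vee q_n$; a direct check then gives $e_N(p) = p_N$, so $p = \Phi(\Psi(p))$.

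The hard part will be to execute this normal-form reduction rigorously in the infinitary $\sigma$-frame setting. Although the underlying manipulation is modelled on the finitary bounded distributive lattice normal form recalled in \Cref{prop:simplicesasalgebra} and due to \citet[Ch.\ 1, Thm.\ 10.21]{lausch2000algebra}, one must be careful that both the initial expansion of $p$ into a countable join of monomials and the subsequent absorptive rearrangement genuinely respect countable joins and the $\sigma$-frame distributive law; the observation that approximating partial sums $p_0 \vee \bigvee_{n < N} p_{n+1}\wedge \ms{i}_n$ are preserved by the finitary normal form, combined with continuity of countable joins, is what makes the infinitary limit well behaved.
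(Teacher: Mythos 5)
Your proof is correct in outline, and in fact your two maps are the same ones the paper uses: your $\Psi$ is the unique $\I$-algebra homomorphism out of the presented algebra determined by $\ms{i}_n \mapsto (0,\dots,0,1,1,\dots)$, and your $\Phi$ is the paper's $f_a$ specialised to $A = \I[\N]/\pair{\ms{i}_n \ge \ms{i}_{n+1}}_{n:\N}$ with $a_n = \ms{i}_n$. Where you genuinely diverge is in how the two composites are handled. The paper never proves $\Phi\Psi = \id$ by computing inside the presented algebra; instead it verifies that $\Delta_\infty$, equipped with the descending chain $i_n = (0,\dots,0,1,1,\dots)$, itself satisfies the universal property of the presentation. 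The payoff is that the only identity one has to check by hand, $j = j^0 \vee \bigvee_{n:\N} j^{n+1}\wedge i_n$, lives in $\Delta_\infty$ where it is a trivial pointwise computation (the $N$-th component of the right-hand side is $j^0 \vee \bigvee_{n<N} j^{n+1} = j^N$), and uniqueness of the comparison map comes for free from this generating identity. Your route instead forces you to prove surjectivity of $\Phi$, i.e.\ the \emph{existence} of normal forms in the presented $\sigma$-frame, by a syntactic reduction.

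That reduction is exactly where the real work in your plan is hidden, and you are right to flag it as the delicate step. Expanding an arbitrary element of $\I[\N]$ into a countable join of monomials requires an induction over the infinitary term structure of the free $\sigma$-frame, and the step ``a countable join of countable joins of monomials is a countable join of monomials'' silently invokes countable choice to select representations of the disjuncts (an assumption the paper does make when discussing infinitary theories, but which you should surface). Your absorption argument producing the ascending coefficients $p_{n+1} = a_0 \vee q_0 \vee \dots \vee q_n$ is fine, as is the verification $\Psi\Phi = \id$ and the derivation of uniqueness from injectivity of $\Psi$. So nothing in your plan is wrong, but the paper's strategy of checking the universal property against $\Delta_\infty$ buys you the normal form theorem as a corollary of initiality rather than as an input, and I would recommend restructuring along those lines unless you intend to carry out the normal-form induction in full.
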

\begin{proof}
  We directly prove that $\Delta_\infty$ satisfies the universal property of the countably presented $\sigma$-frame $\I[\N]/\pair{\ms{i}_n \ge \ms{i}_{n+1}}_{n:\N}$. We pick the generators in $\Delta_\infty$ as follows,
  \[ i_n \coloneq (\underbrace{0,\cdots,0}_{n+1 \text{ times}},1,1,\cdots)\text{.} \]
  For any $\sigma$-frame $A$ with $a_0 \ge a_1 \ge \cdots$, we define a map $f_a : \Delta_\infty \to A$ sending $j = (j^0,j^1,\cdots) : \Delta_\infty$ to
  \[ f_a(i) \coloneq j_0 \vee \bigvee_{n:\N} j^{n+1} \wedge a_{n}\text{.} \]
  By construction it is easy to see $f_a$ is a $\sigma$-frame morphism. Evidently for any $n : \N$, 
  \[ f_a(i_n) = \bigvee_{m:\N} i_n^{m+1} \wedge a_m = a_n\text{.} \]
  Furthermore, for any $\sigma$-frame map $f \colon \Delta_\infty \to A$ with $f(i_n) = a_n$, we must have $f = f_a$ because any $j$ in $\Delta_\infty$ can be written as
  \[ j = j^0 \vee \bigvee_{n:\N} j^{n+1} \wedge i_n\text{,} \]
  which implies that
  \[ f(j) = j^0 \vee \bigvee_{n:\N}j^{n+1} \wedge a_n = f_a(j)\text{.} \]
  This completes the proof.
\end{proof}

\begin{theorem}[\AxiomNT, \AxiomSQCC]\label{thm:complete}
  $\I$ is right orthogonal to the inclusion $\omega\hook\ov\omega$.
\end{theorem}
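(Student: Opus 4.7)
The plan is to show that the restriction map $\I^{\ov\omega} \to \I^\omega$ corresponds, under the adjunction $\opens \dashv \spec$ and the normal form results, to the identity on the set $\Delta_\infty$ of ascending sequences in $\I$. Both sides will be computed algebraically using (\AxiomSQCC{}), with \Cref{app:normalsigma} and \Cref{prop:simplicesasalgebra} providing the common parameterisation by $\Delta_\infty$.

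First I would identify $\I^{\ov\omega}$. By \Cref{exm:ovomegaaffine}, $\ov\omega \cong \spec A_\infty$ where $A_\infty \coloneq \I[\N]/\pair{\ms{i}_n \ge \ms{i}_{n+1}}_{n:\N}$; this is countably presented, hence quasi-coherent by (\AxiomSQCC{}), so $\I^{\ov\omega} \cong A_\infty$. By \Cref{app:normalsigma}, $A_\infty$ is naturally isomorphic to $\Delta_\infty$, identifying $p_0 \vee \bigvee_{n:\N} p_{n+1} \wedge \ms{i}_n \in A_\infty$ with the ascending sequence $(p_n)_{n:\N}$.

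Next I would identify $\I^\omega$. By \Cref{prop:omegacolimit} (which uses (\AxiomNT{}) and (\AxiomSQCC{})), $\omega \cong \mathrm{colim}_n \Delta^n$ with transition maps appending $0$, so $\I^\omega \cong \lim_n \I^{\Delta^n}$. Each $\Delta^n \cong \spec A_n$ for $A_n \coloneq \I[n]/\pair{\ms{i}_1 \ge \cdots \ge \ms{i}_n}$, finitely presented and hence quasi-coherent, so $\I^{\Delta^n} \cong A_n$. A dual of \Cref{prop:simplicesasalgebra} identifies $A_n$ with ascending tuples of length $n+1$, and the dual $q_n\colon A_{n+1} \surj A_n$ of the transition map sends $\ms{i}_k \mapsto \ms{i}_k$ for $k \le n$ and $\ms{i}_{n+1} \mapsto 0$. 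Under the normal form, $q_n$ simply drops the last entry of an ascending tuple, so $\lim_n A_n \cong \Delta_\infty$.

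Finally, I would verify that the restriction map $\I^{\ov\omega} \to \I^\omega$ is, under these identifications, the identity on $\Delta_\infty$. Concretely, the quotient $A_\infty \surj A_n$ sends $p_0 \vee \bigvee_{m:\N} p_{m+1} \wedge \ms{i}_m$ to $p_0 \vee \bigvee_{m=0}^{n-1} p_{m+1} \wedge \ms{i}_m$, which in the finite normal form is the ascending truncation $(p_0, \ldots, p_n)$, matching the projection $\Delta_\infty \to$ (length $n+1$ ascending tuples). The main obstacle will be the bookkeeping in the last two steps, aligning the finite and countably infinite normal forms and matching them to the truncation and restriction maps; this is also precisely where the argument requires $\sigma$-frames rather than bounded distributive lattices, since the countable joins in $A_\infty$ are what accommodate the unbounded compatible families in $\lim_n A_n$.
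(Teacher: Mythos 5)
Your proposal is correct and follows essentially the same route as the paper's proof: compute $\I^{\ov\omega}$ as the countably presented $\sigma$-frame $\I[\N]/\pair{\ms{i}_n \ge \ms{i}_{n+1}}_{n:\N}$ via affineness of $\ov\omega$, compute $\I^\omega$ as the sequential limit of the finitely presented algebras $\I^{\Delta^n}$ using the inductive colimit description of $\omega$, identify both with $\Delta_\infty$ via \Cref{prop:simplicesasalgebra} and \Cref{app:normalsigma}, and check the transition and restriction maps match. Your explicit verification that the quotient $A_\infty \surj A_n$ corresponds to truncation is a detail the paper leaves implicit, but the argument is the same.
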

\begin{proof}
  Since $\ov\omega$ is now sober, we have
  \[ \I^{\ov\omega} \cong \I[\N]/\pair{\ms{i}_n \ge \ms{i}_{n+1}}_{n:\N}\text{.} \]
  On the other hand, since $\omega$ is the colimit of $\Delta^n$ and they are sober, we have
  \[ \I^\omega \cong \lt_{n:\N}\I^{\Delta^n} \cong \lt_{n:\N}\I[n]/\ms{i}_0\ge\cdots\ge \ms{i}_{n-1}\text{.} \]
  Note that the transition maps induced by $\eta_T \colon \Delta^n \to T(\Delta^n)\cong \Delta^{n+1}$ under quasi-coherence gives us the following maps on algebras:
  \[
  \begin{tikzcd}
    \I^{\Delta^{n+1}} \ar[r, "\I^{\eta_T}"] & \I^{\Delta^n} \\ 
    \I[n\!+\!1]/\ms{i}_0 \ge \cdots \ge \ms{i}_{n} \ar[u, "\cong"] \ar[r, "\ms{i}_{n} \mapsto 0"'] & \I[n]/\ms{i}_1 \ge \cdots \ge \ms{i}_{n-1} \ar[u, "\cong"']
  \end{tikzcd}
  \]
  Hence, it suffices to show that $\I[\N]/\pair{\ms{i}_n \ge \ms{i}_{n+1}}_{n:\N}$ is indeed the sequential limit of the above $\I$-algebras,
  \[\begin{tikzcd}
    \cdots & {\I[n\!+\!1]/\ms{i}_0 \ge \cdots \ge \ms{i}_{n}} & {\I[n]/\ms{i}_0 \ge \cdots \ge \ms{i}_{n-1}} & \cdots \\
    & {\I[\N]/\pair{\ms{i}_n \ge \ms{i}_{n+1}}_{n:\N}}
    \arrow[from=1-1, to=1-2]
    \arrow["{i_{n} \mapsto 0}", from=1-2, to=1-3]
    \arrow[from=1-3, to=1-4]
    \arrow[dashed, from=2-2, to=1-1]
    \arrow["{f_{n+1}}"{description}, from=2-2, to=1-2]
    \arrow["{f_n}"{description}, from=2-2, to=1-3]
    \arrow[curve={height=15pt}, dashed, from=2-2, to=1-4]
  \end{tikzcd}\]
  where the map $f_n$ takes $\ms{i}_k$ to itself for $k\le n$, and takes $\ms{i}_k$ to $0$ for $k \ge n$. We can more directly compute the above sequential limit via \zcref{prop:simplicesasalgebra},
  \[\begin{tikzcd}
    \cdots & {\I[n\!+\!1]/\ms{i}_0 \ge \cdots \ge \ms{i}_{n}} & {\I[n]/\ms{i}_0 \ge \cdots \ge \ms{i}_{n-1}} & \cdots \\
    \cdots & {\Delta_{n+2}} & {\Delta_{n+1}} & \cdots
    \arrow[from=1-1, to=1-2]
    \arrow["{{\ms{i}_{n} \mapsto 0}}", from=1-2, to=1-3]
    \arrow["\cong", from=1-2, to=2-2]
    \arrow[from=1-3, to=1-4]
    \arrow["\cong", from=1-3, to=2-3]
    \arrow[from=2-1, to=2-2]
    \arrow["\pi_{n}", from=2-2, to=2-3]
    \arrow[from=2-3, to=2-4]
  \end{tikzcd}\]
  where $\pi_{n} : \Delta_{n+1} \to \Delta_{n+1}$ forgets the last entry, i.e.\ it takes $\ms{i}_0 \le \cdots \le \ms{i}_{n+1}$ to $\ms{i}_0 \le \cdots \le \ms{i}_{n}$. Hence, the sequential limit is given by
  \[ \lt_{n:\N} \Delta_{n} \cong \Delta_\infty\text{.} \]
  Now the desired result follows from \zcref{app:normalsigma}.
\end{proof}

\begin{remark}
  As indicated in \zcref{rem:qcreplete,rem:specarereplete}, both spectra and spatial $\I$-algebras are replete. The above result then implies that they are also chain complete, i.e.\ orthogonal to $\omega\hook\ov\omega$.
\end{remark}

\begin{corollary}[\AxiomNT, \AxiomSQCC]
  $\I^{\ov\omega} \cong \I^\omega \cong \Delta_\infty$. In particular, $\omega$ is not sober.
\end{corollary}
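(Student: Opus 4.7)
The first claim is essentially a bookkeeping exercise using what has already been established. The plan is to trace the equivalences: by (\AxiomSQCC) applied to the presentation of $\ov\omega$ from \Cref{exm:ovomegaaffine}, the algebra of observations is $\I^{\ov\omega} \cong \I[\N]/\pair{\ms{i}_n \ge \ms{i}_{n+1}}_{n:\N}$; by the normal form result of \Cref{app:normalsigma}, this $\sigma$-frame is in turn isomorphic to $\Delta_\infty$. The equivalence $\I^{\ov\omega} \cong \I^\omega$ is then exactly the orthogonality statement proved in \Cref{thm:complete}, so chaining these gives $\I^{\ov\omega} \cong \I^\omega \cong \Delta_\infty$.

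For the second claim, the strategy is a proof by contradiction using the identification of spectra as duals. Suppose $\omega$ were affine. Then by definition $\omega \cong \spec \opens\omega \cong \spec\Delta_\infty$. But by the first part, $\Delta_\infty \cong \opens\ov\omega$, and $\ov\omega$ is itself affine (as a spectrum, it is a fixed point of the adjunction once its classifying algebra is quasi-coherent, which it is by (\AxiomSQCC)), so $\spec\Delta_\infty \cong \spec\opens\ov\omega \cong \ov\omega$. Combining these equivalences we would obtain $\omega \cong \ov\omega$, and indeed the canonical inclusion $\omega \hookrightarrow \ov\omega$ would be an equivalence.

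To derive a contradiction I would exhibit an element of $\ov\omega$ that fails to lie in the image of $\omega$. The constant sequence $(1,1,1,\ldots) : \ov\omega$ is the natural candidate. By \Cref{prop:omegacolimit}, the image of $\omega$ in $\ov\omega$ consists of descending sequences $i$ such that $\ex{n}{\N}\isf{i_n}$. For the constant sequence $1$, having $\isf{1}$ for some index is equivalent to $0 = 1$ in $\I$, which is ruled out by (\AxiomNT). Hence the constant $1$ sequence cannot lie in the image of $\omega$, contradicting surjectivity of the inclusion $\omega \hookrightarrow \ov\omega$.

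The only subtle point I anticipate is that equivalences of sets should be promoted to the correct comparison map: namely, that under the hypothetical equivalence $\omega \cong \spec\opens\omega \cong \spec\opens\ov\omega \cong \ov\omega$, the resulting map $\omega \to \ov\omega$ really is the canonical inclusion (and not some other map that might happen to be an equivalence). This is straightforward from naturality of the unit $\eta$ of the adjunction $\opens \dashv \spec$ applied to the inclusion $\omega \hookrightarrow \ov\omega$, since the induced square commutes and the two vertical maps become equivalences under the affineness hypothesis; hence the top map must agree with the bottom up to the chosen equivalences.
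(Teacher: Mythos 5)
Your proof of the first claim is exactly the paper's: read off $\I^{\ov\omega}\cong\I[\N]/\pair{\ms{i}_n\ge\ms{i}_{n+1}}_{n:\N}\cong\Delta_\infty$ from \cref{exm:ovomegaaffine} and \cref{app:normalsigma}, and get $\I^{\ov\omega}\cong\I^\omega$ from \cref{thm:complete}. For the second claim the paper offers no argument at all, and your supplied one is correct and complete: the naturality square for $\eta$ at the inclusion $\omega\hook\ov\omega$, together with affineness of $\ov\omega$ and the equivalence $\opens\ov\omega\to\opens\omega$, shows by two-out-of-three that affineness of $\omega$ would force the canonical inclusion to be an equivalence, which the constant sequence $(1,1,\ldots)$ refutes via \cref{prop:omegacolimit} and (\AxiomNT). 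This is a genuine (if small) addition rather than a deviation, and it is the right way to discharge the ``in particular.''
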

\begin{proof}
  By the above proof, we have $\I^{\ov\omega} \cong \I[\N]/\pair{\ms{i}_n \ge \ms{i}_{n+1}}_{n:\N} \cong \Delta_\infty$.
\end{proof}

\begin{remark}[$\sigma$-frames vs.\ bounded distributive lattices]\label{rem:whynotdis}
  Note that the inductivity of $\omega$ as shown in \zcref{prop:omegacolimit} is \emph{independent} of working with distributive lattices or $\sigma$-frames. The completeness result shown in \zcref{thm:complete} is the only one in this paper that works for $\sigma$-frames and not for bounded distributive lattices more generally. The reason is exactly because the normal form in \zcref{app:normalsigma} for the countably presented \emph{bounded distributive lattice} $\I[\N]/\pair{\ms{i}_n \ge \ms{i}_{n+1}}_{n:\N}$ will \emph{not} be isomorphic to $\Delta_\infty$. More specifically, for the countably presented bounded distributive lattice $\I[\N]/\pair{\ms{i}_n \ge \ms{i}_{n+1}}_{n:\N}$, since it is a sequential colimit of the following finitely presented bounded distributive lattices
  \[ 
  \begin{tikzcd}
    \cdots \ar[r] & \I[n]/\ms{i}_{0} \ge \cdots \ge \ms{i}_{n-1} \ar[r, hook] & \I[n\!+\!1]/\ms{i}_0 \ge \cdots \ge \ms{i}_n \ar[r, hook] & \cdots
  \end{tikzcd}
  \]
  and, due to the fact that the theory of bounded distributive lattices is \emph{finitary}, this sequential colimit of \emph{algebras} is computed the same as their \emph{underlying sets}. Via \zcref{prop:simplicesasalgebra}, the result can be identified as the subtype of $\Delta_\infty$,
  \[ \Delta_\omega \coloneq \scomp{i : \Delta_\infty}{\ex n\N \ist{i_n}}\text{,} \]
  where $\Delta_\omega \hook \Delta_\infty$ in some sense is the order-dual to the inclusion $\omega\hook\ov\omega$. The fact that we have \emph{all} infinite increasing sequences in the case for $\sigma$-frames is clearly due to the fact that we have countable disjunctions in $\sigma$-frames, as when identifying $p$ in $\I[\N]/\pair{\ms{i}_n \ge \ms{i}_{n+1}}_{n:\N}$ with $p = p_0 \vee \bigvee_{n:\N}p_{n+1}\wedge \ms{i}_n$.
\end{remark}

\begin{remark}[Limits of algebras induce locality for $\I$]\label{rem:limofalgloc}
  By inspecting the proof of \zcref{thm:complete}, it is clear that $\I$ is chain complete \emph{precisely} because there is a specific \emph{limiting diagram} of spatial $\I$-algebras. In general, any such limit diagram will induce an orthogonality property satisfied by the all spectra, and we will see many more examples in \zcref{sec:local}. From \zcref{rem:whynotdis} it is clear whether a diagram of $\I$-algebras is a limit heavily relies on the underlying algebraic theory. This is again a perfect example of how the algebraic properties of a theory has significant consequences for the internal logic of its classifying topos.
\end{remark}

\section{Local properties for the interval}\label{sec:local}

In this section we review some of the locality axioms we have introduced in \zcref{sec:locality}. As mentioned in the introduction, the observation is that even if we do not assume them to be true globally for the interval $\I$, we can still show the maps representing the locality axiom to be \emph{left orthogonal} to $\I$.  
Unlike chain completeness in \zcref{thm:complete}, the local conditions in \zcref{sec:locality} are all induced by limits of \emph{f.p.} $\I$-algebras. Thus in this section, it does not matter whether we work with bounded distributive lattices or $\sigma$-frames.

Starting from the simplest example, let us recall the local property (\AxiomNT). In general it is \emph{not} necessarily $\emp$, if we do not assume (\AxiomNT). However, we can look at the localisation class that thinks this map is invertible. The following fact shows that, from the perspective of $\I$, (\AxiomNT) indeed holds:

\begin{proposition}[\AxiomSQCI]\label{specisnontrivial}
  $\I$ is right orthogonal to $\emp \hook (0 = 1)$.
\end{proposition}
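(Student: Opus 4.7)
The plan is to compute both sides of the restriction map $\I^{(0=1)}\to \I^{\emp}$ and show they are both terminal. Right orthogonality of $\I$ to $\emp\hook(0=1)$ unfolds to the statement that the restriction map $\I^{(0=1)}\to\I^\emp$ is an equivalence, and since $\I^\emp\cong 1$, it suffices to show that $\I^{(0=1)}$ is contractible.

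First, I would observe that the proposition $0=1$ is by definition the proposition $\ist{0}$. Appealing to \Cref{lem:openpropaffine}, we can identify $\ist{0}$ with the spectrum $\spec(\I/0)$. Here $\I/0$ denotes the quotient by the relation $0=1$ and, because $\T$ is propositionally stable, is concretely the slice $\cv 0 = \scomp{b:\I}{b\le 0}$. Since $b\le 0$ forces $b=0$, the algebra $\I/0$ is the terminal (trivial) $\I$-algebra; equivalently, this is the algebra in which the relation $0=1$ is imposed.

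Next, under (\AxiomSQCI) the finitary quotient $\I/0$ is quasi-coherent, so the counit $\iota_{\I/0}\colon \I/0 \to \opens\spec(\I/0)$ is an equivalence. Chaining the identifications then yields
\[
  \I^{(0=1)} \;\cong\; \I^{\ist{0}} \;\cong\; \I^{\spec(\I/0)} \;=\; \opens\spec(\I/0) \;\cong\; \I/0 \;\cong\; 1\text{.}
\]
Together with $\I^{\emp}\cong 1$, this makes both source and target of the restriction map contractible, hence the map is an equivalence.

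There is no real obstacle beyond careful bookkeeping of the identifications. The only thing to double check is that the displayed chain of equivalences is natural in the sense that its composite agrees with the restriction map induced by $\emp\hook(0=1)$, but this is clear since each step is given by canonical maps and the map $\emp\to \spec(\I/0)$ factors through any inhabited proposition uniquely.
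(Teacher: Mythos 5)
Your proposal is correct and follows essentially the same route as the paper: identify the proposition $0=1$ with the affine spectrum $\spec(\I/0{=}1)$ and use its quasi-coherence (guaranteed by \AxiomSQCI) to compute $\I^{(0=1)}\cong \I/0{=}1\cong 1$, the only cosmetic difference being that you invoke the counit $\iota_{\I/0{=}1}$ directly where the paper routes through \cref{prop:duality} with $B=\I[\ms{i}]$. Your closing worry about naturality is moot, since a map between contractible types is automatically an equivalence.
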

\begin{proof}
  Observe the proposition $0 = 1$ is propositionally equivalent to the (sober) spectrum $\spec(\I/0=1)$. By \zcref{prop:duality} we have:
  \[ 
    \I^{0=1} \cong 
    (\spec \I[\ms{i}])^{\spec{\I/0=1}}
    \cong 
    \I\alg(\I[\ms{i}], \I/0=1) 
    \cong 
    \mathbf{1}
  \]
  The last isomorphism is due to $\I/0=1$ being the terminal $\I$-algebra, since its underlying set is the singleton.
\end{proof}

As another example, consider the simplicial axiom (\AxiomSL). We skip the discussion of (\AxiomL) and (\AxiomCL), because (\AxiomSL) is strictly stronger, and it has a better-known geometric meaning. Recall from \zcref{sec:linearity-and-coskeletality} that (\AxiomSL) can be represented geometrically as an embedding $\Delta^2\cup\Delta_2 \hook \I^2$.

\begin{definition}[Triangulated type]
  A type $X$ is said to be \emph{triangulated} when it is right orthogonal to the inclusion $\Delta^2 \cup \Delta_2 \hook \I^2$.
\end{definition}

Again, (\AxiomSL) holds globally iff the above embedding is an equivalence, thus every type in this case will be triangulated. When it does not hold globally, we can still show:

\begin{proposition}[\AxiomSQCF]\label{specistriangulated}
  $\I$ is triangulated.
\end{proposition}
\begin{proof}
  To show that $\I$ is right orthogonal to $\Delta^2 \cup \Delta_2 \hook \I^2$, it is equivalent to verify that for any $f,g$ indicated below making the solid diagram below commute, there exists a unique lift $h$ making the whole diagram commute:
  \[\begin{tikzcd}
    \I \ar[r, "\delta"] \ar[d, "\delta"'] & \Delta^2 \ar[r, "f"] \ar[d] & \I \\
    \Delta_2 \ar[urr, "g"{description, near start}] \ar[r] & \I^2 \ar[ur, dashed, "h"']
    \arrow["\lrcorner"{anchor=center, pos=0.125}, draw=none, from=1-1, to=2-2]
  \end{tikzcd}\]
  Now by (\AxiomSQCF), since the vertices of the left square are all sober, equivalently it suffices to show we have a pullback of $\I$-algebras,
  \[\begin{tikzcd}
    {\I[\ms{i},\ms{j}]} \ar[r] \ar[d] & {\I[\ms{i},\ms{j}]/\ms{i} \ge \ms{j}} \ar[d, "{\ms{i},\ms{j} \mapsto\ms{k}}"] \\
    {\I[\ms{i},\ms{j}]/\ms{i} \le \ms{j}} \ar[r, "{\ms{i},\ms{j}\mapsto\ms{k}}"'] & {\I[\ms{k}]}
    \arrow["\lrcorner"{anchor=center, pos=0.125}, draw=none, from=1-1, to=2-2]
  \end{tikzcd}\]
  Recall from \zcref{prop:simplicesasalgebra} that the normal form of elements in $\I[\ms{i},\ms{j}]/\ms{i}\ge \ms{j}$. An element $p$ there can be viewed as the polynomial 
  \[ p = p_{0,0} \vee \ms{i}\wedge p_{1,0} \vee \ms{j}\wedge p_{1,1}\text{,} \]
  with $p_{0,0} \le p_{1,0} \le p_{1,1}$. And similarly for $q$ in $\I[\ms{i},\ms{j}]/\ms{i} \le \ms{j}$, 
  \[ q = q_{0,0} \vee \ms{j} \wedge q_{0,1} \vee \ms{i} \wedge q_{1,1}\text{.} \]
  They agree in $\I[\ms{k}]$ iff 
  \[ p_{0,0} = q_{0,0}, \quad p_{1,1} = q_{1,1}\text{.} \]
  This then gives us a polynomial in $\I[\ms{i},\ms{j}]$, which we write as follows,
  \[ (p_{0,0} \vee \ms{i} \wedge p_{1,0}) \vee \ms{j} \wedge (q_{0,1} \vee \ms{i} \wedge q_{1,1})\text{,} \]
  where $p_{0,0} = q_{0,0} \le q_{0,1}$ and $p_{1,0} \le p_{1,1} = q_{1,1}$. 
  By the general normal form for $A[\ms{i}]$ for any $\I$-algebra $A$, if we view $\I[\ms{i},\ms{j}]$ as $\I[\ms{i}][\ms{j}]$, the above exactly corresponds to the normal form of polynomials in $\I[\ms{i},\ms{j}]$; see also \citet[Ch.\ 1, Thm.\ 10.21]{lausch2000algebra}. This means the above is a pullback of $\I$-algebras.
\end{proof}

Finally, we discuss the even stronger locality condition (\AxiomOneCS). As mentioned in \zcref{sec:linearity-and-coskeletality}, the additional property of (\AxiomOneCS) is characterised by the embedding $\partial\Delta_2 \hook \Delta_2$.

\begin{definition}[1-coskeletal type]
  A type $X$ is said to be \emph{1-coskeletal} when it is both triangulated and right orthogonal to the boundary inclusion $\partial\Delta_2 \hook \Delta_2$.
\end{definition}

\begin{proposition}[\AxiomSQCF]\label{specis1t}
  $\I$ is 1-coskeletal.
\end{proposition}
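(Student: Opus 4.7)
Since \Cref{specissimplicial} already shows that $\I$ is simplicial, the task reduces to showing right orthogonality to the boundary inclusion $\partial\Delta_2 \hook \Delta_2$. The strategy is parallel to the proof of \Cref{specissimplicial}: express $\partial\Delta_2$ as a finite colimit of affine pieces, apply (\AxiomSQCF) to translate the orthogonality condition into a pullback of finitely presented $\I$-algebras, and then verify the pullback using the normal form of \Cref{prop:simplicesasalgebra}.

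First I would decompose $\partial\Delta_2$ as the union of its three ``edges'': the bottom face $\{(0,j) \mid j : \I\}$, the diagonal $\{(i,i) \mid i : \I\}$, and the top face $\{(i,1) \mid i : \I\}$, each of which is affine (isomorphic to $\spec\I[\ms{k}]$). Two consecutive edges meet at a single vertex $\spec\I$. Thus $\partial\Delta_2$ is the colimit of a diagram of three copies of $\spec\I[\ms{k}]$ glued along three copies of $\spec\I$ at the points $(0,0), (0,1), (1,1)$. Since each component is affine and $\Delta_2 \cong \spec\I[\ms{i},\ms{j}]/\ms{i}\le\ms{j}$ is affine by (\AxiomSQCF), the orthogonality of $\I$ to $\partial\Delta_2 \hook \Delta_2$ is equivalent to the following square being a limit of $\I$-algebras, where the three corners on the right correspond to the three edges (with the pullbacks over the vertices built in):
\[
\begin{tikzcd}
  \I[\ms{i},\ms{j}]/\ms{i}\le\ms{j} \ar[r] \ar[d] & \I[\ms{k}]\times_{\I\times\I}\I[\ms{k}]\times_{\I\times\I}\I[\ms{k}]
\end{tikzcd}
\]
Concretely, I want to exhibit $\I[\ms{i},\ms{j}]/\ms{i}\le\ms{j}$ as the set of triples $(p_1,p_2,p_3)$ with $p_1,p_3 \in \I[\ms{k}]$ and $p_2 \in \I[\ms{k}]$ such that $\ev_0(p_1) = \ev_0(p_2)$, $\ev_1(p_1) = \ev_0(p_3)$, and $\ev_1(p_2) = \ev_1(p_3)$.

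The verification then uses the normal form from \Cref{prop:simplicesasalgebra}: every $q : \I[\ms{i},\ms{j}]/\ms{i}\le\ms{j}$ is uniquely $q = q_{0,0} \vee \ms{j}\wedge q_{0,1} \vee \ms{i}\wedge q_{1,1}$ with $q_{0,0}\le q_{0,1}\le q_{1,1}$. Restricting $q$ along $\ms{i}\mapsto 0$, $\ms{i},\ms{j}\mapsto \ms{k}$, and $\ms{j}\mapsto 1$ gives the three polynomials $q_{0,0}\vee \ms{k}\wedge q_{0,1}$, $q_{0,0}\vee \ms{k}\wedge q_{1,1}$, $q_{0,1}\vee \ms{k}\wedge q_{1,1}$, whose boundary values match as required. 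Conversely, given three compatible polynomials $a_1\vee\ms{k}\wedge b_1$, $a_2\vee\ms{k}\wedge b_2$, $a_3\vee\ms{k}\wedge b_3$, the compatibility conditions force $a_1 = a_2$, $b_1 = a_3$, and $b_2 = b_3$, yielding the chain $a_1\le b_1\le b_3$ and hence a unique normal-form polynomial in $\I[\ms{i},\ms{j}]/\ms{i}\le\ms{j}$ that recovers them.

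The main obstacle is purely bookkeeping: correctly identifying which vertex each pair of edges is glued along and confirming that the three pairwise compatibility conditions at the vertices assemble into the single chain $a_1\le b_1\le b_3$ required by the normal form, with no redundancy and no missing relation. Once this is checked, the pullback square above is verified and the orthogonality follows.
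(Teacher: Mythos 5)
Your proof is correct and follows essentially the same route as the paper's: the same decomposition of $\partial\Delta_2$ into three affine edges glued along three vertices, the same translation via (\AxiomSQCF) into a limit diagram of finitely presented $\I$-algebras over three copies of $\I[\ms{k}]$ and three copies of $\I$, and the same verification using the normal form of \Cref{prop:simplicesasalgebra}. Your compatibility conditions $a_1=a_2$, $b_1=a_3$, $b_2=b_3$ and the resulting chain $a_1\le b_1\le b_3$ match the paper's $p_1=r_1$, $p_0=q_1$, $r_0=q_0$ up to relabelling.
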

\begin{proof}
  Completely similar to the proof of \zcref{specistriangulated}, it suffices to show that we have a limit diagram of $\I$-algebras as follows,
  \[\begin{tikzcd}
    & {\I[\ms{i},\ms{j}]/\ms{i}\le \ms{j}} \\
    \I[\ms{i}] & \I[\ms{k}] & \I[\ms{j}] \\
    \I & \I & \I
    \arrow["{\ms{j} \mapsto 1}"', from=1-2, to=2-1]
    \arrow["{\ms{i},\ms{j} \mapsto k}"{description}, from=1-2, to=2-2]
    \arrow["{\ms{i}\mapsto 0}", from=1-2, to=2-3]
    \arrow["\ms{i}\mapsto 1"', from=2-1, to=3-1]
    \arrow["\ms{i}\mapsto 0"{description, pos=0.75}, from=2-1, to=3-2]
    \arrow["\ms{k}\mapsto 1"{description, pos=0.75}, from=2-2, to=3-1]
    \arrow["\ms{k}\mapsto 0"{description, pos=0.75}, from=2-2, to=3-3]
    \arrow["\ms{j}\mapsto 1"{description, pos=0.75}, from=2-3, to=3-2]
    \arrow["\ms{j}\mapsto 0", from=2-3, to=3-3]
  \end{tikzcd}\]
  Now by the normalisation theorem, an element in the limit consists of $p$ in $\I[\ms{i}]$, $q$ in $\I[\ms{j}]$ and $r$ in $\I[\ms{k}]$, such that
  \[ p_1 = r_1, \quad p_0 = q_1, \quad r_0 = q_0\text{.} \]
  This exactly corresponds to a normal form in the algebra $\I[\ms{i},\ms{j}]/\ms{i}\le\ms{j}$ with $q_0 \le q_1 = p_0 \le p_1$ as follows (cf.\ \zcref{prop:simplicesasalgebra})
  \[ q_0 \vee \ms{j} \wedge q_1 \vee \ms{i} \wedge p_1\text{.} \]
  Hence, the above is a limit diagram of $\I$-algebras.
\end{proof}

Besides the locality principles discussed in \zcref{sec:locality}, we also consider the orthogonality classes emerging from \emph{synthetic category theory}, as introduced by \citet{riehl2017type}. 
A synthetic category will be a type that satisfies the internal orthogonality condition of being \emph{Segal complete}, \emph{Rezk complete}, and (according to some sources) \emph{simplicial}. Being simplicial is a sheaf condition that strengthens the triangulation property by stabilising the left class under pullback.

We start with the Segal completeness condition. Besides the outer horn discussed in \zcref{exm:hornsober}, we can also define the inner horn $\Lambda^2_1$ as a pushout,
\[
  \begin{tikzcd}
    1 \ar[d, "0"'] \ar[r, "1"] & \I \ar[d] \\
    \I \ar[r] & \Lambda^2_1
    \arrow["\lrcorner"{anchor=center, pos=0.125, rotate=180}, draw=none, from=2-2, to=1-1]
  \end{tikzcd}
\]
As a subtype of $\Delta_2$, it can be identified as $\scomp{(i,j) : \Delta_2}{\isf{i} \vee \ist{j}}$.

\begin{definition}[Segal complete types]
  $X$ is called \emph{Segal complete} when it is right orthogonal to $\Lambda^2_1 \hook \Delta_2$.
\end{definition}

\begin{remark}[Path transitivity]
  The Segal completeness condition has also been studied independently by \citet{fiore2001domains} under the name of \emph{path transitivity}.
\end{remark}

\begin{proposition}[\AxiomSQCP]
  $\I$ is Segal complete.
\end{proposition}
\begin{proof}
  By \zcref{prop:simplicesasalgebra} we have the following canonical isomorphism:
  \[ 
    \Delta^{n+1} \cong 
    \I[n]/i_1\leq\cdots\leq i_n
  \]
  
  In particular, we have $\Delta^3\cong \I[\ms{i},\ms{j}]/\ms{i}\leq\ms{j}$ and $\Delta^2 \cong \I[\ms{i}]$ and under Phoa's principle (SQCP), we have $\I^\I \cong \Delta^2\cong\I[\ms{i}]$. Therefore, to show that $\I$ is right orthogonal to the inner horn comparison map, it suffices to show that the following is a fibre product of $\I$-algebras:
  \[
  \begin{tikzcd}
    \I[\ms{i},\ms{j}]/\ms{i}\le\ms{j} \ar[r, "\ms{j} \mapsto 1"] \ar[d, "\ms{i} \mapsto 0"'] & \I[\ms{i}] \ar[d, "\ms{i}\mapsto 0"] \\
    \I[\ms{j}] \ar[r, "\ms{j} \mapsto 1"'] & \I
  \end{tikzcd}
  \]
  By the normal form theorem, an element in the pullback is given by $p$ in $\I[\ms{i}]$ and $q$ in $\I[\ms{j}]$ with $p_0 = q_1$. This way, it again corresponds to the following normal form in $\I[\ms{i},\ms{j}]/\ms{i} \le \ms{j}$ by \zcref{prop:simplicesasalgebra},
  \[ q_0 \vee \ms{j} \wedge q_1 \vee \ms{i} \wedge p_1\text{.} \]
  Hence, the above is again a pullback.
\end{proof}

Next we consider the Rezk completeness condition. Following \citet{buchholtz2021synthetic}, we can define the type $\mbb E$ classifying categorical equivalences as the colimit of the following diagram,
\[
\begin{tikzcd}
	& \I && \I && \I \\
	1 && {\Delta^2} && {\Delta^2} && 1
	\arrow[from=1-2, to=2-1]
	\arrow["{i \mapsto (i,i)}"{description}, from=1-2, to=2-3]
	\arrow["{i \mapsto (i,0)}"{description}, from=1-4, to=2-3]
	\arrow["{i \mapsto (1,i)}"{description}, from=1-4, to=2-5]
	\arrow["{i \mapsto (i,i)}"{description}, from=1-6, to=2-5]
	\arrow[from=1-6, to=2-7]
\end{tikzcd}
\]

\begin{definition}[Rezk types]
  We say that a type $X$ is \emph{Rezk complete} when it is right orthogonal to $\mbb E \to 1$.
\end{definition}

\begin{proposition}[\AxiomSQCP]
  $\I$ is Rezk complete.
\end{proposition}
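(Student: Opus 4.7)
The plan is to unpack $\I^{\mbb E}$ directly from the universal property of the colimit, and then force the resulting data to be constant by applying Phoa's principle (equivalent to \AxiomSQCP\ by \Cref{cor:phoa-vs-quasicoherence}) to well-chosen 1-parameter families in $\Delta^2$. Concretely, a map $f\colon \mbb E\to \I$ is equivalent to the data of two points $c_1,c_3:\I$ (values at the two copies of $1$) together with two "triangles" $\sigma_1,\sigma_2\colon\Delta^2\to\I$, subject to the compatibility conditions
\[
\sigma_1(i,i)=c_1,\qquad \sigma_2(i,i)=c_3,\qquad \sigma_1(i,0)=\sigma_2(1,i)
\]
for all $i:\I$, imposed by the three copies of $\I$ in the defining diagram. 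Setting $i=0$ and $i=1$ in the first two equations gives $\sigma_1(0,0)=\sigma_1(1,1)=c_1$ and $\sigma_2(0,0)=\sigma_2(1,1)=c_3$.

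The key step is to pin down $\sigma_1$ (and, by symmetry, $\sigma_2$) globally on $\Delta^2$. For any $(i,j):\Delta^2$, so $i\ge j$, the monotonicity of $\wedge$ and $\vee$ ensures that both $t\mapsto(i\vee t,j\vee t)$ and $t\mapsto(i\wedge t,j\wedge t)$ are paths $\I\to\Delta^2$. Composing with $\sigma_1$ yields paths $\alpha(t)\coloneq\sigma_1(i\vee t,j\vee t)$ and $\beta(t)\coloneq \sigma_1(i\wedge t,j\wedge t)$ in $\I$. Phoa's principle forces $\alpha(0)\le\alpha(1)$ and $\beta(0)\le\beta(1)$, which unpack respectively to $\sigma_1(i,j)\le \sigma_1(1,1)=c_1$ and $c_1=\sigma_1(0,0)\le\sigma_1(i,j)$. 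Hence $\sigma_1\equiv c_1$ on $\Delta^2$, and the same argument shows $\sigma_2\equiv c_3$.

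Finally, evaluating the third compatibility condition at $i=0$ gives $c_1=\sigma_1(0,0)=\sigma_2(1,0)=c_3$, so the entire data is uniquely determined by a single element $c:\I$; the map $\I\cong\I^1\to\I^{\mbb E}$ induced by $\mbb E\to 1$ is therefore an equivalence. The main subtlety is that $\Delta^2$ is not known to be affine under \AxiomSQCP\ alone, so one cannot directly appeal to duality to compute $\I^{\Delta^2}$; instead one must squeeze enough one-dimensional information out of $\sigma_k$ using the "upward" and "downward" paths above, which is exactly what Phoa's principle was designed to handle.
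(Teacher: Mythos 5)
Your proof is correct, and it is considerably more explicit than the paper's own argument, which is a one-line appeal to the fact that $\I^\I$ classifies the canonical order (Phoa) and that this order is antisymmetric. The underlying ingredients are the same in both cases---Phoa's principle to extract inequalities from paths, and antisymmetry of $\le_\I$ to turn them into equalities---but you carry out the colimit computation of $\I^{\mbb E}$ in full, whereas the paper leaves implicit how the two $\Delta^2$-components of a cocone are to be controlled. Your observation that $\Delta^2$ is \emph{not} known to be affine under (\AxiomSQCP) alone is exactly the right one, and your device of probing $\sigma_k$ along the paths $t\mapsto(i\vee t,\,j\vee t)$ and $t\mapsto(i\wedge t,\,j\wedge t)$ (which land in $\Delta^2$ by monotonicity of $\vee$ and $\wedge$) is what makes the direct verification go through without invoking (\AxiomSQCF) or any normal-form computation of $\I^{\Delta^2}$: it squeezes $\sigma_k(i,j)$ between $\sigma_k(0,0)$ and $\sigma_k(1,1)$, both of which are pinned to the degenerate vertex by the cocone conditions. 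The only points worth stating explicitly are (i) that since $\I$ is a set, $\I^{\mbb E}$ is computed as the strict limit of the corresponding cospan-of-zigzags diagram, so your unpacking of a map $\mbb E\to\I$ into $(c_1,\sigma_1,\sigma_2,c_3)$ with the three compatibility conditions is legitimate; and (ii) that injectivity of $\I\to\I^{\mbb E}$ follows because $\mbb E$ is inhabited, so surjectivity onto the constant maps suffices to conclude the equivalence. Both are routine.
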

\begin{proof}
  Notice by \zcref{thm:phoa-spectrum} $\I^\I$ classifies the canonical order on $\I$, which is antisymmetric by definition. Thus $\I$ will be Rezk complete.
\end{proof}

In fact, $\I$ is not only a synthetic category but in fact a synthetic \emph{poset}; this property too can be expressed in terms of orthogonality. To that end, we define the ``walking parallel pair'' $\I_{\rightrightarrows}$ to be the following pushout:
\[
\begin{tikzcd}
  2 \ar[r] \ar[d] & \I \ar[d] \\ 
  \I \ar[r] & \I_{\rightrightarrows}
  \arrow["\lrcorner"{anchor=center, pos=0.125, rotate=180}, draw=none, from=2-2, to=1-1]
\end{tikzcd}
\]

\begin{definition}[$\I$-separated types]
  $X$ is called \emph{$\I$-separated} when it is right orthogonal to $\I_{\rightrightarrows} \to \I$.
\end{definition}

Equivalently, $X$ is $\I$-separated iff $X^\I \to X \times X$ is an embedding. It is an immediate consequence of (\AxiomSQCP) that $\I$ is $\I$-separated.

The notion of synthetic categories and synthetic posets are formulated as these orthogonality classes:

\begin{definition}[Synthetic categories and synthetic posets]
  A type $X$ is a \emph{synthetic category}, if it is Segal and Rezk complete. We say it is a \emph{synthetic poset}, if it is also $\I$-separated.
\end{definition}

\begin{theorem}[\AxiomSQCF]\label{thm:soberposet}
  Any spectrum or spatial $\I$-algebra will be a synthetic poset.
\end{theorem}
\begin{proof}
  This follows from $\I$ being a synthetic poset, and spectra and spatial algebras are \emph{replete} as indicated in \zcref{rem:qcreplete,rem:specarereplete}.
\end{proof}

At the end of this section, we also discuss the example of $\omega$, which as we have seen is \emph{not} sober. We first observe its observational preorder again coincides with its expected pointwise order viewed as a subspace of $\I^\N$:

\begin{lemma}[\AxiomNT, \AxiomSQCC]\label{speconomegaiscan}
  The inclusion $\omega \hook \ov\omega$ is an order embedding for the observational preorder. In particular, the observational preorder on $\omega$ is induced pointwise as a subspace of $\ov\omega\hookrightarrow \I^\N$.
\end{lemma}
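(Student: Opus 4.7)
The plan is to combine two ingredients already established in the paper: the characterisation of the observational preorder on affine spaces, and chain completeness of $\I$.

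First, I would observe that under \AxiomNT{} and \AxiomSQCC{} the final coalgebra is affine: by \Cref{exm:ovomegaaffine} we have $\ov\omega \cong \spec\I[\N]/\pair{\ms{i}_n \ge \ms{i}_{n+1}}_{n:\N}$, and this presenting algebra is countably presented, hence quasi-coherent. Applying \Cref{lem:specorderofaffine} then identifies the observational preorder on $\ov\omega$ with the satisfaction order on $\spec\I[\N]/\pair{\ms{i}_n \ge \ms{i}_{n+1}}_{n:\N}$. Testing that satisfaction order against the generators $\ms{i}_n$ shows it is precisely the pointwise order inherited via the embedding $\ov\omega \hookrightarrow \I^\N$, $i \mapsto (i_n)_{n:\N}$.

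Next, I would transfer this along the inclusion $\omega \hook \ov\omega$ using \Cref{thm:complete}: the restriction map $\I^{\ov\omega} \to \I^\omega$ is an equivalence. Consequently every observation $V \colon \omega \to \I$ is uniquely of the form $U\circ(\omega\hook\ov\omega)$ for some $U \colon \ov\omega \to \I$, which yields, for $x,y : \omega$ identified with their images in $\ov\omega$, the chain of equivalences
\begin{align*}
  x \obsle[\omega] y
    &\eq \fa{V}{\I^\omega} V(x) \le_\I V(y) \\
    &\eq \fa{U}{\I^{\ov\omega}} U(x) \le_\I U(y) \\
    &\eq x \obsle[\ov\omega] y.
\end{align*}
Combined with the first step, this shows that $\omega \hook \ov\omega$ is an order-embedding for the observational preorder, and that $\obsle[\omega]$ is the pointwise order on $\omega$ as a subspace of $\I^\N$ via the composite $\omega \hook \ov\omega \hookrightarrow \I^\N$.

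No serious obstacle is anticipated: the only non-formal ingredients are the affineness of $\ov\omega$ (which requires \AxiomSQCC) and chain completeness (\Cref{thm:complete}), both already in hand; everything else is unfolding of definitions.
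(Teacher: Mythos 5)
Your proposal is correct and follows essentially the same route as the paper: the paper likewise derives the order-embedding property from chain completeness (\cref{thm:complete}) and identifies the observational preorder on $\ov\omega$ with its satisfaction (pointwise) order via affineness under (\AxiomSQCC) and \cref{lem:specorderofaffine}. Your version merely spells out the quantifier-transfer along the equivalence $\I^{\ov\omega}\to\I^\omega$ in more detail than the paper does.
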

\begin{proof}
  The fact that $\omega\hook\ov\omega$ is an order embedding for the observational preorder follows from chain completeness of $\I$ in \zcref{thm:complete}. The fact that the observational preorder on $\ov\omega$ coincides with its satisfaction order as a spectrum follows from \zcref{lem:specorderofsober} by the fact that it is sober under (\AxiomSQCC).
\end{proof}

Furthermore, we can show that $\omega$ also satisfies a version of the Phoa principle, i.e.\ the path space $\omega^\I$ again classifies its pointwise order. For this we need to compute the path space $\omega^\I$.

Recall from \zcref{prop:omegacolimit} that, under (\AxiomNT) and (\AxiomSQCC), $\omega$ is a colimit $\omega \cong \ct_{n:\N}\Delta^n$. As indicated in \zcref{rem:whynotdis}, this does not depend on working with bounded distributive lattices or $\sigma$-frames. Type-theoretically, we have also shown that $\omega$ can be realised as the following subspace of $\ov\omega$,
\[ \omega \cong \scomp{i : \ov\omega}{\ex n\N \isf{i_n}}\text{.} \]
This way, the inclusion $\Delta^n \hook \omega$ can be viewed as follows, 
\[ \Delta^n \cong \scomp{i : \omega}{\isf{i_n}}\text{,} \]
which implies $\Delta^n \hook \omega$ is \emph{downward closed}. This allows us to directly compute $\omega^\I$:

\begin{proposition}[\AxiomNT, \AxiomSQCC]\label{lem:intervalcommuteomega}
  We have a family of equivalences
  \[ \omega^\I \cong \big({\ct_{n:\N}\Delta^n}\big)^\I \cong \ct_{n:\N}(\Delta^n)^\I\text{,} \]
  and similarly by replacing $\I$ with $\I^n$ or $\Delta^n$. In particular, $\omega^\I$ classifies the pointwise order on $\omega$.
\end{proposition}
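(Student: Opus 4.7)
The plan is to show that the functor $(-)^\I$ preserves the sequential colimit $\omega \cong \ct_{n:\N}\Delta^n$ established in \Cref{prop:omegacolimit}; the first equivalence of the display is then immediate from that colimit description, and the generalisations replacing $\I$ with $\I^n$ or $\Delta^n$ will follow by the same argument \emph{mutatis mutandis}. The core technical point is that every path $\alpha \colon \I \to \omega$ factors through some stage $\Delta^n$ of the colimit, and the main obstacle will be handling the mere existence of such an $n$ and assembling these factorisations coherently into an inverse to the canonical map $\ct_n (\Delta^n)^\I \to \omega^\I$.

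For the factorisation, I would exploit that the inclusion $\Delta^n \hookrightarrow \omega$, given via $\Delta^n \cong \scomp{i : \omega}{\isf{i_n}}$, is downward closed in the pointwise order, together with the fact that every function is monotone for the observational preorder (\Cref{lem:anymapmonotoneintriscorder}). Concretely, given $\alpha \colon \I \to \omega$, the characterisation of $\omega$ from \Cref{prop:omegacolimit} yields that there merely exists $n$ with $\isf{\alpha(1)_n}$. The property ``$\alpha$ factors through $\Delta^n$'' is a mere proposition because $\Delta^n \hookrightarrow \omega$ is a subtype inclusion, so we may discharge the truncation and work with a chosen $n$. For any $i : \I$ we have $i \obsle[\I] 1$, so by monotonicity $\alpha(i) \obsle[\omega] \alpha(1)$; by \Cref{speconomegaiscan} this preorder is pointwise, hence $\alpha(i)_n \le \alpha(1)_n = 0$ and therefore $\alpha(i) \in \Delta^n$.

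With the factorisation in hand, the equivalence $\ct_n (\Delta^n)^\I \cong \omega^\I$ follows because the transition maps $(\Delta^n)^\I \hookrightarrow (\Delta^{n+1})^\I$ are subtype inclusions (induced by the subtype inclusions $\Delta^n \hookrightarrow \Delta^{n+1}$), so the sequential colimit is computed as the ascending union of subtypes of $\omega^\I$, and the factorisation argument shows this union exhausts $\omega^\I$. The same strategy, with $\I$ replaced by $\I^n$ or $\Delta^n$, goes through unchanged: both have a top element (the constant-$1$ vector), functions out of them are monotone in the observational preorder, and the downward-closedness of $\Delta^n$ in $\omega$ in the pointwise order suffices to deliver the analogous factorisation.

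Finally, for the assertion that $\omega^\I$ classifies the pointwise order on $\omega$: by \Cref{thm:phoa-spectrum} each $(\Delta^n)^\I$ classifies the satisfaction order on the affine space $\Delta^n$, which by \Cref{lem:specorderofaffine} and \Cref{lem:cancoincide} is precisely the pointwise order as a subspace of $\I^n$. Under the colimit decomposition $\omega^\I \cong \ct_n (\Delta^n)^\I$, the boundary evaluation $\omega^\I \to \omega \times \omega$ corresponds to the ascending union of these pointwise-order relations, and via the identification of $\omega$ with the eventually-$0$ elements of $\ov\omega$ this union is exactly the pointwise order on $\omega$, as required.
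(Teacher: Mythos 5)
Your proposal is correct and follows essentially the same route as the paper: show the canonical map $\ct_{n:\N}(\Delta^n)^\I \to \omega^\I$ is an embedding, then establish surjectivity by factoring any $\alpha\colon\I\to\omega$ through the stage $\Delta^n$ containing $\alpha(1)$, using monotonicity for the observational preorder, the pointwise description of that preorder from \cref{speconomegaiscan}, and the downward-closedness of $\Delta^n\hook\omega$; the top-element argument for cubes and simplices and the appeal to \cref{thm:phoa-spectrum} for the order classification also match the paper.
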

\begin{proof}
  We show the following canonical map is an equivalence,
  \[ \ct_{n:\N}(\Delta^n)^\I \to \big({\ct_{n:\N}\Delta^n}\big)^\I\text{.} \]
  It is evident this map is an embedding, hence it suffices to show it is surjective. Given $f \colon \I \to \ct_{n:\N}\Delta^n$. By assumption, there merely exists $n:\N$ that $f(1)$ factors through $\Delta^n \hook \omega$. Now the claim is that the entire map $f$ factors through $\Delta^n$. This indeed holds, since we have shown in \zcref{speconomegaiscan} that the observational preorder on $\omega$ is pointwise. By monotonicity, for any $i:\I$ we have $fi \obsle f1$, which implies $fi$ belongs to $\Delta^n$ as well, since $\Delta^n \hook \omega$ is a downward closed. The same holds for cubes or simplices since they all have a top element. This way, $\omega^\I$ classifies the pointwise order on $\omega$ since all simplices $\Delta^n$ satisfy Phoa's principle as shown in \zcref{thm:phoa-spectrum}.
\end{proof}

As another consequence, we can also show the following general result establishing a large family of orthogonality conditions satisfied by $\omega$:

\begin{theorem}[\AxiomNT, \AxiomSQCC]\label{thm:omegaortho}
  Let $f \colon X \to Y$ be a map where $X,Y$ are finite colimits of cubes or simplices. If each $\Delta^n$ is $f$-local, then so is $\omega$.
\end{theorem}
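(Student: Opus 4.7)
The plan is to reduce $f$-locality of $\omega$ to the already-assumed $f$-locality of each $\Delta^n$ by extending \Cref{lem:intervalcommuteomega} from $\I$ to an arbitrary finite colimit of cubes or simplices. Concretely, the key claim to establish is that for any such $Z$, the canonical comparison
\[ \ct_{n:\N}(\Delta^n)^Z \to \omega^Z \]
is an equivalence. Granting this claim for both $Z = X$ and $Z = Y$ and using naturality in the first argument, the restriction map $\omega^f \colon \omega^Y \to \omega^X$ is identified with the sequential colimit of the maps $(\Delta^n)^f \colon (\Delta^n)^Y \to (\Delta^n)^X$. By hypothesis each $(\Delta^n)^f$ is already an equivalence, and sequential colimits preserve levelwise equivalences, so $\omega^f$ is an equivalence and $\omega$ is $f$-local.

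To establish the key claim, I would present $Z$ as a finite colimit $\ct_{j:J} Z_j$ whose vertices $Z_j$ are each a cube or a simplex, so that by the universal property a map $Z \to \omega$ is a compatible cocone $(g_j \colon Z_j \to \omega)_{j:J}$. The argument of \Cref{lem:intervalcommuteomega} generalises to each individual $Z_j$: cubes and simplices are affine under (\AxiomSQCF) with a top element in the pointwise/observational preorder (via \Cref{lem:specorderofaffine}), $\Delta^n \hook \omega$ is downward closed in the pointwise order on $\omega \hook \I^\N$ (by \Cref{speconomegaiscan}), and every map is automatically monotone for the observational preorder (\Cref{lem:anymapmonotoneintriscorder}). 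Hence for each $j$ there merely exists $n_j$ with $g_j$ factoring through $\Delta^{n_j}$. Because $J$ is finite and ``$g_j$ factors through $\Delta^n$'' is a proposition that is stable under enlarging $n$, taking $n \coloneq \max_j n_j$ yields a uniform factorisation of every $g_j$ through the common $\Delta^n$; compatibility of these factorisations with the transition morphisms of the diagram is automatic because $\Delta^n \hook \omega$ is an embedding, producing the required lift $Z \to \Delta^n$. The same embeddedness gives injectivity of the comparison map.

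The main obstacle is the compactness-like step of uniformly bounding the factoring index across a finite diagram, which requires extracting witnesses from mere-existence propositions so as to take their maximum. This is legitimate precisely because the target statement is itself a proposition and the indexing set $J$ is finite, so only truncated finite choice is needed. The remaining ingredients---naturality of the colimit comparison, compatibility of component factorisations with transition morphisms, and preservation of equivalences under sequential colimits---are formal consequences of the facts already developed in \Cref{sec:order-theoretic-structure,sec:infdomain}.
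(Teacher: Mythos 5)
Your proposal is correct and shares the paper's overall skeleton: both reduce the theorem to the claim that the canonical comparison $\ct_{n:\N}(\Delta^n)^Z \to \omega^Z$ is an equivalence for $Z$ a finite colimit of cubes or simplices, and then conclude by taking the sequential colimit of the levelwise equivalences $(\Delta^n)^f$. Where you diverge is in how that claim is established. The paper's argument is purely formal: writing $Y \cong \ct_{r}Y_r$, it computes
\[ \omega^Y \cong \lt_r \omega^{Y_r} \cong \lt_r\ct_{n:\N}(\Delta^n)^{Y_r} \cong \ct_{n:\N}\lt_r(\Delta^n)^{Y_r} \cong \ct_{n:\N}(\Delta^n)^Y\text{,} \]
using \Cref{lem:intervalcommuteomega} at each vertex and then the commutation of finite limits with sequential colimits. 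You instead unfold that commutation by hand: extract a factorisation stage $n_j$ for each vertex of the cocone, take the maximum over the finite index set, and patch the factorisations together using that $\Delta^n\hook\omega$ is an embedding. This is a correct but more elementary rendering of the same exchange---the truncated-finite-choice-plus-maximum step is precisely how one proves concretely that finite limits commute with filtered colimits of sets---at the cost of having to verify bookkeeping (stability of the factorisation proposition under enlarging $n$, uniqueness and compatibility of the component factorisations, injectivity of the comparison along a colimit of embeddings) that the formal argument absorbs. Both routes rely on the same single non-formal input, namely \Cref{lem:intervalcommuteomega} applied to cubes and simplices; the paper's version is shorter and makes that dependency more visible, while yours makes explicit why the exchange of limit and colimit is harmless here.
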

\begin{proof}
  Let $Y \cong \ct_{r}Y_r$ be a finite colimit with each $Y_r$ a simplex or a cube.
  \[ \omega^Y \cong \lt_r\big({\ct_{n:\N}\Delta^n}\big)^{Y_r} \cong \lt_r\ct_{n:\N}(\Delta^n)^{Y_r} \cong \ct_{n:\N}\lt_r(\Delta^n)^{Y_r} \cong \ct_{n:\N}(\Delta^n)^Y \]
  The second equivalence holds by \zcref{lem:intervalcommuteomega}; the third holds since finite limits commutes with sequential colimits. Thus, if each $\Delta^n$ is $f$-local, i.e.\ $(\Delta^n)^Y \cong (\Delta^n)^X$, then so is $\omega$.
\end{proof}

\begin{remark}[Localisation classes closed under sequential colimits]
  Notice that in \zcref{lem:intervalcommuteomega} we do not need to use any form of choice principle to show the exponential $(-)^\I$ commutes with the sequential colimit $\ct_{n:\N}\Delta^n$, exactly because $\Delta^n \hook \omega$ is downward closed. However, for general sequential colimits, the same proof still goes through if we assume:
  \begin{axiom}[Choice principle for $\I$]
    For any type family $P$ over $\I$, we have
    \[ \prod_{i:\I}\pss{P(i)} \to \big\lVert\prod_{i:\I}P(i)\big\rVert\text{.} \]
  \end{axiom}
  Furthermore, the same holds for cubes and simplices since the types satisfying the choice principle are closed under finite products and retracts. Assuming $\I$ satisfies choice, following the proof of \zcref{thm:omegaortho}, one can show more generally that any orthogonality class specified by maps between finite colimits of simplices or cubes are always \emph{closed under sequential colimits}. However, not in every model of quasi-coherence $\I$ will satisfy the above choice principle. Hence, we do not include this result in the main text, as we tend to keep our assumptions as minimalistic as possible.
\end{remark}

\begin{figure}[h]
  \textbf{Quasi-coherence axioms:}
  \PrintAxiomSQCI
  \PrintAxiomSQCID
  \PrintAxiomSQCP
  \PrintAxiomSQCF
  \PrintAxiomSQCC

  \medskip
  \textbf{Locality axioms:}
  \PrintAxiomNT
  \PrintAxiomL
  \PrintAxiomCL
  \PrintAxiomSL
  \PrintAxiomOneCS
  \vspace{6ex}

  \caption{Summary of axioms considered in this paper. Unlike the usual axiom sets for synthetic domain theory~\citep{hyland1990first}, the goal of these axioms is not to modularly specify all the properties of the interval with minimal overlap but instead to identify a sequence of increasingly strong descriptions of intervals that may play a role in synthetic domain theory.}
\end{figure}

\section{New models for synthetic domain theory}\label{sec:model}

It is now instructive to discuss models for the axioms we have used in the previous sections. For this purpose, we now work externally under the assumption that the base topos $\Set$ satisfies the axiom of choice. Our axioms for synthetic domain theory can be organised into two classes:

\begin{enumerate}
  \item The quasi-coherence principles (\AxiomSQCF), (\AxiomSQCC);
  \item Various locality axioms discussed in \zcref{sec:locality}.
\end{enumerate}

For any Horn theory $\T$, we know from \citet{blechschmidt2020general,blechschmidt2021using} that the generic $\T$-model $U_\T\in\Set[\T]$ satisfies quasi-coherence for finitely presented $U_\T$-algebras, viz.\ (\AxiomSQCF). As mentioned in \zcref{subsec:qc}, the proof of quasi-coherence for finitely presented $U_\T$-algebras adapts readily to the countably presented case, provided we work with the larger site
\[ \Set[\T]_\omega \coloneq [\mmod\T\cp,\Set]\text{,} \]
where $\mmod\T\cp$ is the category of countably presented $\T$-models. The generic $\T$-model in $\Set[\T]_\omega$ is again defined to be the forgetful functor $\mmod\T\cp \to \Set$, and we also denote it as $U_\T$. In other words, (\AxiomSQCC) holds in the larger topos $\Set[\T]_\omega$. And in this case, we can also allow $\T$ to contain algebraic operations of countable arity. 

These two topoi $\Set[\T]$ and $\Set[\T]_\omega$ are intimately related. There is a fully faithful and left exact inclusion of sites
\[ \mmod\T\fp\op \hook \mmod\T\cp\op \]
inducing an adjoint triple (cf.\ \citet[Thm.\ 7.20]{caramello2019denseness}),
\[\begin{tikzcd}
  {\Set[\T]_\omega} & {\Set[\T]}
  \arrow[""{name=0, anchor=center, inner sep=0}, "\Gamma"{description}, from=1-1, to=1-2]
  \arrow[""{name=1, anchor=center, inner sep=0}, "\Delta"', curve={height=18pt}, from=1-2, to=1-1]
  \arrow[""{name=2, anchor=center, inner sep=0}, "\nabla", curve={height=-18pt}, from=1-2, to=1-1]
  \arrow["\dashv"{anchor=center, rotate=-90}, draw=none, from=0, to=2]
  \arrow["\dashv"{anchor=center, rotate=-90}, draw=none, from=1, to=0]
\end{tikzcd}\]
equivalently, a \emph{local geometric morphism} $\Set[\T]_\omega \surj \Set[\T]$.

More generally, it is already observed by \citet[Thm.\ 4.11]{blechschmidt2020general} that if we have a topology $J$ on $\mmod\T\fp\op$ where $U_\T$ is a $J$-sheaf, then (\AxiomSQCF) again holds in the sheaf subtopos $\sh(\mmod\T\fp\op,J)$. We refer to such a topology $J$ as \emph{$\T$-admissible}, or simply \emph{admissible} when no confusion could arise. 
For instance, since $U_\T$ is representable, any subcanonical topology will in particular be admissible. In the same way we can define admissible topology $J$ for the larger presheaf topos $\mmod\T\cp\op$, which again is a topology making the generic model $U_\T$ a sheaf. In this case, the sheaf subtopos $\sh(\mmod\T\cp\op,J)$ also models (\AxiomSQCC). 
As an example, this is the theoretical basis of the quasi-coherence principle for countably presented Boolean algebras in the topos of light condensed sets introduced by Clausen and Scholze, as shown by \citet{cherubini2024foundation}.

An admissible topology $J$ on $\mmod\T\cp\op$ restricts to one on $\mmod\T\fp\op$. In this case, the adjoint triple mentioned above between the two presheaf topoi will restrict to the sheaf subtopoi,
\[\begin{tikzcd}
  {\sh(\mmod\T\cp\op,J)} & {\sh(\mmod\T\fp\op,J)}
  \arrow[""{name=0, anchor=center, inner sep=0}, "\Gamma"{description}, from=1-1, to=1-2]
  \arrow[""{name=1, anchor=center, inner sep=0}, "\Delta"', curve={height=18pt}, from=1-2, to=1-1]
  \arrow[""{name=2, anchor=center, inner sep=0}, "\nabla", curve={height=-18pt}, from=1-2, to=1-1]
  \arrow["\dashv"{anchor=center, rotate=-90}, draw=none, from=0, to=2]
  \arrow["\dashv"{anchor=center, rotate=-90}, draw=none, from=1, to=0]
\end{tikzcd}\]
which again identifies $\sh(\mmod\T\cp\op,J)$ as a local topos over $\sh(\mmod\T\fp\op,J)$. These admissible topologies on $\mmod\T\fp$ or $\mmod\T\cp$ are exactly the required data to validate various local properties of the generic model $U_\T$. 

More specifically for us, let $\sFrm$ be the category of $\sigma$-frames, i.e.\ whose objects are posets with finite meets and countable joins, where binary meets distribute over countable joins. We will also use $\sFrm\cp$ to denote the full subcategory spanned by countably presented $\sigma$-frames, and $\sFrm\fp$ to denote the finitely presented ones. Since finitely presented $\sigma$-frames are simply finitely presented bounded distributive lattices, we have an isomorphism $\sFrm\fp \cong \DL\fp$.

It is well-known the dual category of $\DL\fp \cong \sFrm\fp$ is the category $\Pos\fp$ of finite posets. In this case, we can also have a fairly explicit description of the larger dual category of $\sFrm\cp$. The first observation is that any countably presented $\sigma$-frame $A$ is indeed a \emph{frame}, i.e.\ it has \emph{all} joins and finite meets, which distributes with each other. This is clear for all finitely presented $\sigma$-frames, since they are simply finite bounded distributive lattices. To see this for the countable case, consider the countably generated free $\sigma$-frame $\Sigma[\N]$:

\begin{lemma}\label{lem:cgfreesframe}
  We have an isomorphism of $\sigma$-frames
  \[ \Sigma[\N] \cong \Pos(P_f(\N),\Sigma)\text{,} \]
  where $P_f(\N)$ is the poset of finite subsets of $\N$.
\end{lemma}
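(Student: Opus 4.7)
The plan is to verify the universal property of the free $\sigma$-frame on $\N$ generators directly for $\Pos(P_f(\N),\Sigma)$. First, I would identify $\Pos(P_f(\N),\Sigma)$ with the collection $\mathrm{Up}(P_f(\N))$ of upper sets of $P_f(\N)$ via $\chi \mapsto \chi^{-1}(1)$. Under this identification, binary meets become intersections and arbitrary joins become unions, which makes $\Pos(P_f(\N),\Sigma)$ a frame, hence in particular a $\sigma$-frame. Since $P_f(\N)$ is countable, every upper set is at most countable, so arbitrary joins here are literally countable.

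Next, I would define the candidate unit $\iota\colon \N \to \Pos(P_f(\N),\Sigma)$ sending $n$ to the characteristic function of the principal upper set $\uparrow\!\{n\} = \scomp{F:P_f(\N)}{n \in F}$. A key observation is that every upper set $U$ equals the countable union $\bigcup_{F \in U}\uparrow\! F$, and every finitely-generated principal upper set $\uparrow\!\{n_1,\ldots,n_k\}$ equals the finite intersection $\uparrow\!\{n_1\}\cap \cdots \cap \uparrow\!\{n_k\}$. Consequently, every element of $\Pos(P_f(\N),\Sigma)$ is expressible as a countable join of finite meets of the $\iota(n)$, which will give uniqueness for free once we have a morphism.

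To establish the universal property, given a $\sigma$-frame $A$ and a function $f\colon \N \to A$, I would define
\[ \phi_f \colon \Pos(P_f(\N),\Sigma) \to A, \qquad \phi_f(U) \coloneq \bigvee_{F \in U}\bigwedge_{n \in F} f(n)\text{,} \]
which is well-defined since $U$ is countable. Preservation of $\top$ follows from $\emptyset \in P_f(\N)$ contributing the empty meet $1$; preservation of countable joins is a routine rearrangement. Finally, $\phi_f \circ \iota = f$ because $\{n\}$ is the minimum of $\uparrow\!\{n\}$, causing the join to collapse to $f(n)$.

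The main obstacle will be the verification that $\phi_f$ preserves binary meets. This reduces to the combinatorial identity $U \cap V = \{F \cup G : F \in U,\, G \in V\}$ for upper sets (the non-trivial direction using $H = H \cup H$ for $H \in U \cap V$), combined with the distributivity of $\wedge$ over countable $\bigvee$ in $A$, which together yield $\phi_f(U) \wedge \phi_f(V) = \phi_f(U \cap V)$. Everything else is routine bookkeeping with countable joins.
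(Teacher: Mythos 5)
Your proof is correct, but it takes a genuinely different route from the paper. You verify the universal property of the free $\sigma$-frame on $\N$ directly: identify $\Pos(P_f(\N),\Sigma)$ with upper sets, take $\iota(n)=\uparrow\!\{n\}$ as the unit, and exhibit the mediating map $\phi_f(U)=\bigvee_{F\in U}\bigwedge_{n\in F}f(n)$ by hand, with the combinatorial identity $U\cap V=\scomp{F\cup G}{F\in U,\ G\in V}$ carrying the meet-preservation step. The paper instead factors the free construction in two stages: the free bounded meet-semilattice on the discrete set $\N$ is $P_f(\N)\op$ (meet given by union), and then freely adjoining countable joins to a \emph{countable} poset coincides with freely adjoining all joins, i.e.\ with the downset (presheaf) completion, yielding $\Pos((P_f(\N)\op)\op,\Sigma)$. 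The two-stage argument is shorter and more conceptual, and it makes transparent the fact—used immediately afterwards in \Cref{cor:dualsframe}—that this $\sigma$-frame is actually a frame because the countable-join completion agrees with the full join completion; your approach obtains frame structure as an observation about upper sets instead. Your version is more elementary and self-contained, at the cost of the routine but nontrivial verifications of the morphism axioms (the meet case being the only genuinely combinatorial one, which you correctly isolate). Two minor remarks: your aside that ``arbitrary joins here are literally countable'' is not needed and is slightly imprecise as stated (what matters is only that upper sets are closed under countable unions and that each upper set is a countable union of principal ones, which you do use for uniqueness); and the rearrangement for preservation of countable joins implicitly invokes that a countable union of countable sets is countable, which is unproblematic here since \Cref{sec:model} works externally in $\Set$ with choice.
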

\begin{proof}
  The free $\sigma$-frame can be generated by first freely adding finite meets to the discrete poset $\N$, and then freely adding all countable joins. The first step results in the poset $P_f(\N)\op$. Now since $P_f(\N)\op$ is countable, freely adding all countable joins is equivalently freely adding \emph{all} joins, which is achieved by the presheaf construction. This way,
  \[ \Sigma[\N] \cong \Pos((P_f(\N)\op)\op,\Sigma) \cong \Pos(P_f(\N),\Sigma). \qedhere \]
\end{proof}

\begin{corollary}\label{cor:dualsframe}
  There is a fully faithful embedding
  \[ \sFrm\cp \hook \Frm\text{,} \]
  preserving all countable colimits. This is again fully faithful when composed with $\ms{pt} \colon \Frm\op \to \Topp$, where $\Topp$ is the category of topological spaces.
\end{corollary}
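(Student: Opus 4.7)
The plan is to establish three claims in order: first, that every countably presented $\sigma$-frame is actually a frame; second, that $\sigma$-frame morphisms between countably presented $\sigma$-frames automatically preserve arbitrary joins; and third, that each such frame is spatial. The first two together give the fully faithful embedding $\sFrm\cp \hookrightarrow \Frm$ together with preservation of countable colimits, while the third yields the fully faithfulness of the composite with $\ms{pt}$.

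For the first two claims, I would build directly on \Cref{lem:cgfreesframe}. The key observation is that because $P_f(\N)\op$ is countable, freely adjoining countable joins to it coincides with freely adjoining arbitrary joins, so $\Sigma[\N]$ is simultaneously the free $\sigma$-frame and the free frame on $\N$. Writing a countably presented $\sigma$-frame as $A = \Sigma[\N]/R$ with $R$ countable, the $\sigma$-frame quotient should agree with the frame quotient: every element of the latter can be written as a join over a subset of the countable set of finite meets of generators, and such a join is tautologically a countable join, hence already realised in the $\sigma$-frame quotient. The same rewriting shows that any $\sigma$-frame morphism $f\colon A \to B$ between countably presented $\sigma$-frames preserves all joins, since every join in $A$ is expressible as a countable join of images of generators which $f$ preserves.

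Preservation of countable colimits then follows formally: the frame-reflection functor $F\colon \sFrm \to \Frm$ is a left adjoint, hence preserves all colimits; and the above identifies $F$ with the identity on $\sFrm\cp$, so countable colimits computed in $\sFrm\cp$ agree with those computed in $\Frm$.

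The hard part is the third claim: I need each countably presented $\sigma$-frame $A$ to be spatial as a frame, that is, $A \cong \mc{O}(\ms{pt}(A))$. For this I would appeal to the classical theorem that every countably presented frame is spatial in any Grothendieck topos satisfying dependent choice, which is available here since we are working externally in $\Set$ with the axiom of choice; the argument separates distinct elements of $A$ by constructing a point $A \to \Sigma$ via a Zorn- or DC-style argument over the countable presentation. Once spatiality is established, the standard fact that the adjunction $\mc{O} \dashv \ms{pt}$ between $\Topp$ and $\Frm\op$ restricts to an equivalence on the subcategory of spatial locales immediately gives the fully faithfulness of the composite $\sFrm\cp\op \hookrightarrow \Frm\op \xrightarrow{\ms{pt}} \Topp$.
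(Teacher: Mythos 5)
Your proof is correct and follows essentially the same route as the paper's: both rest on \Cref{lem:cgfreesframe} to see that $\Sigma[\N]$ is simultaneously the free $\sigma$-frame and the free frame on $\N$, deduce that its countably presented quotients are frames, and then invoke spatiality of countably presented frames (the paper cites Makkai--Reyes, Thm.~6.2.4, where you appeal to the classical DC-style argument) to obtain full faithfulness of $\ms{pt}$. Your write-up is in fact more careful than the paper's, which silently omits both the verification that $\sigma$-frame morphisms out of a countably presented $\sigma$-frame preserve arbitrary joins (needed for the embedding into $\Frm$ to be well defined and fully faithful) and the preservation of countable colimits; your observation that every join in $\Sigma[\N]/R$ reduces to a countable join of finite meets of the countably many generators handles both points at once.
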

\begin{proof}
  Any countably presented $\sigma$-frame will be isomorphic to one of the form $\Sigma[\N]/R$ for some countably generated congruence $R$. By \zcref{lem:cgfreesframe}, $\Sigma[\N]$ is a frame, so is any of its quotient. By Theorem 6.2.4 of \citet{makkai2006first}, such frames are indeed \emph{spatial}, hence they fully faithfully embed into the category of topological spaces via the functor $\pt$.
\end{proof}

This way, we can view $\sFrm\cp\op$ as a certain class of topological spaces. Notice that since all countably presented $\sigma$-frames will be quotients of $\Sigma[\N]$, their dual spaces will be a subspace of $\pt(\Sigma[\N])$, which we can compute quite easily:

\begin{lemma}
  The space of points of $\Sigma[\N]$ is Scott's graph model $G$, which is the countable product of the Sierpi\'nski space $G \cong \Sigma^\N$.
\end{lemma}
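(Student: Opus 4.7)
The plan is to identify the space $\pt(\Sigma[\N])$ as a set first, and then match up the topologies. For the set-level identification, I would use the following chain of natural bijections. A point of $\Sigma[\N]$ in the frame-theoretic sense is a frame homomorphism $\Sigma[\N]\to\Sigma$; because $\Sigma$ has only finitely many elements (and in particular only countably many joins are ``available''), any $\sigma$-frame map into $\Sigma$ automatically preserves arbitrary joins, so $\Frm(\Sigma[\N],\Sigma) = \sFrm(\Sigma[\N],\Sigma)$. By the universal property of the free $\sigma$-frame, this last hom-set is naturally in bijection with $\Set(\N,\Sigma)$, i.e.\ with the underlying set $\Sigma^\N$.

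The second step is to check that the topology induced by $\pt$ matches the product topology on $\Sigma^\N$. Recall that $\pt$ equips the set of points of a frame $F$ with the topology whose opens are the sets $U_a \coloneq \{p \mid p(a)=1\}$ indexed by $a\in F$. First I would observe that for each generator $n:\N$ regarded as an element of $\Sigma[\N]$, the corresponding open $U_n\subseteq \Sigma^\N$ is exactly the subbasic product open $\pi_n^{-1}(\{1\})$ under the bijection of the previous paragraph. Hence the subbasis of the product topology is contained in the topology coming from $\pt$.

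For the reverse containment I would argue that finite meets in $\Sigma[\N]$ give rise to finite intersections, and countable joins give rise to countable unions (hence in particular arbitrary unions once we close under this); since by \Cref{lem:cgfreesframe} every element of $\Sigma[\N]$ is expressible as a join of finite meets of generators, every $U_a$ is a union of finite intersections of the subbasic opens $U_n$. Thus the topology induced by $\pt$ coincides with the product topology on $\Sigma^\N$, which is exactly Scott's graph model presented as the countable power of the Sierpiński space.

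The main (modest) obstacle is the bookkeeping for the second step: one needs the normal-form description of $\Sigma[\N]$ from \Cref{lem:cgfreesframe} to see that every frame element is generated from the $\chi_n$ under meets and countable joins, and then to verify that $U_{(-)}$ sends meets to intersections and joins to unions, which is automatic once we remember that $p\mapsto p(a)$ is itself a $\sigma$-frame map into $\Sigma$. I do not expect any real difficulty beyond this, since spatiality of $\Sigma[\N]$ is already guaranteed by \Cref{cor:dualsframe}, so no points are ``missing'' from the above enumeration.
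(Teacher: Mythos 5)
Your overall strategy (compute the points by hand, then match the topologies) is legitimate and genuinely different from the paper's proof, which simply observes that $\pt\colon\Frm\op\to\Topp$ is a right adjoint and that $\Sigma[\N]$ is the countable coproduct of copies of the free frame on one generator, so that $\pt(\Sigma[\N])\cong\pt(\Sigma[\ms{x}])^\N\cong\Sigma^\N$ with the product topology coming for free. However, your first step contains a genuine error: the claim that \emph{any} $\sigma$-frame map into $\Sigma$ automatically preserves arbitrary joins because $\Sigma$ is finite is false. Preservation of arbitrary joins is a condition on joins in the \emph{domain}, not the codomain. For a concrete counterexample, let $F$ be the complete chain $\omega_1+1$ (a frame, being a complete distributive lattice in which binary meets distribute over all joins), and let $p\colon F\to\Sigma$ send only the top element to $1$. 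Then $p$ preserves finite meets and countable joins (a countable join of non-top elements of $\omega_1$ is non-top, since $\omega_1$ has uncountable cofinality), so $p$ is a $\sigma$-frame map; but $p$ does not preserve the $\omega_1$-indexed join of all proper initial segments. So "countably prime" does not imply "completely prime" in general.

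The conclusion you want is nevertheless true for $\Sigma[\N]$, but for a reason specific to that frame: by \Cref{lem:cgfreesframe}, every element of $\Sigma[\N]\cong\Pos(P_f(\N),\Sigma)$ is the join of the (countable, since $P_f(\N)$ is countable) set of finite meets of generators below it, so every arbitrary join in $\Sigma[\N]$ coincides with a countable join. Hence any $\sigma$-frame map \emph{out of $\Sigma[\N]$} preserves all joins, and $\Frm(\Sigma[\N],\Sigma)=\sFrm(\Sigma[\N],\Sigma)\cong\Set(\N,\Sigma)=\Sigma^\N$ as you wanted. (Equivalently: because $P_f(\N)$ is countable, the free $\sigma$-frame on $\N$ is already the free \emph{frame} on $\N$, and you can invoke that universal property directly.) With this repair, your second step --- that the $\pt$-topology is generated by the subbasic opens $U_n$ because $U_{(-)}$ turns the normal form of \Cref{lem:cgfreesframe} into countable unions of finite intersections --- goes through and correctly recovers the product topology, which the paper's right-adjoint argument obtains without any explicit bookkeeping.
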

\begin{proof}
  Note $\pt$ takes colimits in $\Frm$ to limits in $\Topp$, since $\pt \colon \Loc \to \Topp$ is a right adjoint. Since $\Sigma[\N]$ is the countable coproduct of the free frame on one generator, we have
  \[ \pt(\Sigma[\N]) \cong \pt(\Sigma[\ms{x}])^\N \cong \Sigma^\N\text{,} \]
  Here $\pt(\Sigma[\ms{x}]) \cong \Sigma$ follows from a simple computation.
\end{proof}

This way, if we write $\wTop$ as the essential image of the fully faithful functor $\pt \colon \sFrm\cp\op \to \Topp$, its objects will all be subspaces of $G$. We would then have the following diagram,
\[
\begin{tikzcd}
  \sFrm\fp\op \ar[d, hook] \ar[r, "\simeq"] & \Pos\fp \ar[d, hook] \\
  \sFrm\cp\op \ar[r, "\simeq"'] & \wTop
\end{tikzcd}
\]
where here the inclusion $\Pos\fp \hook \wTop$ simply takes each finite poset to its Alexandroff topological space. Hence, at the presheaf level, we have a local geometric morphism
\[ \psh(\wTop) \surj \psh(\Pos\fp)\text{.} \]
In this case, the representable presheaf on the Sierpinski space will again be a $\sigma$-frame $\I$ in the presheaf $\psh(\wTop)$.

Below we discuss the corresponding admissible topologies on $\wTop$, modelling the various locality principles we have considered in \zcref{sec:locality}. We encourage the readers to notice the connection between the topologies we discuss below, and the developments in \zcref{sec:local}.

\begin{example}[\AxiomNT]
  To model (\AxiomNT), we would want the empty sieve on $\pt(\Sigma/0=1) \cong \emp$ to be a covering. Since $\emp$ is a strict initial object in $\wTop$, this is a subcanonical topology. Thus, this gives us a least topology $J_{\ms{NT}}$ that models (\AxiomNT), and we have
  \[ \sh(\wTop,J_{\ms{NT}}) \cong \psh(\wTop_+)\text{,} \]
  where $\wTop_+$ is the full subcategory of $\wTop$ excluding $\emp$. The induced local geometric morphism now is given by 
  \[ \Gamma \colon \psh(\wTop_+) \surj \psh(\Pos_{\mr{f.p.,+}})\text{,} \]
  where, again $\Pos_{\mr{f.p.,+}}$ is the full subcategory of $\Pos_{\mr{f.p.}}$ consisting of non-empty posets, and $\psh(\Pos_{\mr{f.,+}})$ is the classifying topos for bounded distributive lattices that are non-trivial.
\end{example}

\begin{example}[\AxiomL]
  The additional axiom for (\AxiomL) besides (\AxiomNT) is that 
  \[ x \vee y = 1 \vdash x = 1 \vee y = 1\text{,} \] 
  This means the dual embeddings of the following quotients need to form a covering family,
  \[ \Sigma[\ms{y}] \cong \Sigma[\ms{x},\ms{y}]/\ms{x} \twoheadleftarrow \Sigma[\ms{x},\ms{y}]/\ms{x}\vee \ms{y} \surj \Sigma[\ms{x},\ms{y}]/\ms{y} \cong \Sigma[\ms{x}]\text{.} \]
  It is easy to see that $\pt(\Sigma[\ms{x},\ms{y}]/\ms{x} \vee \ms{y})$ is the space $\Lambda^2_1$ obtained by glueing the open points of two copies of Sierpinski spaces together, i.e.\ it is the following pushout:
  \[
  \begin{tikzcd}
    1 \ar[d, "1"'] \ar[r, "1"] & \Sigma \ar[d, "l"] \\ 
    \Sigma \ar[r, "r"'] & \Lambda^2_1
    \arrow["\lrcorner"{anchor=center, pos=0.125, rotate=180}, draw=none, from=2-2, to=1-1]    
  \end{tikzcd}
  \]
  The least topology $J_{\ms L}$ is thus generated by the empty covering on $\emp$, and the covering $\set{l,r : \Sigma \to \Lambda^2_1}$. Since we have the pushout above, this topology is again subcanonical. 
  
  In this case, it is not hard to give an explicit description of a covering family in $\Pos\fp$: a family of maps is a $\ms J_{\ms L}$-covering on $P \in \Pos\fp$ iff it contains a (finite) subfamily of \emph{upward closed subsets} $\set{P_i\subseteq P}_{i\in I}$, where $P \cong \bigcup_{i\in I}P_i$; see a similar calculation for commutative rings in \citet[VIII. 6]{maclane1992sheaves}. Following the terminology for algebraic geometry, this can be denoted as the \emph{Zariski topology}, and the local geometric morphism
  \[ \sh(\wTop,J_{\ms L}) \surj \sh(\Pos\fp,J_{\ms L}) \simeq \mb{Zar}(\mbb D)\text{,} \]
  is over the \emph{Zariski topos} $\mb{Zar}(\mbb D)$ for bounded distributive lattices.
\end{example}

\begin{example}[\AxiomSL]
  An even stronger axiom is the linearity axiom, 
  \[ \top \vdash x \le y \vee y \le x\text{,} \]
  which requires the dual embeddings of the following quotients to form a covering family,
  \[ \Sigma[\ms{x},\ms{y}]/\ms{x} \le \ms{y} \twoheadleftarrow \Sigma[\ms{x},\ms{y}] \surj \Sigma[\ms{x},\ms{y}]/\ms{y} \le \ms{x}\text{.} \]
  We can compute that $\spec \Sigma[\ms{x},\ms{y}] \cong \Sigma^2$, and 
  \[ \pt(\Sigma[\ms{x},\ms{y}]/\ms{x} \le \ms{y}) \cong \Sigma_2 \cong \set{0 < 1 < 2}\text{.} \]
  Again we have a pushout diagram,
  \[
  \begin{tikzcd}
    \Sigma \ar[r, "\partial"] \ar[d, "\partial"'] & \Sigma_2 \ar[d, "l"] \\ 
    \Sigma_2 \ar[r, "r"'] & \Sigma^2
    \arrow["\lrcorner"{anchor=center, pos=0.125, rotate=180}, draw=none, from=2-2, to=1-1]
  \end{tikzcd}
  \]
  where $\partial : \Sigma \to \Sigma_2$ maps $\Sigma$ to the end points of $\Sigma_2$, and $l,r : \Sigma_2 \hook \Sigma^2$ is the two $\Sigma_2$-chains in $\Sigma^2$. The least topology $\ms J_{\ms{SL}}$ for the linearity axiom is thus generated by the empty covering on $\emp$, and $\set{l,r : \Sigma_2 \hook \Sigma^2}$. Similarly, given this pushout, $J_{\ms{SL}}$ will be subcanonical.

  The topos $\sh(\wTop,J_{\ms{SL}})$ is closely related to simplicial sets. This is due to Joyal's result that the category of simplicial sets is the classifying topos of strict bounded linear orders, and thus we have an equivalence 
  \[ \sh(\Pos\fp,J_{\ms{SL}}) \cong \psh(\Delta)\text{.} \]
  This makes $\sh(\wTop,J_{\ms{SL}}) \surj \psh(\Delta)$ a local topos over simplicial sets.
\end{example}

\begin{example}[\AxiomOneCS]\label{exm:model1T}
  Similarly to the case of (\AxiomSL), the 1-truncation axiom requires dual embeddings of the following quotient maps
  \[\begin{tikzcd}
    & {\Sigma[\ms{x},\ms{y}]/\ms{x}\le\ms{y}} \\
    {\Sigma[\ms{x},\ms{y}]/\ms{x}=0} & {\Sigma[\ms{x},\ms{y}]/\ms{x}=\ms{y}} & {\Sigma[\ms{x},\ms{y}]/\ms{y}=1}
    \arrow[two heads, from=1-2, to=2-1]
    \arrow[two heads, from=1-2, to=2-2]
    \arrow[two heads, from=1-2, to=2-3]
  \end{tikzcd}\]
  to be a covering, which translates to the fact that the three inclusions
  \[ d_0,d_1,d_2 : \Sigma \hook \Sigma_2 \]
  covers $\Sigma_2$. Again we have a colimit diagram as follows, 
  \[\begin{tikzcd}
    1 & 1 & 1 \\
    \Sigma & \Sigma & \Sigma \\
    & {\Sigma_2}
    \arrow["0"', from=1-1, to=2-1]
    \arrow["1"{description, pos=0.25}, from=1-1, to=2-2]
    \arrow["0"{description, pos=0.25}, from=1-2, to=2-1]
    \arrow["1"{description, pos=0.25}, from=1-2, to=2-3]
    \arrow["0"{description, pos=0.25}, from=1-3, to=2-2]
    \arrow["0", from=1-3, to=2-3]
    \arrow["{d_0}"', from=2-1, to=3-2]
    \arrow["{d_1}"{description}, from=2-2, to=3-2]
    \arrow["{d_2}", from=2-3, to=3-2]
  \end{tikzcd}\]
  which implies $J_{\ms{1cS}}$ is subcanonical.
  
  Similar to the case of (\AxiomSL), $\sh(\wTop,J_{\ms{1cS}})$ is closely related to the topos of \emph{truncated simplicial sets} $\psh(\Delta_{\le 1})$. This is again induced by the fact that the classifying topos for 1-coskeletal bounded distributive lattices is given by 
  \[ \sh(\Pos\fp,J_{\ms{1cS}}) \cong \psh(\Delta_{\le 1})\text{.} \]
  Hence, this again gives us a local geometric morphism
  \[ \sh(\wTop,J_{\ms{1cS}}) \surj \psh(\Delta_{\le 1})\text{.} \]
\end{example}

\section{Future directions}

\subsection{Domain theory with quasi-coherence}

In this paper we have explained the axioms of synthetic domain theory using the quasi-coherence principle for bounded distributive lattices and $\sigma$-frames. As we have seen from various examples (\zcref{prop:liftingofalgebra,prop:liftofsober}, \zcref{exm:2issober} etc.) the quasi-coherence principle moreover helps with the actual \emph{computation} for operations on domains. Thus, the natural step next is to further develop domain theory within this framework.

% In particular, it will be extremely useful to investigate the orthogonality conditions satisfied by all spectra (sober spaces), some of which have been observed in \zcref{sec:infdomain,sec:local}. In particular, in this paper we haven't found any orthogonality condition satisfied by $\I$ but not by general sober spaces. This allows us to conjecture that all (or at least some classes of) sober spaces will be \emph{replete}. This will be a non-trivial way of constructing new replete objects other than retracts of $\I^X$ for some type $X$.

\subsection{Relationship with Taylor's \emph{Abstract Stone Duality}}

We would be remiss in failing to mention Taylor's framework of \emph{Abstract Stone Duality}~\citep{Taylor2011,TaylorP:insema,TaylorP:sobsc}, which is based on a similar duality between ``algebras'' and ``spaces''. One apparent difference in methodology is that in Taylor's case, the duality is of the form $\Sigma^{(-)} \dashv \Sigma^{(-)}$ and is moreover required to be monadic. In our setting, the spectrum of an algebra consists not in \emph{all} functions from the algebra into the dualising object but rather only the ones that preserve the structure of observational algebras—the \emph{primes} in Taylor's terminology.

Our own adjunction $\opens\dashv\spec$ shall not be monadic, but a natural step toward understanding the relationship between abstract Stone duality and synthetic domain theory in the language of quasi-coherence might be to restrict $\opens\dashv\spec$ to an adjunction involving certain more well-behaved subuniverses of $\Set$. We leave this investigation for future work.

\subsection{Quasi-coherence for related synthetic mathematics}

Both synthetic topology~\citep{bauer2009dedekind} and synthetic computability theory~\citep{RN552} involve similar structures as synthetic domain theory. In particular, an interval object $\I$ consisting of a dominance seems to be crucial in all three cases. We have showcased in \zcref{sec:dominance} that quasi-coherence for a wide variety of theories will produce such a structure, and it will be interesting to see connections with quasi-coherence to these two types of synthetic mathematics as well.

\subsection{Connection with existing models}\label{subsec:compare}

Finally, it will be instructive to compare the sheaf models for synthetic domain theory constructed by \citet{FIORE1997151} with the models discussed in \zcref{sec:model}. Recall the topos $\mc H$ constructed in \emph{loc.\ cit.}\ is the following sheaf topos,
\[ \mc H \coloneq \sh(\mb P,J_{\ms{can}})\text{,} \]
where $\mb P$ be the full subcategory of the category $\wCPO$ of $\omega$-cpos consisting of retracts of the Scott's graph model $G$, and $J_{\ms{can}}$ is the canonical topology on $\mb P$. An $\omega$-cpo is simply a poset having joins of countable chains, and morphisms between them are maps preserving joins of countable chains.

Here we observe that the inclusion $\mb P \hook \wTop$ is \emph{fully faithful}, because
\begin{align*}
  \wCPO(G,G)
  &\cong \Pos(P_f(\N),P(\N)) \\
  &\cong \Pos(\N,\Pos(P_f(\N),\Sigma)) \\
  &\cong \sFrm\cp(\Pos(P_f(\N),\Sigma),\Pos(P_f(\N),\Sigma)) \\ 
  &\cong \wTop(G,G)
\end{align*}
The first isomorphism holds because $G$ is the \emph{free} $\omega$-cpo on $P_f(\N)$; the second holds by formal manipulation; the third is due to the fact that $\Pos(P_f(\N),\Sigma)$ is the free $\sigma$-frame over the discrete poset $\N$; the final one is given by the duality between countably presented $\sigma$-frames and $\wTop$. 

This leads us to the question of the exact relationship between $\mc H$ and the sheaf model $\sh(\wTop,J_{\mr{can}})$ we have constructed. A detailed investigation requires us to better understand the categorical properties of $\wTop$, and we leave this for future work.

\subsection*{Acknowledgements}

We are grateful to Marcelo Fiore for alerting us to the connection between synthetic quasi-coherence and Blass's observations concerning functions on universal algebras~\citep{RN879}. We are also thankful to Jem Lord for suggestion some improvements on a lemma and a previously made conjecture.
 This work was funded by the United States Air Force Office of Scientific Research under grant FA9550-23-1-0728 (\emph{New Spaces for Denotational Semantics}; Dr.\ Tristan Nguyen, Program Manager). Views and opinions expressed are however those of the author only and do not necessarily reflect those of AFOSR.

\bibliographystyle{apalike} 
\bibliography{mybib}

@book{BarrMichael1985Ttat,
series = {Grundlehren der mathematischen Wissenschaften; 278},
publisher = {Springer},
booktitle = {Toposes, triples, and theories},
isbn = {0387961151},
year = {1985},
title = {Toposes, triples, and theories},
language = {eng},
address = {New York},
author = {Michael Barr and Charles Wells},
}

@article{kelly1993adjunctions,
  title={Adjunctions whose counits are coequalizers, and presentations of finitary enriched monads},
  author={Kelly, G Maxwell and Power, A John},
  journal={Journal of pure and applied algebra},
  volume={89},
  number={1-2},
  pages={163--179},
  year={1993},
  publisher={Elsevier}
}

@article{fiore2001domains,
  title={{Domains in H}},
  author={Fiore, Marcelo P. and Rosolini, Giuseppe},
  journal={Theoretical Computer Science},
  volume={264},
  number={2},
  pages={171--193},
  year={2001},
  publisher={Elsevier}
}

@article{RN552,
   author = {Bauer, Andrej},
   title = {First steps in synthetic computability theory},
   journal = {Electronic Notes in Theoretical Computer Science},
   volume = {155},
   pages = {5-31},
   ISSN = {1571-0661},
   year = {2006},
   type = {Journal Article}
}

@article{bauer2009dedekind,
  title={{The Dedekind reals in abstract Stone duality}},
  author={Bauer, Andrej and Taylor, Paul},
  journal={Mathematical Structures in Computer Science},
  volume={19},
  number={4},
  pages={757--838},
  year={2009},
  publisher={Cambridge University Press}
}

@book{makkai2006first,
  title={First order categorical logic: model-theoretical methods in the theory of topoi and related categories},
  author={Makkai, Michael and Reyes, Gonzalo E},
  volume={611},
  year={2006},
  publisher={Springer}
}

@book{PhoaWesleyKym-Son1991DtiR,
publisher = {Ph.D. Dissertation},
year = {1991},
title = {Domain theory in Realizability Toposes.},
language = {eng},
address = {University of Cambridge},
author = {Phoa, Wesley Kym-Son},
}

@article{RN879,
   author = {Blass, Andreas},
   title = {Functions on universal algebras},
   journal = {Journal of Pure and Applied Algebra},
   volume = {42},
   number = {1},
   pages = {25-28},
   ISSN = {0022-4049},
   year = {1986},
   type = {Journal Article}
}

@article{buchholtz2021synthetic,
  title={Synthetic fibered $(\infty, 1)$-category theory},
  author={Buchholtz, Ulrik and Weinberger, Jonathan},
  year={2023},
  journal={Higher Structures},
  pages={74-165},
  volume={7}
}

@article{riehl2017type,
  title={A type theory for synthetic $\infty$-categories},
  author={Riehl, Emily and Shulman, Michael},
  journal={Higher Structures},
  year={2017},
  pages={147-224},
  volume={1}
}

@article{JIBLADZE1997185,
title = {A presentation of the initial lift-algebra},
journal = {Journal of Pure and Applied Algebra},
volume = {116},
number = {1},
pages = {185-198},
year = {1997},
issn = {0022-4049},
doi = {https://doi.org/10.1016/S0022-4049(96)00108-9},
url = {https://www.sciencedirect.com/science/article/pii/S0022404996001089},
author = {Mamuka Jibladze},
}

@article{cherubini2024foundation,
  title={A Foundation for Synthetic {Stone} Duality},
  author={Cherubini, Felix and Coquand, Thierry and Geerligs, Freek and Moeneclaey, Hugo},
  journal={arXiv preprint arXiv:2412.03203},
  year={2024}
}

@article{gratzer2024directed,
  title={Directed univalence in simplicial homotopy type theory},
  author={Gratzer, Daniel and Weinberger, Jonathan and Buchholtz, Ulrik},
  journal={arXiv preprint arXiv:2407.09146},
  year={2024}
}

@article{Cherubini_Coquand_Hutzler_2024, title={A foundation for synthetic algebraic geometry}, volume={34}, DOI={10.1017/S0960129524000239}, number={9}, journal={Mathematical Structures in Computer Science}, author={Cherubini, Felix and Coquand, Thierry and Hutzler, Matthias}, year={2024}, pages={1008-1053}}

@misc{blechschmidt2020general,
  title={A general nullstellensatz for generalized spaces},
  author={Blechschmidt, Ingo},
  year={2020},
  note={Unpublished note.}
}

@article{VANOOSTEN2000233,
title = {Axioms and (counter) examples in synthetic domain theory},
journal = {Annals of Pure and Applied Logic},
volume = {104},
number = {1},
pages = {233-278},
year = {2000},
issn = {0168-0072},
doi = {https://doi.org/10.1016/S0168-0072(00)00014-2},
url = {https://www.sciencedirect.com/science/article/pii/S0168007200000142},
author = {Jaap {van Oosten} and Alex K. Simpson},
keywords = {Synthetic domain theory, Topos theory, Realizability},
}

@book{lausch2000algebra,
  title={Algebra of polynomials},
  author={Lausch, Hans and Nobauer, Wilfred},
  year={2000},
  publisher={Elsevier}
}

@inproceedings{hyland1990first,
  title={First steps in synthetic domain theory},
  author={Hyland, J Martin E},
  booktitle={Category Theory: Proceedings of the International Conference held in Como, Italy, July 22--28, 1990},
  pages={131--156},
  year={1990},
  organization={Springer}
}

@article{FIORE1997151,
  title = {Two models of synthetic domain theory},
  journal = {Journal of Pure and Applied Algebra},
  volume = {116},
  number = {1},
  pages = {151-162},
  year = {1997},
  issn = {0022-4049},
  doi = {https://doi.org/10.1016/S0022-4049(96)00164-8},
  url = {https://www.sciencedirect.com/science/article/pii/S0022404996001648},
  author = {Marcelo P. Fiore and Giuseppe Rosolini},
  keywords = {18B25, 03G30, 68Q55},
  abstract = {Two models of synthetic domain theory encompassing traditional categories of domains are introduced. First, we present a Grothendieck topos embedding the category ω-Cpo of ω-complete posets and ω-continuous functions as a reflective exponential ideal. Second, we obtain analogous results with respect to a category of domains and stable functions.}
}

@article{blechschmidt2021using,
  title={Using the internal language of toposes in algebraic geometry},
  author={Blechschmidt, Ingo},
  journal={arXiv preprint arXiv:2111.03685},
  year={2021}
}

@phdthesis{rosolini1986continuity,
  title={Continuity and Effectiveness in Topoi},
  author={Rosolini, Giuseppe},
  year={1986},
  school={University of Oxford}
}

@Book{hottbook,
  author =    {{The Univalent Foundations Program}},
  title =     {Homotopy Type Theory: Univalent Foundations of Mathematics},
  publisher = {\textsf{https://homotopytypetheory.org}},
  address =   {Institute for Advanced Study},
  year =      {2013}
}

@book{maclane1992sheaves,
  title={Sheaves in geometry and logic: A first introduction to topos theory},
  author={MacLane, Saunders and Moerdijk, Ieke},
  year={1992},
  publisher={Springer Science \& Business Media}
}

@article{caramello2019denseness,
  title={Denseness conditions, morphisms and equivalences of toposes},
  author={Caramello, Olivia},
  journal={arXiv preprint arXiv:1906.08737},
  year={2019}
}

@book{adamek1994locally,
  title={Locally presentable and accessible categories},
  author={Ad{\'a}mek, Ji{\v{r}}{\'\i} and Rosicky, Ji{\v{r}}{\'\i}},
  volume={189},
  year={1994},
  publisher={Cambridge University Press}
}

@book{johnstone2002sketches,
  title={Sketches of an Elephant: A Topos Theory Compendium},
  author={Johnstone, Peter T},
  volume={1, 2},
  year={2002},
  publisher={Oxford University Press}
}

@inproceedings{fiore-plotkin:1996,
  author = {Fiore, Marcelo P. and Plotkin, Gordon D.},
  editor = {van Dalen, Dirk and Bezem, Marc},
  publisher = {Springer},
  booktitle = {Computer Science Logic, 10th International Workshop, {CSL} '96, Annual Conference of the EACSL, Utrecht, The Netherlands, September 21-27, 1996, Selected Papers},
  year = {1996},
  doi = {10.1007/3-540-63172-0\_36},
  pages = {129--149},
  series = {Lecture Notes in Computer Science},
  title = {An Extension of Models of Axiomatic Domain Theory to Models of Synthetic Domain Theory},
  volume = {1258},
}

@incollection{Taylor2011,
author="Taylor, Paul",
editor="Sommaruga, Giovanni",
title="Foundations for Computable Topology",
bookTitle="Foundational Theories of Classical and Constructive Mathematics",
year="2011",
publisher="Springer Netherlands",
address="Dordrecht",
pages="265--310",
abstract="Foundations should be designed for the needs of mathematics and not vice versa. We propose a technique for doing this using the correspondence between category theory and logic and is potentially applicable to several mathematical disciplines.",
isbn="978-94-007-0431-2",
doi="10.1007/978-94-007-0431-2_14",
url="https://doi.org/10.1007/978-94-007-0431-2_14"
}

@article{reus-streicher:1999,
  author = {Reus, Bernhard and Streicher, Thomas},
  publisher = {Cambridge University Press},
  year = {1999},
  doi = {10.1017/S096012959900273X},
  journal = {Mathematical Structures in Computer Science},
  number = {2},
  pages = {177--223},
  title = {General synthetic domain theory --- a logical approach},
  volume = {9},
}

@article{simpson:2004,
  author = {Simpson, Alex},
  year = {2004},
  doi = {10.1016/j.apal.2003.12.005},
  issn = {0168-0072},
  journal = {Annals of Pure and Applied Logic},
  keywords = {Domain theory,Algebraic compactness,FPC},
  note = {Papers presented at the 2002 IEEE Symposium on Logic in Computer Science (LICS)},
  number = {1},
  pages = {207--275},
  title = {Computational adequacy for recursive types in models of intuitionistic set theory},
  volume = {130},
}

@article{TaylorP:insema,
      author    = {Taylor, Paul},
      title     = {Inside every model of {Abstract Stone Duality}
                   lies an {Arithmetic Universe}},
      journal   = {Electronic Notes in Theoretical Computer Science},
      publisher = {Elsevier},
      year      = 2005,
      volume    = 122, pages = {247--296},
      url       = {PaulTaylor.EU/ASD/insema}}

@article{TaylorP:sobsc,
    author    = {Taylor, Paul},
    title     = {Sober Spaces and Continuations},
    journal   = {Theory and Applications of Categories},
    publisher = {Mount Allison University},
    year      = 2002, month = {July},
    volume    = 10, number = 12, pages = {248--299},
    url       = {PaulTaylor.EU/ASD/sobsc},
    amsclass  = {06D22, 06E15, 18B30, 18C50,
                  22A26, 54C35, 54D10, 54D45}}

\end{document}